\documentclass[11pt]{article}
\usepackage[sort&compress, authoryear]{natbib}
\usepackage{subfig}
\usepackage[utf8]{inputenc}
 
\usepackage[english]{babel}%
\usepackage{epsfig}%
\usepackage{url}%
\usepackage{amsmath}%
\usepackage{amsfonts}%
\usepackage{mathrsfs}%
\usepackage{amssymb}%
\usepackage{longtable}%
\usepackage{geometry, calc, color, setspace}%
\usepackage{indentfirst}%
\usepackage{bm}
\usepackage{multirow}
\usepackage{amsthm}
\usepackage{amsmath}
\usepackage{paralist}
\usepackage{placeins}

\theoremstyle{definition}

\newtheorem{theorem}{Theorem}[section]

\newtheorem{lem}{Lemma}[section]

\newtheorem{proposition}{Proposition}[section]

\newcommand{\beqn}{\begin{eqnarray*}}
\newcommand{\eeqn}{\end{eqnarray*}}
\newcommand{\beqnn}{\begin{eqnarray}}
\newcommand{\eeqnn}{\end{eqnarray}}

\begin{document}

\title {{\bf Some homogeneity test statistics for DNA evolutionary models}}
\author{Bordin, Tatiana B.$^{*}$;   
Pinheiro, Hildete P.$^{**}$ and 
Pinheiro, Alu\' {\i}sio$^{**}$\\
$^{*}$Dell Technologies, $^{**}$University of Campinas}

\maketitle

\begin{abstract}  
{

We present a test statistic for the comparison of DNA sequences under some of the most popular evolutionary processes available in the literature. Theoretical properties for the test statistic as well as its empirical performance by stochastic simulations are presented. 

The proposed test statistic is a generalized $U$-statistics built for tests of distributional homogeneity under the null hypothesis. We show that a dicothomous situation exists here. Under the null hypothesis, the $U$-statistics kernel is first-order degenerated, this test statistic falls in the quasi $U$-statistics class and follows an asymptotic normal law, albeit of higher order than the standard case. Under heterogeneity, the asymptotic normality is attained on the more usual first-order asymptotics. 

Asymptotic normality is proven for the cases: high-dimension/large sample size,  high-dimension/small sample size, low-dimension/large sample size. Moreover, the case of local alternatives is discussed, and the contiguity of the test statistic for them is established. Simulation studies are performed to assess some finite-dimensional properties of the test statistic, regarding issues such as balanced/unbalanced samples, dimension and sample size. 
}
\medskip

\noindent{\textit{Keywords}}: {Diversity Measures, Statistical genetics, 
Quasi U-statistics, Mixing condition, Exchangeable variables, substitution models.}
\end{abstract}

\section{Introduction}\label{intro}

One of the goals of genetic studies is to compare groups through genetic variability. In many studies in the literature (\citealp{Tavare:2004}; \citealp{Excoffier:1992}), genetic variability is found by estimating the expected number of mutations per site and per generation, say $\theta$. Some models were proposed to estimate this parameter, such as the Wright-Fisher model (\citealp{Wright:1949, Ewens:1972}), K-alleles model, 
infinite alleles model (\citealp{Kimura:1969}), infinite sites model (\citealp{Kimura:1971}) among others. 
Under the infinite sites model, the parameter $\theta$ can be interpreted as the expected number of mutation per site per generation.

Some problems arise when we are working with DNA sequences. First, the sites can be related by their molecular use and, deterministic and stochastic associations should be expected. Another source of complexity added to the analysis is the fact that one may have a sample of $n$ sequences with lengths $K$, where $K >> n$.

The main contribution of this work is to evaluate the asymptotic distribution of a test statistic based on measures of divergence between DNA sequences. In the context of homogeneity tests among groups, proposed by \citet{Pinheiro:2005, Pinheiro:2009, Pinheiro:2011}, we will incorporate the evolutionary processes of the sequences. 


Particular interest also lies in the properties and distribution of the test statistic proposed in \citet{Pinheiro:2009} considering DNA sequences dependency. Asymptotic theory based on independent and identically distributed (i.i.d.) sequences should not be applied in this context. It is necessary the assumption of interchangeable variables (\citealp{DeFinetti:1937}). 

The manuscript is organized as follows: Section \ref{diversity} presents the biological motivation, 
the decomposition of genetic diversity, the test statistic for the homogeneity test of groups of DNA sequences and their moments under an evolutionary process. 
The asymptotic distribution of the test statistic under the null and alternative hypothesis is proven in Section{teoAss}. Section \ref{Simul} presents the results of a simulation study to verify the behavior of the test statistic under the null and alternative hypothesis for finite sample sizes.  A discussion of the asymptotic normality of the test statistic is also considered in Section \ref{sec5}.

\section{Genetic Diversity and Decompositions} \label{diversity}

\citet{Pinheiro:2005, Pinheiro:2009, Pinheiro:2011} considers the comparison of $G$ ($\geq 2$) groups of DNA sequences. It is there assumed that sequences are independent, although there may be some structure of dependence within each sequence. We further develop this tests' theory by relaxing the independence assumption through the explicit development of some evolutionary DNA models.

Multiple mutations at the same site may be incorporated into genetic distances between sequences by substitution models such as Jukes-Cantor (\citealp{Jukes:1969}), Kimura (\citealp{Kimura:1980}), Felsenstein (\citealp{Felsenstein:1981}) or Hasegawa, Kino and Yano - HKY (\citealp{Hasegawa:1985}). These models try to describe mathematically the evolution process of DNA sequences over time. Two types of substitutions may occur: transition (when substitutions are between
purines or between pyrimidines, i.e., $\{A\leftrightarrow G, T\leftrightarrow C\}$) or transversion (when substitutions are between a
purine and a pyrimidine, i.e., $\{A,G\leftrightarrow T,C\}$). For instance, under the Jukes-Cantor model (JC69), the rates of transition and transversion are assumed to be equal and the  base frequencies are all equal ($\pi_A=\pi_C=\pi_T=\pi_G=1/4$); under the Kimura model (K80), the rates of transition and transversion are different and the base frequencies are equal; under Felsenstein model (F81), the rates of transition and transversion are equal but the base frequencies are arbitrary; and under the HKY model (HKY85) transition and transversion rates are different and the base frequencies are arbitrary. 
  
Suppose the groups are originated as in Figure \ref{figcoalseq2} (with no loss of generality drawn for 
 $G=5$ groups and one sequence for each subject at time $t$). For statistical purposes, let $\mathbf{X}_{gi}(t)=(X_{gi1}(t), \ldots, X_{giK}(t))$ be the random vector representing the $i$-th DNA sequence with $K$ sites for the $g$-th group at time $t$, $i=1, \ldots, n_g$. The $X_{gik}(t)$ is the random variable (r.v.) representing the nucleotide at site $k$, of the $i$-th sequence of group $g$ at time $t$.


\begin{figure}[ht]
\centering
\includegraphics[width=1.0\linewidth]{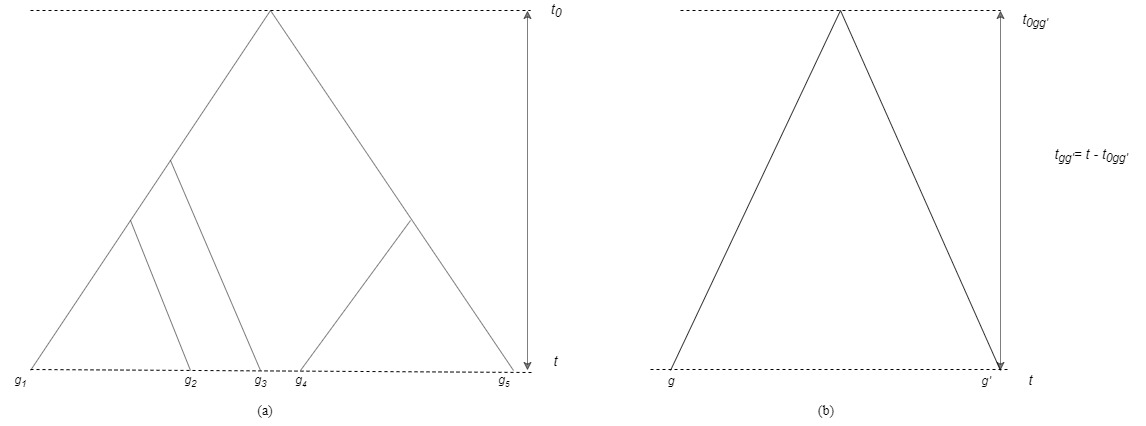}
\caption{(a) An example of evolutionary process for five different groups, (b) Phylogenetic tree of two Groups, with a common ancestor.}
\label{figcoalseq2}
\end{figure}

Consider two sequences observed at time $t$, one from group $g$ and the other from group $g'$, $\mathbf{X}_{gi} (t)$ and $\mathbf{X}_{g'j}(t)$, respectively. Suppose the sequences have an ancestor group at time $t_{ogg'}$, according to Figure \ref{figcoalseq2}(b). So, the empirical distance between $\mathbf{X}_{gi} (t)$ and $\mathbf{X}_{g'j}(t)$ is the Hamming distance given by
\beqn
D^{(gg')}_{ij}(t)=\frac{1}{K}\sum_{l=1}^K \mathbb{I}(X_{gil}(t)\neq X_{g'jl}(t)),
\eeqn
where $X_{gil}(t)$ ($X_ {g'jl} (t)$) represents the response at site $l$ of sequence $i$ ($j$) of group $g$ ($g'$) at time $t$. Denote by ``$o$'' the ancestral group between $g$ and $g'$ at $t_{ogg'}$ and let  $t_{gg'}=t-t_{ogg'}$ the time of MRCA (most recent common ancestor) between groups $g$ and $g'$. Denote by $X_{ok}(t_{ogg'})$, the response at site $l$ in the ancestral sequence ``$o$'' at time $t_{ogg'}$. The first moment of $D_{ij}^{(gg')}(t)$ is given by
\begin{eqnarray}
& &E(D^{(gg')}_{ij}(t)) = \frac{1}{K}\sum_{l=1}^K P(X_{gil}(t)\neq X_{g'jl}(t))=\frac{1}{K}\sum_{l=1}^K \sum_{c=1}^C P(X_{gil}(t)\neq X_{g'jl}(t),X_{ol}(t_{ogg'})=c)\nonumber\\
&=&\frac{1}{K}\sum_{l=1}^K\sum_{c=1}^C \left [ P(X_{ol}(t_{ogg'})=c)P(X_{gil}(t)\neq c \mid X_{ol}(t_{ogg'})=c) P(X_{g'jl}(t)=c\mid X_{ol}(t_{ogg'})=c) \right .\nonumber \\
&&+P(X_{ol}(t_{ogg'})=c) P(X_{gil}(t)= c \mid X_{ol}(t_{ogg'})=c)P(X_{g'jl}(t)\neq c \mid X_{ol}(t_{ogg'})=c)\nonumber \\
&&+P(X_{ol}(t_{ogg'})=c)\sum_{d\neq c}P(X_{gil}(t)= d \mid X_{ok}(t_{ogg'})=c)\sum_{\substack{e\neq d\\ e\neq c}}P(X_{g'jl}(t)= e\mid X_{ol}(t_{ogg'})=c)]. \nonumber 
\end{eqnarray}
Now, let
\begin{eqnarray}
\pi_{olc}&=&P(X_{ol}(t_{ogg'})=c),\label{eqpikc}\\
\pi_{g\mid o;ld\mid c}(t_{gg'})&=&P(X_{gil}(t)=d \mid X_{ol}(t_{ogg'})=c)\;\;\;\mbox{for}\;\;\;c,d = 1, \ldots, C. \label{eqcondition}
\end{eqnarray}
Therefore,
\begin{equation}
E(D^{(gg')}_{ij}(t))=1-\frac{1}{K}\sum_{l=1}^K\sum_{c=1}^C\pi_{olc}\sum_{d=1}^C\pi_{g \mid o;ld\mid c}(t_{gg'})\pi_{g'\mid o;ld \mid c}(t_{gg'}).
\end{equation}

The average sample distance between groups $g$ and $g'$ at time $t$ is
\begin{equation}
\bar{D}_{gg'}(t)=\frac{1}{n_g n_{g'}}\sum_{i,j}^{n_g,n_{g'}}D^{(gg')}_{ij}(t)\label{D_ggl_f}
\end{equation}
and its first moment is given by
\begin{equation}
E(\bar{D}_{gg'}(t))=1-\frac{1}{K}\sum_{l=1}^K\sum_{c=1}^C\pi_{olc}\sum_{c'=1}^C\pi_{g \mid o;lc'\mid c}(t_{gg})\pi_{g'\mid o;lc' \mid c}(t_{g'g'})=\mathcal{H}_{gg'}^\circ, \label{espDggl}
\end{equation}
where $\mathcal{H}_{gg'}^\circ$ is the population diversity between groups $g$ and $g'$ under the coalescence process, and $t_{gg}=\max\{t_{igg}, i=1, 2, \dots, n_{g}\}$, for group $g=1,\ldots, G$. Note that  $t_{igg}=t-t_{ogg}$, $i=1, \ldots, n_{g}$ represents the time of MRCA for the sequence on group $g$. 

Analogously, we have
\begin{eqnarray}
\bar{D}_{gg}(t)=\frac{1}{\binom{n_g}{2}}\sum_{i<j}^{1,n_{g}}\frac{1}{K}\sum_{l=1}^K \mathbb{I}(X_{gil}(t)\neq X_{gjl}(t)), \label{D_gg_f}
\end{eqnarray}
and
\begin{eqnarray}
E(\bar{D}_{gg}(t))&=& 1-\frac{1}{K}\sum_{l=1}^K\sum_{c=1}^C\pi_{olc}\sum_{c'=1}^C\pi^2_{g\mid o;lc' \mid c}(t_{gg})=\mathcal{H}^\circ_{gg}.  \label{espDgg}
\end{eqnarray}

Consider the comparison of $G \ge 2$ groups. If $X_{gil}(t)$ is the random variable representing the nucleotide at site $l$ of sequence $i$ in group $g$ at time $t$, $n_g$ is the total number of individuals in group $g$, and $n=\sum_{g=1}^G n_g$, the total number of sequences in the sample, then the pooled sample divergence is given by
\beqn
D_{n}(t,0)=\binom{n}{2}^{-1}\sum_{1\le i < j \le n} D_{ij}(t),
\eeqn
where $D_{ij}(t)=K^{-1}\sum_{l=1}^K\mathbb{I}(X_{il}(t)\neq X_{jl}(t))$ is the Hamming distance between two sequences of the individuals $i$ and $j$ at time $t$. Following \citet{Pinheiro:2005} the total divergence can be decomposed as follows
\beqn
D_n(t,0)&=& \sum_{g=1}^G\frac{n_g}{n} \bar{D}_{gg}(t) + \frac{1}{n(n-1)}\sum_{g=1}^{G-1}\sum_{g'=g+1}^G n_g n_{g'}(2\bar{D}_{gg'}(t)-\bar{D}_{gg}(t)-\bar{D}_{g'g'}(t))  \\
&=&D_n(t,W)+D_n(t,B),  
\eeqn
where $D_n(t,W)$ is the average sample within-groups Hamming distance, and $D_n(t,B)$ is the residual betwee-groups Hamming distance, written as: 
\beqnn
D_n(t,B)&=&\frac{1}{n(n-1)}\sum_{g=1}^{G-1}\sum_{g'=g+1}^G n_g n_{g'}(2\bar{D}_{gg'}(t)-\bar{D}_{gg}(t)-\bar{D}_{g'g'}(t))\;\;\;\mbox{and}\label{dntb_equation}\\
D_n(t,W)&=&\sum_{g=1}^G\frac{n_g}{n} \bar{D}_{gg}(t), \nonumber
\eeqnn
with $\bar{D}_{gg'}(t)$ and $\bar{D}_{gg}(t)$ given by (\ref{D_ggl_f}) and (\ref{D_gg_f}). 

Considering the coalescence process, (Figure \ref{figcoalseq2}(b)), the null hypothesis of homogeneity among groups is given by 
$H_0: 2 \mathcal{H}^\circ_{gg'}-\mathcal{H}^\circ_{gg}-\mathcal{H}^\circ_{g'g'}=0$, for 
$\forall g \neq g'$. 

\begin{proposition}
Under the assumption of the coalescence process, the alternative hypothesis of the homogeneity test is unilateral, i.e., $H_1: 2\mathcal{H}_{gg'}-\mathcal{H}_{gg}-\mathcal{H}_{g'g'}>0$.
\end{proposition}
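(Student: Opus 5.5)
We show that $2\mathcal{H}^\circ_{gg'}-\mathcal{H}^\circ_{gg}-\mathcal{H}^\circ_{g'g'}\ge 0$ always holds, with equality exactly under $H_0$. Then any departure from homogeneity forces the quantity to be strictly positive, which is the claimed unilateral alternative. The tool is the explicit form of the first moments in (\ref{espDggl}) and (\ref{espDgg}). Substituting them into the contrast cancels the constants $1$, leaving a site-by-site, ancestral-state-by-state sum.

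Fix a site $l$ and an ancestral state $c$. Abbreviate $a_{d}=\pi_{g\mid o;ld\mid c}(t_{gg})$ and $b_{d}=\pi_{g'\mid o;ld\mid c}(t_{g'g'})$ for $d=1,\ldots,C$. These are two probability vectors on the $C$ nucleotides: the conditional distributions of the descendants in groups $g$ and $g'$ given the common ancestor state $c$. The first step is the algebraic identity
\begin{equation*}
2\mathcal{H}^\circ_{gg'}-\mathcal{H}^\circ_{gg}-\mathcal{H}^\circ_{g'g'}
=\frac{1}{K}\sum_{l=1}^K\sum_{c=1}^C \pi_{olc}\sum_{d=1}^C\left(a_d^2+b_d^2-2a_db_d\right)
=\frac{1}{K}\sum_{l=1}^K\sum_{c=1}^C \pi_{olc}\sum_{d=1}^C\left(a_d-b_d\right)^2 .
\end{equation*}
It follows directly by collecting terms: the within-group terms contribute $+\sum_d a_d^2$ and $+\sum_d b_d^2$, and the between-group term contributes $-2\sum_d a_db_d$. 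The key structural point is that, under the coalescence process, both groups descend from the \emph{same} ancestral sequence $o$ with the same ancestral law $\pi_{olc}$. Conditionally on $X_{ol}(t_{ogg'})=c$, the descendants evolve independently. This conditional independence is what gives the product form in the between-group moment.

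The second step reads off the sign. Since $\pi_{olc}\ge 0$ and each summand is a squared Euclidean distance, the contrast is nonnegative. It vanishes if and only if $a=b$ for every site $l$ and every ancestral state $c$ with $\pi_{olc}>0$. Equivalently, the two groups have identical conditional substitution laws, which is exactly homogeneity. Hence whenever $H_0$ fails for some pair $(g,g')$, the contrast is strictly positive for that pair. The alternative is therefore $H_1: 2\mathcal{H}_{gg'}-\mathcal{H}_{gg}-\mathcal{H}_{g'g'}>0$ for at least one pair, and it is one-sided.

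The main obstacle is a subtlety in the paper's notation. The within-group moments are indexed by the times $t_{gg}$ and $t_{g'g'}$, while the between-group moment involves $t_{gg'}$. For the squares to complete, the same conditional vectors $a$ and $b$ must appear in all three moments. The paper's (\ref{espDggl}) already writes the between-group moment with $t_{gg}$ and $t_{g'g'}$, and we will take it in that form.

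We justify this by conditioning on the common ancestor of the relevant tree (Figure \ref{figcoalseq2}(b)). We use the Markov property of the substitution models JC69, K80, F81 and HKY85, so that each group's descendant law is expressed through its own transition matrix from the shared ancestral state. If a within-group pair instead coalesces at a more recent ancestor, we apply the Chapman--Kolmogorov equations to push the conditioning back to $o$. This rewrites the within-group pair's law, given $o$, through the corresponding transition probabilities. We expect this step to need the most care, in matching the indices consistently across the three moments.
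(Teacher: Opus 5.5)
Your argument is correct and is essentially the paper's. The paper bounds each cross term by $\pi_{g\mid o;lc'\mid c}\pi_{g'\mid o;lc'\mid c}\le\frac12\bigl(\pi^2_{g\mid o;lc'\mid c}+\pi^2_{g'\mid o;lc'\mid c}\bigr)$ to get $\mathcal{H}^\circ_{gg'}\ge\frac12(\mathcal{H}^\circ_{gg}+\mathcal{H}^\circ_{g'g'})$. That is the same completing-the-square identity you write, which the paper states exactly in the proof of Proposition \ref{prepPI}, and like you it takes (\ref{espDggl}) and (\ref{espDgg}) as given.

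One caution on your last paragraph, which the proof does not need. Chapman--Kolmogorov does not make two within-group descendants that share a branch below $o$ conditionally independent given $o$. Their matching probability is a mixture over the intermediate ancestor, and by Jensen it is at least $\sum_d a_d^2$. That route therefore gives only an inequality: still enough for $\ge 0$, but not for your ``equality iff $a=b$'' claim. You are better off simply taking (\ref{espDgg}) as the model's assumption.
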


\begin{proof}: 
Using the results from (\ref{espDggl}) and (\ref{espDgg}), it follows that
\beqn
\mathcal{H}^\circ_{gg'}&=&\frac{1}{K}\sum_{l=1}^K\left\{1-\sum_{c=1}^C\pi_{olc}\sum_{c'=1}^C\pi_{g\mid o;lc'\mid c}(t_{gg})\pi_{g'\mid o; lc'\mid c}(t_{g'g'})\right\}\\
&\geq&\frac{1}{K}\sum_{l=1}^K\left\{1-\frac{1}{2}\sum_{c=1}^C\pi_{olc}\sum_{c'=1}^C\left[\pi^2_{g\mid o;lc'\mid c}(t_{gg})+\pi^2_{g'\mid o;lc'\mid c}(t_{g'g'})\right]\right\}\\
&=&\frac{1}{2}\left\{1-\frac{1}{K}\sum_{l=1}^K\sum_{c=1}^C\pi_{olc}\sum_{c'=1}^C\pi^2_{g\mid o;lc'\mid c}(t_{gg})+1-\frac{1}{K}\sum_{l=1}^K\sum_{c=1}^C\pi_{olc}\sum_{c'=1}^C\pi^2_{g'\mid o;lc' \mid c}(t_{g'g'})\right\}\\
&=&\frac{1}{2}\left[\mathcal{H}^\circ_{gg}+\mathcal{H}^\circ_{g'g'}\right].
\eeqn
Note that $0\le(\pi_{g\mid o;lc'\mid c}(t_{gg})-\pi_{g'\mid o;lc'\mid c}(t_{g'g'}))^2$ 
$=-2\pi_{g\mid o;lc'\mid c}(t_{gg})\pi_{g'\mid o; lc'\mid c}(t_{g'g'})+\pi^2_{g\mid o;lc'\mid c}(t_{gg})+\pi^2_{g'\mid o;lc'\mid c}(t_{g'g'})$.
\end{proof}
Considering the HKY model (\citealp{Hasegawa:1985}), 
substitutions occur at rate $\alpha_l$ for transition, $\beta_l$ for transversion and $\pi_{lc}$ for $c\in \mathcal{S}=\{A,C,G,T\}$ and $l=1,\ldots,K$. Then, we have the following relation
\begin{itemize}
\item
Nucleotide $T$:
\beqnn
\pi_{lT\mid T}(t)&=&\pi_{lT}\left[1+\pi_{l\mathcal{R}}\pi_{l\mathcal{Y}}^{-1}\exp(-\beta_lt)\right]+\pi_{lC}\pi_{l\mathcal{Y}}^{-1}\exp(-\pi_{l\mathcal{R}}\beta_lt)\exp(-\pi_{l\mathcal{Y}}\alpha_lt) \nonumber \\
\pi_{lC\mid T}(t)&=&\pi_{lC}\left[1+\pi_{l\mathcal{R}}\pi_{l\mathcal{Y}}^{-1}\exp(-\beta_lt)-\pi_{l\mathcal{Y}}^{-1}\exp(-\pi_{l\mathcal{R}}\beta_lt)\exp(-\pi_{l\mathcal{Y}}\alpha_lt)\right] \nonumber \\
\pi_{lA\mid T}(t)&=&\pi_{lA}\left[1-\exp(-\beta_lt)\right] ~~~~~~~~~~~~ \pi_{lG\mid T}(t)=\pi_{lG}\left[1-\exp(-\beta_lt)\right] \label{NucTHKY}
\eeqnn
\item
Nucleotide $C$:
\beqnn
\pi_{lT\mid C}(t)&=&\pi_{lT}\left[1+\pi_{l\mathcal{R}}\pi_{l\mathcal{Y}}^{-1}\exp(-\beta_lt)-\pi_{l\mathcal{Y}}^{-1}\exp(-\pi_{l\mathcal{R}}\beta_lt)\exp(-\pi_{l\mathcal{Y}}\alpha_lt)\right] \nonumber \\
\pi_{lC\mid C}(t)&=&\pi_{lC}\left[1+\pi_{l\mathcal{R}}\pi_{l\mathcal{Y}}^{-1}\exp(-\beta_lt)\right]+\pi_{lT}\pi_{l\mathcal{Y}}^{-1}\exp(-\pi_{l\mathcal{R}}\beta_lt)\exp(-\pi_{l\mathcal{Y}}\alpha_lt) \nonumber \\
\pi_{lA\mid C}(t)&=&\pi_{lA}\left[1-\exp(-\beta_lt)\right] ~~~~~~~~~~~~\pi_{lG\mid C}(t)=\pi_{lG}\left[1-\exp(-\beta_lt)\right]\label{NucCHKY}
\eeqnn
\item
Nucleotide $A$:
\beqnn
\pi_{lT\mid A}(t)&=&\pi_{lT}\left[1-\exp(-\beta_lt)\right] ~~~~~~~~~~~~ \pi_{lC\mid A}(t)=\pi_{lC}\left[1-\exp(-\beta_lt)\right] \nonumber \\
\pi_{lA\mid A}(t)&=&\pi_{lA}\left[1+\pi_{l\mathcal{Y}}\pi_{l\mathcal{R}}^{-1}\exp(-\beta_lt)\right]+\pi_{lG}\pi_{l\mathcal{R}}^{-1}\exp(-\pi_{k\mathcal{Y}}\beta_lt)\exp(-\pi_{l\mathcal{R}}\alpha_lt) \nonumber \\
\pi_{lG\mid A}(t)&=&\pi_{lG}\left[1+\pi_{l\mathcal{Y}}\pi_{l\mathcal{R}}^{-1}\exp(-\beta_lt)-\pi_{l\mathcal{R}}^{-1}\exp(-\pi_{l\mathcal{Y}}\beta_lt)\exp(-\pi_{l\mathcal{R}}\alpha_lt)\right]\label{NucAHKY}
\eeqnn
\item
Nucleotide $G$:
\beqnn
\pi_{lT\mid G}(t)&=&\pi_{lT}\left[1-\exp(-\beta_lt)\right] ~~~~~~~~~~~~ \pi_{lC\mid G}(t)=\pi_{lC}\left[1-\exp(-\beta_lt)\right] \nonumber \\
\pi_{lA\mid G}(t)&=&\pi_{lA}\left[1+\pi_{l\mathcal{Y}}\pi_{l\mathcal{R}}^{-1}\exp(-\beta_lt)-\pi_{l\mathcal{R}}^{-1}\exp(-\pi_{l\mathcal{Y}}\beta_lt)\exp(-\pi_{l\mathcal{R}}\alpha_lt)\right] \nonumber \\
\pi_{lG\mid G}(t)&=&\pi_{lG}\left[1+\pi_{l\mathcal{Y}}\pi_{l\mathcal{R}}^{-1}\exp(-\beta_lt)\right]+\pi_{lA}\pi_{l\mathcal{R}}^{-1}\exp(-\pi_{l\mathcal{Y}}\beta_lt)\exp(-\pi_{l\mathcal{R}}\alpha_lt)\label{NucGHKY}
\eeqnn
\end{itemize}

Under HKY model, $\mathcal{H}^\circ_{gg}$ and $\mathcal{H}^\circ_{gg'}$ are given by
\beqn
&&\mathcal{H}^\circ_{gg}=1-\frac{1}{K}\sum_{k=1}^K\left\{\sum_{c=1}^C\pi_{okc}^2+\exp(-2\beta_{gk}t_{gg})\left[\frac{\pi_{ok\mathcal{R}}}{\pi_{ok\mathcal{Y}}}\sum_{c'\in\mathcal{Y}}\pi^2_{okc'}+\frac{\pi_{ok\mathcal{Y}}}{\pi_{ok\mathcal{R}}}\sum_{c'\in\mathcal{R}}\pi^2_{okc'}\right]\right.\\
&+&\left.\frac{2\pi_{okA}\pi_{okG}}{\pi_{ok\mathcal{R}}}\exp\left(-2\pi_{ok\mathcal{Y}}\beta_{gk}t_{gg}-2\pi_{ok\mathcal{R}}\alpha_{gk}t_{gg}\right)+\frac{2\pi_{okT}\pi_{okC}}{\pi_{ok\mathcal{Y}}}\exp\left(-2\pi_{ok\mathcal{R}}\beta_{gk}t_{gg}-2\pi_{ok\mathcal{Y}}\alpha_{gk}t_{gg}\right)\right\},\\
&&\mathcal{H}^\circ_{gg'}=1-\frac{1}{K}\sum_{k=1}^K\left\{\sum_{c=1}^C\pi_{okc}^2+\exp(-\beta_{gk}t_{gg}-\beta_{g'k}t_{g'g'})\left[\frac{\pi_{ok\mathcal{R}}}{\pi_{ok\mathcal{Y}}}\sum_{c'\in\mathcal{Y}}\pi^2_{okc'}+\frac{\pi_{ok\mathcal{Y}}}{\pi_{ok\mathcal{R}}}\sum_{c'\in\mathcal{R}}\pi^2_{okc'}\right]\right.\\
&+&\left.\frac{2\pi_{okA}\pi_{okG}}{\pi_{ok\mathcal{R}}}\exp\left(-\pi_{ok\mathcal{Y}}(\beta_{gk}t_{gg}+\beta_{g'k}t_{g'g'})-\pi_{ok\mathcal{R}}(\alpha_{gk}t_{gg}+\alpha_{g'k}t_{g'g'})\right)\right.\\
&+&\left.\frac{2\pi_{okT}\pi_{okC}}{\pi_{ok\mathcal{Y}}}\exp\left(-\pi_{ok\mathcal{R}}(\beta_{gk}t_{gg}+\beta_{g'k}t_{g'g'})-\pi_{ok\mathcal{Y}}(\alpha_{gk}t_{gg}+\alpha_{g'k}t_{g'g'})\right)\right\}.
\eeqn

\section{Asymptotic Theory}\label{teoAss}

Note that $D_n(t,B)$ is a generalized $U$-statistic, since $\bar{D}_{gg'}(t)$, $\bar{D}_{gg}(t)$ and $\bar{D}_{g'g'}(t)$ are all generalized $U$-statistics. We first study whether $D_n(t,B)$ is degenerated in the sense of Hoefdding (\citealp{Hoeffding:1948}). In this degenerated case, $D_n(t,B)$  will be a quasi $U$-statistic  (\citealp{Pinheiro:2009, Pinheiro:2011}). 

The asymptotic distribution of $D_n(t,B)$ under $H_0$ can be found considering the assumption of interchangeable sequences (\citealp{DeFinetti:1937}), which is an adequate assumption when we have an evolutionary process. Besides the depedence between sequences, another important issue is that of the dependence between-sites with-sequences, specially since genetic data is usually of high of ultra-high dimensions, i.e.. such that $K >> n$. One way of dealing with site dependence when $K>>n$ is considering mixing conditions (\citealp{Pinheiro:2009, Pinheiro:2011}).

We present below the conditions for first-order degeneracy.

\begin{proposition}\label{prepPI}
$ H_0: 2 \mathcal{H}^\circ_{gg'}-\mathcal{H}^\circ_{gg}-\mathcal{H}^\circ_{g'g'}=0$ is true if only if one the following two conditions holds: (i) $\pi_{g\mid o;lc'\mid c}(t_{gg})=\pi_{g'\mid o; lc'\mid c}(t_{g'g'})$, $\forall c'\in \mathcal{S}=\{A,C,G,T\}$ with $\pi_{olc}>0$ $\forall c\;\in \;\mathcal{S}$;  or (ii) $\pi_{olc}=1$ for some $c\;\in \;\mathcal{S}=\{A,C,G,T\}$, and $0$ otherwise.
\end{proposition}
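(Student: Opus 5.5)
The plan is to reuse the algebra from the proof of the previous Proposition, but to track the slack in the inequality exactly. Write $p_{lc'c}=\pi_{g\mid o;lc'\mid c}(t_{gg})$ and $q_{lc'c}=\pi_{g'\mid o;lc'\mid c}(t_{g'g'})$. Substituting (\ref{espDggl}) and (\ref{espDgg}) directly gives the identity
\[
2\mathcal{H}^\circ_{gg'}-\mathcal{H}^\circ_{gg}-\mathcal{H}^\circ_{g'g'}=\frac{1}{K}\sum_{l=1}^K\sum_{c=1}^C\pi_{olc}\sum_{c'=1}^C\left(p_{lc'c}-q_{lc'c}\right)^2 .
\]
This is simply the completed-square remark at the end of that proof, now summed with weights instead of being used as an inequality.

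Every term in this sum is nonnegative. Hence $H_0$ holds if and only if each summand vanishes, that is, for every site $l$ and every ancestral state $c$ we need $\pi_{olc}=0$ or $p_{lc'c}=q_{lc'c}$ for all $c'\in\mathcal{S}$. Proving the equivalence therefore reduces to matching this pointwise condition with the two alternatives in the statement.

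The case split is over the support of the ancestral distribution $\pi_{ol\cdot}$.
\begin{itemize}
\item If every $\pi_{olc}>0$, the condition forces $p_{lc'c}=q_{lc'c}$ for all $c$ and $c'$. This is (i).
\item If $\pi_{ol\cdot}$ is degenerate at one state $c_0$, only the $c_0$ row contributes. To obtain (ii), I need that row to vanish automatically under the substitution model. For that I would use the explicit HKY expressions (\ref{NucTHKY})--(\ref{NucGHKY}), in the regime where $\pi_{lc}$ there coincides with the ancestral frequencies. With $\pi_{olc_0}=1$ and the other frequencies zero, those formulas give $\pi_{c'\mid c_0}(t)=\mathbb{I}(c'=c_0)$ for every $t$ and every rate. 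So $p=q$ trivially, the sum is zero, and $H_0$ holds regardless of the times and rates.
\end{itemize}
Conversely, both (i) and (ii) make every summand zero, which gives the ``if'' direction immediately.

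The main obstacle is the intermediate case: an ancestral distribution with two or three states in its support. Here the identity only gives equality of the transition rows for the supported $c$, which is neither (i) nor (ii) as literally stated. I would first try to show, again via the HKY formulas, that a zero frequency in the stationary vector $\pi_{l\cdot}$ removes the corresponding rows and columns. Equality for the supported rows would then be equivalent to equality of the full transition matrices restricted to the support, and I would interpret (i) as ``$\pi_{olc}>0$ for the relevant $c$''.

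If that reading is not intended, I would note that the statement should be read with condition (i) quantified only over $c$ with $\pi_{olc}>0$. The identity above yields exactly that characterization, and it is the substance of the proposition.
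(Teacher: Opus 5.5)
Your proposal is correct and follows the paper's own route: the paper also derives $2\mathcal{H}^\circ_{gg'}-\mathcal{H}^\circ_{gg}-\mathcal{H}^\circ_{g'g'}=\frac{1}{K}\sum_{l}\sum_{c}\pi_{olc}\sum_{c'}\left(\pi_{g\mid o;lc'\mid c}(t_{gg})-\pi_{g'\mid o;lc'\mid c}(t_{g'g'})\right)^2$, splits on whether $\pi_{olc}=1$ for a single state or $\pi_{olc}>0$ for all $c$, and ends with exactly your support-restricted characterization (equality of the transition rows for every $c$ with $\pi_{olc}>0$). The only difference is in the degenerate case, where the paper merely deduces equality of the $c_0$ row from $H_0$; you instead show, via the absorbing-state behaviour of the HKY formulas with ancestral frequencies equal to the equilibrium ones, that this equality holds automatically, which is what actually makes (ii) a sufficient condition.
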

\begin{proof}
Consider $H_0$ is true.  It follows from  (\ref{espDggl}) and (\ref{espDgg}) that 
\beqn
&&2\mathcal{H}^\circ_{gg'}-\mathcal{H}^\circ_{gg}-\mathcal{H}^\circ_{g'g'}=2\left\{1-\frac{1}{K}\sum_{l=1}^K\sum_{c=1}^C\pi_{olc}\sum_{c'=1}^C\pi_{g\mid o;lc' \mid c}(t_{gg})\pi_{g'\mid o; l c'\mid c}(t_{g'g'})\right\}+\\
&-&\left\{1-\frac{1}{K}\sum_{l=1}^K\sum_{c=1}^C\pi_{olc}\sum_{c'=1}^C\pi^2_{g\mid o;lc' \mid c}(t_{gg})\right\}-\left\{1-\frac{1}{K}\sum_{l=1}^K\sum_{c=1}^C\pi_{olc}\sum_{c'=1}^C\pi^2_{g'\mid o;lc' \mid c}(t_{g'g'})\right\}\\
&=&\frac{1}{K}\sum_{l=1}^K\sum_{c=1}^C\pi_{olc}\sum_{c'=1}^C(\pi_{g\mid o;lc'\mid c}(t_{gg})-\pi_{g'\mid o; l c'\mid c}(t_{g'g'}))^2.
\eeqn
Analyzing the terms of this result, we have to
\begin{enumerate}
\item
If for some $c\in\mathcal{S}$, $\pi_{olc}=1$ and $\pi_{old}=0$, for $d\neq c$, then
\beqn
\frac{1}{K}\sum_{l=1}^K\sum_{c'=1}^C(\pi_{g\mid o;lc'\mid c}(t_{gg})-\pi_{g'\mid o; l c'\mid c}(t_{g'g'}))^2=0.
\eeqn 
Then, $\pi_{g\mid o;lc'\mid c}(t_{gg})=\pi_{g'\mid o; l c'\mid c}(t_{g'g'})$, $\forall c'\in\mathcal{S}$, with $\pi_{olc}=1$ for $c\in\mathcal{S}$.
\item
Under HKY model, the probabilities for each nucleotide are different, $\pi_{olc}>0$ $\forall c\in\mathcal{S}$, with $\sum_{c=1}^C\pi_{olc}=1$,
\beqn
\frac{1}{K}\sum_{l=1}^K\sum_{c=1}^C\pi_{olc}\sum_{c'=1}^C(\pi_{g\mid o;lc'\mid c}(t_{gg})-\pi_{g'\mid o; lc'\mid c}(t_{g'g'}))^2=0.
\eeqn
For this equality to be true, $\pi_{g\mid o;lc'\mid c}(t_{gg})=\pi_{g'\mid o; lc'\mid c}(t_{g'g'})$, $\forall c'\in\mathcal{S}$, with  $\pi_{olc}>0$ and $\forall c\in\mathcal{S}$.
\end{enumerate}
Thus,
$\pi_{g\mid o;lc'\mid c}(t_{gg})=\pi_{g'\mid o; lc'\mid c}(t_{g'g'}) \forall c'\in \mathcal{S} \mbox{ and } l=1,\ldots, K$, 
when $\pi_{olc}>0$. Then, $\pi_{g\mid o;lc'\mid c}(t_{gg})=\pi_{g'\mid o; lc'\mid c}(t_{g'g'})$  is a necessary and sufficient condition to $H_0$ be true. 
\end{proof}

Theorem \ref{theo31} establishes that $D_n(t,B)$ is degenerated under homogeneity among groups for the coalescence processes..

\begin{theorem}\label{theo31}
Consider $H_0: 2\mathcal{H}^\circ_{gg'}-\mathcal{H}^\circ_{gg}-\mathcal{H}^\circ_{g'g'}=0$. Under the coalescence process, $D_n(t,B)$ given by (\ref{dntb_equation}) is first order degenerated in Hoeffding's sense.
\end{theorem}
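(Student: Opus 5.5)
We will show that the first-order Hoeffding projection of $D_n(t,B)$ vanishes identically under $H_0$. Write $D_n(t,B)$ as the linear combination $\frac{1}{n(n-1)}\sum_{g<g'} n_g n_{g'}\,(2\bar D_{gg'}(t)-\bar D_{gg}(t)-\bar D_{g'g'}(t))$. Its expectation is $\frac{1}{n(n-1)}\sum_{g<g'} n_gn_{g'}(2\mathcal{H}^\circ_{gg'}-\mathcal{H}^\circ_{gg}-\mathcal{H}^\circ_{g'g'})$, which is zero under $H_0$. Each $\bar D_{gg'}(t)$ and $\bar D_{gg}(t)$ is a (generalized) $U$-statistic with kernel the Hamming distance $\phi(\mathbf{x},\mathbf{y})=K^{-1}\sum_l \mathbb{I}(x_l\neq y_l)$. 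Its first-order projection onto a single sequence $\mathbf{X}_{gi}(t)=\mathbf{x}$ is therefore determined by the conditional means
\[
\phi_{1}^{(gh)}(\mathbf{x}) = E\bigl[\phi(\mathbf{x},\mathbf{X}_{hj}(t))\bigr]
=1-\frac{1}{K}\sum_{l=1}^K \sum_{c}\pi_{olc}\,\pi_{h\mid o;lx_l\mid c}(t_{hh})\,\frac{\pi_{g\mid o; lx_l\mid c}(t_{gg})}{P(X_{gil}(t)=x_l)}
\]
for $h=g,g'$. Here the ancestral state is integrated out given the observed $x_l$, using the exchangeability (coalescent) structure: conditionally on the ancestor, sequences evolve independently via (\ref{eqcondition}).

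The key step is to invoke Proposition \ref{prepPI}. Under $H_0$, either (i) $\pi_{g\mid o;lc'\mid c}(t_{gg})=\pi_{g'\mid o;lc'\mid c}(t_{g'g'})$ for all $c,c'$ and all $l$, or (ii) the ancestral distribution is degenerate at a single nucleotide. In either case the marginal and conditional laws of a site in group $g$ coincide with those in group $g'$ (in case (ii) this holds because only one $c$ contributes, and the same equality of transition probabilities follows from the vanishing of the sum of squares in the proof of Proposition \ref{prepPI}). Consequently $\phi_1^{(gg)}(\mathbf{x})=\phi_1^{(gg')}(\mathbf{x})=\phi_1^{(g'g')}(\mathbf{x})$ as functions of $\mathbf{x}$, and likewise for a sequence drawn from group $g'$.

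Next we compute the projection $E[D_n(t,B)\mid \mathbf{X}_{gi}(t)]-E[D_n(t,B)]$ for a fixed sequence $i$ in group $g$. Within the pair term $(g,g')$, the sequence $\mathbf{X}_{gi}$ enters $2\bar D_{gg'}$ through $n_{g'}$ kernels, each with weight $2/(n_gn_{g'})$, and it enters $\bar D_{gg}$ through $n_g-1$ kernels, each with weight $1/\binom{n_g}{2}$. The total contributions are thus $\frac{2}{n_g}\phi_1^{(gg')}$ and $\frac{2}{n_g}\phi_1^{(gg)}$ respectively, while $\bar D_{g'g'}$ does not involve $\mathbf{X}_{gi}$ at all. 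Hence the projection of the $(g,g')$ summand equals $\frac{2}{n_g}(\phi_1^{(gg')}(\mathbf{x})-\phi_1^{(gg)}(\mathbf{x}))$ minus its mean, which vanishes by the previous step. Summing over all pairs containing $g$ gives a zero first-order projection, so the Hoeffding variance component $\zeta_1=0$. Meanwhile the second-order kernel is not constant, since Hamming distances between two fixed sequences vary. This establishes first-order degeneracy.

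The main obstacle is the dependence between sequences induced by the common ancestor. Strictly, the Hoeffding decomposition requires independent arguments, so we must argue conditionally on the ancestral sequence $X_{o}(t_{ogg'})$: given the ancestor, sequences are conditionally i.i.d. within groups (de Finetti), and the projection computed conditionally is $\sum_l \mathbb{I}$-type terms whose group dependence enters only through $\pi_{g\mid o;\cdot\mid c}$. Under $H_0$ and Proposition \ref{prepPI}, these conditional projections coincide across groups, so the cancellation holds conditionally and hence also after integrating over the ancestor. I would handle this carefully, performing the cancellation site by site before averaging over $l$, so that site dependence plays no role.
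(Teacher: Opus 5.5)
Your proposal is correct and follows essentially the paper's route. You expand $\bar D_{gg}(t)$ and $\bar D_{gg'}(t)$ in their Hoeffding decompositions, note that the linear terms in a sequence from group $g$ enter with the same weight $2/n_g$ through both $2\bar D_{gg'}(t)$ and $\bar D_{gg}(t)$, and use Proposition~\ref{prepPI} to get $\pi_{g\mid o;lc'\mid c}(t_{gg})=\pi_{g'\mid o;lc'\mid c}(t_{g'g'})$ wherever $\pi_{olc}>0$, so that $\Psi_{10g'}-\Psi_{1g}$ and $\Psi_{01g}-\Psi_{1g'}$ cancel identically.

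Your explicit conditioning on the ancestral sequence tightens a point the paper glosses over. Unconditionally, even kernels not containing $\mathbf{X}_{gi}$ (e.g.\ those in $\bar D_{g'g'}(t)$) depend on it through the common ancestor. It is the conditional argument, or equivalently full exchangeability under $H_0$ together with the vanishing row sums of the weights, that disposes of them.
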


\begin{proof}
First, assume that the MRCA sequence between groups $g$ and $g'$ occur at time $t_{ogg'}$. Consider $\phi(X_1,X_2)$ the kernel  of a U-statistics. The Hoeffding's decomposition is given by 
\begin{eqnarray}
\phi(X_1,X_2)=\phi_0+\Psi_1(X_1)+\Psi_1(X_2)+\Psi_2(X_1,X_2),
\end{eqnarray}
where $\Psi_1(X_1)=\{\phi_1(X_1)-\phi_0\}$ (similarly to $\Psi_1(X_2)$), 
$\Psi_2(X_1,X2)=\{\phi(X_1,X2)-\phi_1(X_1)-\phi_1(X_2)+\phi_0\}$,
\[
\phi_0=E[\phi(X_1,X_2)]\;,\;\phi_1(X_1)=E[\phi(X_1,X_2)\mid X_1]\;\mbox{and}\;\phi_1(X_2)=E[\phi(X_1,X_2)\mid X_2].
\]
The first projection for $\bar{D}_{gg}(t)$ is given by
\[
\phi_g(\mathbf{X}_{gi}) = E\left[\frac{1}{K}\sum_{l=1}^K\mathbb{I}(X_{gil}(t)\neq X_{gjl}(t))\mid X_{gil}(t)\right].
\]
Consider $\mathbf{d}\in \mathcal{S}^K$, that is, a vector of $K$ positions, representing a possible sequence
\begin{eqnarray*}
\phi_g(\mathbf{c'})&=&\frac{1}{K}\sum_{l=1}^K\sum_{c=1}^C\pi_{olc}\sum_{c'=1}^CP(X_{gjl}(t)\neq X_{gil}(t)\mid X_{gil}(t)=c', X_{ol}(t_{ogg'})=c) \\
&=&1-\frac{1}{K}\sum_{l=1}^K\sum_{c=1}^C\pi_{olc}\sum_{c'=1}^C\pi_{g \mid o; l c'\mid c}(t_{gg}).
\end{eqnarray*}
So,
\beqn
\phi_g(\mathbf{X}_{gi})&=&1-\frac{1}{K}\sum_{l=1}^K\sum_{c=1}^C\pi_{olc}\pi_{g \mid o; l X_{gil}(t)\mid c}(t_{gg}),
\eeqn
where $\pi_{g \mid o; l X_{gil}(t)\mid c}(t_{gg})=P(X_{gil}(t)=c'\mid X_{ol}(t_{ogg'})=c)$, for $\forall c' \in \mathcal{S}$.

The term related to the variance of the first Hoeffding's decomposition is given by
\begin{eqnarray}
\zeta_1 &=& E\left[\phi^2_g(\mathbf{X}_{gi})\right]-\left(\mathcal{H}^\circ_{gg}\right)^2 \nonumber \\
&=&\frac{1}{K^2} \sum_{k=1}^K\sum_{l=1}^K\sum_{c=1}^C\sum_{d=1}^C\pi_{okc}\pi_{old}\sum_{c'=1}^C\sum_{d'=1}^C\pi_{g \mid o; k c'\mid c}(t_{gg})\pi_{g \mid o; l d'\mid d}(t_{gg}) w_{g \mid o; kc', l d'\mid c, d}(t_{gg}),\label{varzeta}
\end{eqnarray}
where $w_{g \mid o; kc', l d'\mid c, d}(t_{gg})=\pi_{g \mid o; kc', l d'\mid c, d}(t_{gg})-\pi_{g \mid o; k c'\mid c}(t_{gg})\pi_{g \mid o; l d'\mid d}(t_{gg})$.

Under the mixing conditions (\citealp{Withers:1981, Doukhan:1994}), $w_{g \mid o; kc', l d'\mid c, d}(t_{gg})\rightarrow 0$ as $K\rightarrow \infty$  (which ensures the covariance term between the sites is summable). Some conditions can be imposed on $w_{g \mid o; kc', l d'\mid c, d}(t_{gg})$ to be negligible:
\begin{enumerate}
\item Auto Regressive dependency (AR), with $w_{g\mid o; kc',l d'\mid c,d}(t_{gg})=|\pi_{g\mid o; kc',l d'\mid c,d}(t_{gg})-\pi_{g \mid o; k c'\mid c}(t_{gg})$   
$\pi_{g \mid o; l d'\mid d}(t_{gg})|=\rho^{|k-l|}$, with $0<\rho<1$, i.e., the more distant the sites are, $\lim_{r\rightarrow \infty}\rho^{r}=0$ (\citealp{Doukhan:1994}).
\item $m$-dependency, with $\pi_{g\mid o; kc',ld' \mid c,d}(t_{gg})-\pi_{g \mid o; k c'\mid c}(t_{gg})\pi_{g \mid o; l d'\mid d}(t_{gg})=0$, for $|k-l|>m$.
\end{enumerate}
Considering mixing conditions, $\bar{D}_{gg}(t)$ is degenerated under $H_0$, because $\zeta_1$ (given by (\ref{varzeta})) goes to 0.

$\bar{D}_{gg'}$ is a generalized U-statistics of degree (1,1) and we can write 
\beqn
\phi_{10(g')}(\mathbf{X}_{gi})&=&1-\frac{1}{K}\sum_{l=1}^K\sum_{c=1}^C\pi_{olc}\pi_{g' \mid o; l X_{gil}(t)\mid c}(t_{g'g'}),\\
\phi_{01(g)}(\mathbf{X}_{g'j})&=&1-\frac{1}{K}\sum_{l=1}^K\sum_{c=1}^C\pi_{olc}\pi_{g\mid o; l X_{g'jl}(t)\mid c}(t_{gg}).
\eeqn
The terms $\zeta_{10}$ and $\zeta_{01}$ are given by
\[
\zeta_{10}=E\left[\phi^2_{10(g')}(\mathbf{X}_{gi})\right]-\left(\mathcal{H}^\circ_{gg'}\right)^2\;\;\;\mbox{and}\;\;\;
\zeta_{01}=E\left[\phi^2_{01(g)}(\mathbf{X}_{g'i})\right]-\left(\mathcal{H}^\circ_{gg'}\right)^2.
\]
So,
\[
\zeta_{10}
=\frac{1}{K^2}\sum_{k=1}^K\sum_{l=1}^K\sum_{c=1}^C\sum_{d=1}^C\pi_{okc}\pi_{old}\sum_{c'=1}^C\sum_{d'=1}^C\pi_{g' \mid o; k c'\mid c}(t_{g'g'})\pi_{g' \mid o; l d'\mid d}(t_{g'g'})w_{g\mid o; k c',ld'\mid c,d}(t_{gg}),
\]
where $w_{g\mid o; k c';ld'\mid c,d}(t_{gg})=\pi_{g\mid o; k c',ld'\mid c,d}(t_{gg})-\pi_{g\mid o; k c'\mid c}(t_{gg})\pi_{g\mid o; l d'\mid d}(t_{gg})\rightarrow 0$ as $K\rightarrow \infty$. Analogously, for $\zeta_{01}$. So, the terms of the $D_n(t,B)$ are degenerated under $H_0$.

We can write, $\bar{D}_{gg}(t)$ and $\bar{D}_{gg'}(t)$ as 
\beqn
\bar{D}_{gg}(t)&=&\mathcal{H}^{\circ_{gg}}+\frac{2}{n_g}\sum_{i=1}^{n_g}\Psi_{1g}(\mathbf{X}_{gi})+\binom{n_g}{2}^{-1} \sum_{i \le j}^{1,n_g} \Psi_{2g}(\mathbf{X}_{gi},\mathbf{X}_{gj}),
\eeqn
where $\Psi_{1g}(\mathbf{X}_{g\cdot})=\left(\phi_{g}(\mathbf{X}_{g\cdot})-\mathcal{H}^\circ_{gg}\right)$ and 
\beqn
\Psi_{2g}(\mathbf{X}_{gi},\mathbf{X}_{gj})=\frac{1}{K}\sum_{l=1}^K \mathbb{I}(X_{gil}(t) \neq X_{gjl}(t))-\phi_{g}(\mathbf{X}_{gi})-\phi_{g}(\mathbf{X}_{gj})+\mathcal{H}^\circ_{gg}.
\eeqn
Also,
\begin{eqnarray*}
\bar{D}_{gg'}(t)&=&\mathcal{H}^\circ_{gg'}+\frac{1}{n_g}\sum_{i=1}^{n_{g}}\Psi_{10g'}(\mathbf{X}_{gi})+\frac{1}{n_{g'}}\sum_{j=1}^{n_{g'}}\Psi_{01g}(\mathbf{X}_{g'j})+\frac{1}{n_gn_{g'}}\sum_{i,j}^{n_g,n_{g'}}\Psi_{11g'g}(\mathbf{X}_{gi},\mathbf{X}_{g'j}),
\end{eqnarray*}
where $\Psi_{10g'}(\mathbf{X}_{gi}) = \left ( \phi_{10(g')}(\mathbf{X}_{gi})-\mathcal{H}_{gg'}^\circ\right )$, $\Psi_{01g}(\mathbf{X}_{g'j}) = \left(\phi_{01(g)}(\mathbf{X}_{g'j})-\mathcal{H}_{gg'}^\circ\right)$ and 
\begin{eqnarray*}
\Psi_{11g'g}(\mathbf{X}_{gi},\mathbf{X}_{g'j})&=&\frac{1}{K}\sum_{l=1}^K \mathbb{I}(X_{gil}(t) \neq X_{g'jl}(t))-\phi_{10(g')}(\mathbf{X}_{gi})-\phi_{01(g)}(\mathbf{X}_{g'j})+\mathcal{H}^\circ_{gg'}.
\end{eqnarray*}

Using these results, we can rewrite $D_n(t,B)$ as
\begin{eqnarray}
D_n(t,B)&=&\frac{1}{n(n-1)}\sum_{g=1}^{G-1}\sum_{g'=g+1}^G n_g n_{g'}\left\{2\mathcal{H}^\circ_{gg'}-\mathcal{H}^\circ_{gg}-\mathcal{H}^\circ_{g'g'}+\frac{2}{n_g}\sum_{i=1}^{n_{g}}\left(\Psi_{10g'}(\mathbf{X}_{gi})-\Psi_{1g}(\mathbf{X}_{gi})\right)+\right. \nonumber \\
&+&\frac{2}{n_{g'}}\sum_{j=1}^{n_{g'}}\left(\Psi_{01g}(\mathbf{X}_{g'j})-\Psi_{1g'}(\mathbf{X}_{g'i})\right)+\frac{2}{n_gn_{g'}}\sum_{i,j}^{n_g,n_{g'}}\Psi_{11g'g}(\mathbf{X}_{gi},\mathbf{X}_{g'j})+ \nonumber \\
&-&\left.{\binom{n_g}{2}}^{-1} \sum_{i \le j}^{1,n_g} \Psi_{2(g)}(\mathbf{X}_{gi},\mathbf{X}_{gj})-\binom{n_{g'}}{2}^{-1} \sum_{i \le j}^{1,n_{g'}} \Psi_{2(g')}(\mathbf{X}_{g'i},\mathbf{X}_{g'j})\right\}.
\end{eqnarray}

Under $H_0: 2\mathcal{H}^\circ_{gg'}-\mathcal{H}^\circ_{gg}-\mathcal{H}^\circ_{g'g'}=0$, we have
\begin{enumerate}
\item
$\pi_{g\mid o;kd \mid c}(t_{gg})=\pi_{g'\mid o; k d\mid c}(t_{g'g'})=\pi_{\cdot\mid o,kd\mid c}(t^\circ)$, $\forall d\in \mathcal{S}$ with $\pi_{okc}>0$, where $t^{\circ}$ is the maximum of the times until the MRCA to all the sequences, or
\item
$\pi_{okc}=1$ for some $c\in \mathcal{S}$ and $\pi_{okc}=0$, otherwise.
\end{enumerate}

Let
\beqn
&&\Psi^{\star}(\mathbf{X}_{gi})=\frac{2}{n_g}\sum_{i=1}^{n_{g}}\left(\Psi_{10g'}(\mathbf{X}_{gi})-\Psi_{1g}(\mathbf{X}_{gi})\right)=\mathcal{H}^\circ_{gg}-\mathcal{H}_{gg'}^\circ+\\
&+&\frac{2}{n_g}\sum_{i=1}^{n_{g}}\left(\frac{1}{K}\sum_{k=1}^K\sum_{c=1}^C\pi_{okc}\pi_{g \mid o; k X_{gik}\mid c}(t_{gg})-\frac{1}{K}\sum_{k=1}^K\sum_{c=1}^C\pi_{okc}\pi_{g' \mid o; k X_{gik}\mid c}(t_{g'g'})\right)
\eeqn
and
\begin{eqnarray}
&&\Psi^{\star}(\mathbf{X}_{g'j})=\frac{2}{n_{g'}}\sum_{j=1}^{n_{g'}}\left(\Psi_{01g}(\mathbf{X}_{g'j})-\Psi_{1g'}(\mathbf{X}_{g'i})\right)=\mathcal{H}^\circ_{g'g'}-\mathcal{H}_{gg'}^\circ+ \nonumber \\
&+&\frac{2}{n_{g'}}\sum_{j=1}^{n_{g'}}\left(\frac{1}{K}\sum_{k=1}^K\sum_{c=1}^C\pi_{okc}\pi_{g' \mid o; k X_{g'jk}\mid c}(t_{g'g'})-\frac{1}{K}\sum_{k=1}^K\sum_{c=1}^C\pi_{okc}\pi_{g\mid o; k X_{g'jk}\mid c}(t_{gg})\right). \nonumber
\end{eqnarray}
Under $H_0$, $\pi_{g' \mid o; k X_{gik}\mid c}(t_{g'g'})=\pi_{g \mid o; k X_{gik}\mid c}(t_{gg})=\pi_{\cdot\mid o,kd\mid c}(t^\circ)$ for $\mathbf{X}_{gi}\in$ $\mathcal{S}^K$ and $\pi_{g' \mid o; k X_{g'jk}\mid c}(t_{g'g'})=\pi_{g \mid o; k X_{g'jk}\mid c}(t_{gg})=\pi_{\cdot\mid o,kd\mid c}(t^\circ)$ for $\mathbf{X}_{g'j}\in$ $\mathcal{S}^K$. Then,
\beqn
\Psi^{\star}(\mathbf{X}_{gi})+\Psi^{\star}(\mathbf{X}_{g'j})=\mathcal{H}^\circ_{gg}+\mathcal{H}^\circ_{g'g'}-2\mathcal{H}_{gg'}^\circ=0.
\eeqn
Therefore, under $H_0$
\beqnn
D_n(t,B)&=&\frac{1}{n(n-1)}\sum_{g=1}^{G-1}\sum_{g'=g+1}^G\frac{1}{K}\sum_{k=1}^K n_g n_{g'}\left\{\frac{2}{n_gn_{g'}}\sum_{i,j}^{n_g,n_{g'}}\Psi_{11(g',g)}(\mathbf{X}_{gi},\mathbf{X}_{g'j})+\right.\nonumber\\
&-&\left.\binom{n_{g}}{2}^{-1} \sum_{i \le j}^{1,n_g}\Psi_{2(g)}(\mathbf{X}_{gi},\mathbf{X}_{gj})-
\binom{n_{g'}}{2}^{-1} \sum_{i \le j}^{1,n_{g'}}\Psi_{2(g')}(\mathbf{X}_{g'i},\mathbf{X}_{g'j})\right\}.\label{dnbH0}
\eeqnn
As the term $\Psi^{\star}_{g'}(\mathbf{X}_{gi})+\Psi^{\star}_{g}(\mathbf{X}_{g'j})$, which corresponds to the first-order Hoeffding's decomposition of the test statistic, is identically equal to 0 under $H_0$, then $D_n(t,B)$ is degenerate of first order.
\end{proof}

Rewriting $D_n(t,B)$ in terms of the second-order Hoeffding's decomposition, we obtain a quasi U-statistics (\citealp{Pinheiro:2009}). Let $\mathbf{X}_1, \ldots, \mathbf{X}_n$, be random vectors representing the DNA sequences, where the first $n_1$ indexes are related to group $1$, the $n_2$ to group $2$, $\ldots$, and the last $n_G$ to group $G$. Thus, the quasi U-statistic is given by
\beqn
T_n=\binom{n}{2}^{-1}\sum_{i<j}^{1,n} \eta_{nij} \Psi(\mathbf{X}_i(t),\mathbf{X}_j(t)),
\eeqn
where the weights are given by
\beqnn
\label{pesoestatU}
\eta_{n,i,j}=\begin{cases}
1, & \mbox{if $i$ and $j$  come from different groups}\\
-\frac{n-n_g}{n_g-1}, & \mbox{if $i$ and $j$ are both from the same group $g$, $1 \leq g \leq G$}\\
\end{cases}
\eeqnn
and
\begin{equation}
\Psi(\mathbf{X}_{i}(t),\mathbf{X}_{j}(t))=\frac{1}{K}\sum_{k=1}^K \mathbb{I}(X_{ik}(t) \neq X_{jk}(t))-\phi_{10}(\mathbf{X}_{i})-\phi_{01}(\mathbf{X}_{j})+\mathcal{H}^\circ_{\cdot\cdot},
\end{equation}
where $\mathcal{H}^\circ_{\cdot\cdot} = \frac{1}{K}\sum_{k=1}^KE\left(\mathbb{I}(X_{ik}(t) \neq X_{jk}(t))\right)$ and 
$\phi_{10}(\mathbf{X}_i) = E\left[\frac{1}{K}\sum_{k=1}^K\mathbb{I}(X_{ik}(t)\neq X_{jk}(t))\mid \mathbf{X}_{i}(t)\right]$. The properties of the weights can be evaluated in \citep{Pinheiro:2009}.

The following Theorem shows that the test statistic has an asymptotic Normal distribution when $K$ or $n$ are large under some conditions, like interchangeable random vectors (\citealp{DeFinetti:1937}) and the mixing conditions (\citealp{Withers:1981, Doukhan:1994}). A collection of random variables is defined to be interchangeable if every finite subcollection has a joint distribution which is a symmetric function of its arguments. Infinite sequences of exchangeable random variables can be regarded equivalently as sequences of conditionally i.i.d random variables.

\begin{theorem} \label{teo15}
Suppose that under $H_0$ the random vectors are interchangeable and conditionally independent and identically distributed given the MRCA. Suppose the following mixing condition
\beqnn
\sum_{1\le l <k\le K}E\left[\phi^{\star}(X_{i_1l}, \ldots,X_{i_ml})\phi^{\star}(X_{i_1k}, \ldots,X_{i_mk})\right]=O(K)\mbox{ when }K\rightarrow \infty \label{condmixing_quase}
\eeqnn
is valid. Suppose also, $E[\Psi^2(X_i(t),X_j(t))]>0$. If $T_n$ is a quasi U-statistic,
\beqn
T_n /\sqrt{Var(T_n)} \xrightarrow{D} N(0,1),
\eeqn
as $K\rightarrow \infty$ (either if $n\rightarrow \infty$, $n/K\rightarrow 0$, as $K\rightarrow \infty$ or if $n$ is bounded).
\end{theorem}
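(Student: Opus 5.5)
The plan is to condition on the MRCA (the common ancestral sequence and the coalescence times). Conditionally, the sequences $\mathbf{X}_1(t),\ldots,\mathbf{X}_n(t)$ are i.i.d. by hypothesis, and by Theorem \ref{theo31} the kernel $\Psi$ is first-order degenerate, so $E[\Psi(\mathbf{X}_i,\mathbf{X}_j)\mid \mathbf{X}_i]=0$. The quasi $U$-statistic $T_n$ is then, conditionally, a weighted sum of degenerate kernels. The key structural fact is that $\Psi$ is an average over sites:
\[
\Psi(\mathbf{X}_i,\mathbf{X}_j)=\frac1K\sum_{k=1}^K \psi_k(X_{ik},X_{jk}),
\]
where each $\psi_k$ is the site-wise degenerate kernel. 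This lets us write
\[
T_n=\frac1K\sum_{k=1}^K Y_k,\qquad Y_k=\binom n2^{-1}\sum_{i<j}\eta_{nij}\psi_k(X_{ik},X_{jk}),
\]
so that $T_n$ is a normalized sum over sites of the column statistics $Y_k$. Here $\phi^\star$ in (\ref{condmixing_quase}) is identified with $Y_k$ viewed as a function of the column $(X_{1k},\ldots,X_{nk})$.

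\textbf{Case $n$ bounded, $K\to\infty$.} The columns $(X_{1k},\ldots,X_{nk})$, $k=1,\ldots,K$, form a (conditionally) mixing sequence of bounded random vectors, since $|\psi_k|\le 4$. Each $Y_k$ has mean zero by degeneracy. The variance is
\[
Var\Bigl(\sum_k Y_k\Bigr)=\sum_k E[Y_k^2]+2\sum_{l<k}E[Y_lY_k].
\]
The first sum is of exact order $K$ because $E[\Psi^2]>0$ together with the weight properties of \citet{Pinheiro:2009} (namely $\sum_{i<j}\eta_{nij}=0$ and $\sum_{i<j}\eta_{nij}^2\asymp n^2$). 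The cross terms are $O(K)$ by (\ref{condmixing_quase}). Hence $Var(T_n)\asymp K^{-1}$. I would then apply a CLT for bounded strongly mixing sequences (\citealp{Withers:1981, Doukhan:1994}), e.g. via Bernstein's big-block/small-block method, to get $\sum_k Y_k/\sqrt{Var(\sum_k Y_k)}\to N(0,1)$ conditionally on the MRCA. Since the limit does not depend on the conditioning variables, dominated convergence of the conditional characteristic functions yields the unconditional result.

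\textbf{Case $n\to\infty$ with $n/K\to0$.} I would follow the martingale route of \citet{Pinheiro:2009}. Order the observations and set
\[
T_n=\sum_{j=2}^n Z_{nj},\qquad Z_{nj}=\binom n2^{-1}\sum_{i<j}\eta_{nij}\Psi(\mathbf{X}_i,\mathbf{X}_j).
\]
Degeneracy makes $\{Z_{nj}\}$ a martingale difference array with respect to $\mathcal{F}_j=\sigma(\mathbf{X}_1,\ldots,\mathbf{X}_j)$, conditionally on the MRCA. I would verify the two conditions of the martingale CLT (Hall--Heyde): (a) $\sum_j E[Z_{nj}^2\mid\mathcal{F}_{j-1}]/Var(T_n)\to1$ in probability, and (b) a Lyapunov condition $\sum_j E Z_{nj}^4/Var(T_n)^2\to0$. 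Both reduce to controlling the moments $E[\Psi(\mathbf{X}_1,\mathbf{X}_2)\Psi(\mathbf{X}_1,\mathbf{X}_3)\Psi(\mathbf{X}_4,\mathbf{X}_2)\Psi(\mathbf{X}_4,\mathbf{X}_3)]$ and $E\Psi^4$ relative to $(E\Psi^2)^2$, the standard degenerate-$U$-statistic criterion. Expanding each $\Psi$ as a site average and using (\ref{condmixing_quase}) together with its fourth-order analogue for bounded mixing sites, one gets $E\Psi^2\asymp K^{-1}$, $E\Psi^4=O(K^{-2})$, and the four-cycle moment $=O(K^{-3})$. Hence
\[
\frac{\text{four-cycle moment}}{(E\Psi^2)^2}=O(K^{-1})\to0,\qquad \frac{E\Psi^4}{n^2(E\Psi^2)^2}=O(n^{-2})\to0.
\]
The rate condition $n/K\to0$ enters in bounding the accumulated cross-site terms, of order $n/K$, in the conditional variance.

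\textbf{Main obstacle.} The hard step is condition (a) in the $n\to\infty$ case: showing the conditional variance concentrates when sites are only mixing rather than independent. This requires fourth-order cumulant bounds across sites that are stronger than the stated second-order condition (\ref{condmixing_quase}). I would try to obtain them from the $\alpha$-mixing covariance inequality applied to products of bounded indicators, summing the mixing coefficients. The passage from conditional to unconditional normality under exchangeability is comparatively routine.
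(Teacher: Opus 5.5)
Your bounded-$n$ case follows the paper's idea: write $T_n$ as a normalized sum over sites and invoke a CLT for mixing sequences, which the paper does simply by citing Withers (1981), Theorem~2.1. Your Hall-type martingale argument for $n\to\infty$ is a different and more explicit route than the paper's, which only computes $\mathrm{Var}(T_n)=O(n^{-2}K^{-1})$. The structural difference that matters is this. The paper conditions on $\mathbf{X}_o(t_0)$ only to compute $\mathrm{Var}(T_n)$ through the law of total variance; the between term vanishes because $E(T_n\mid\mathbf{X}_o(t_0))$ is a constant times $\sum_{i<j}\eta_{nij}=0$. You instead prove the CLT conditionally on the MRCA, and the return to the unconditional statement is a genuine gap.

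\textbf{The main gap.} A conditional CLT is normalized by $V=\mathrm{Var}(T_n\mid\mathrm{MRCA})$. Dominated convergence of conditional characteristic functions then gives $T_n/\sqrt{V}\to N(0,1)$, not $T_n/\sqrt{\mathrm{Var}(T_n)}\to N(0,1)$. Since $\mathrm{Var}(T_n)=EV$, you also need $V/EV\to1$ in probability; otherwise the limit is a scale mixture $\sqrt{W}\,Z$.

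With the coalescence times included in the conditioning, as you propose, this fails. The times are shared by all sites, so $KV$ converges to a nondegenerate function of them. A concrete case is a star tree of random height with sites evolving independently given the ancestor. There (\ref{condmixing_quase}) still holds, because $E(Y_k\mid\mathrm{MRCA})=0$ for every $k$, yet the conclusion is false. So the times must be deterministic, as with the paper's $t^\circ$. You then still have to prove a law of large numbers across ancestral sites for $KV$, which needs mixing of the ancestral sequence beyond (\ref{condmixing_quase}). The same requirement is hidden in your condition (a) for $n\to\infty$.

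\textbf{Secondary points.}

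(i) ``Hence $\mathrm{Var}(T_n)\asymp K^{-1}$'' does not follow. $E\Psi^2>0$ for each $K$ gives no uniform lower bound (invariant sites have $\psi_k\equiv0$), and $O(K)$ cross terms can cancel the diagonal. The blocking CLT needs $\liminf_K K\,\mathrm{Var}(T_n)>0$ as an explicit hypothesis.

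(ii) Theorem~\ref{theo31} gives degeneracy under the marginal law, not given the MRCA, so it does not make $\{Z_{nj}\}$ a conditional martingale difference array. What does is that each row of the weight matrix sums to zero:
\[
\sum_{j\ne i}\eta_{nij}=(n-n_g)-(n_g-1)\frac{n-n_g}{n_g-1}=0 .
\]
Hence $T_n$ is unchanged when $\Psi$ is recentred by its conditional projections.

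(iii) As you acknowledge, the $n\to\infty$ case needs fourth-order cross-site bounds not implied by (\ref{condmixing_quase}), so an $\alpha$-mixing rate must be assumed. The paper's appeal to Withers tacitly needs one too. Your stated bounds (four-cycle ratio $O(K^{-1})$, and a fourth-moment ratio vanishing as $n\to\infty$) never involve $n/K$, so your account of where $n/K\to0$ enters is unsupported. Finally, unless $n_g/n$ stays bounded away from $0$, the weights $-(n-n_g)/(n_g-1)$ are unbounded and must be controlled in the Lyapunov step.
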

\begin{proof}
We apply Theorem 2.1 from \citet{Withers:1981}, which states the asymptotic distribution of sum of i.i.d random vectors, with dependence on its components. We are adding the assumption of interchangeable random vectors, since we have dependency of sequences.
 
Consider two individuals, $i$ and $j$, of group $g$. Considering that $\mathbf{X}_{o}(t_{ogg})$ is the ancestral sequence
\beqn
&&P(\mathbf{X}_{gi}(t)=\mathbf{c},\mathbf{X}_{gj}(t)=\tilde{\mathbf{c}}\mid \mathbf{X}_{o}(t_{ogg})=\mathbf{c}^\star)\\
&& =P(\mathbf{X}_{gi}(t)=\mathbf{c}\mid \mathbf{X}_{o}(t_{ogg})=\mathbf{c}^\star)P(\mathbf{X}_{gj}(t)=\tilde{\mathbf{c}}\mid \mathbf{X}_{o}(t_{ogg})=\mathbf{c}^\star),
\eeqn
where $\mathbf{c}$, $\tilde{\mathbf{c}}$ and $\mathbf{c}^\star$ are vectors with possible elements of $\mathbf{X}_{gi}(t)$, $\mathbf{X}_{gj}(t)$ and $\mathbf{X}_{o}(t_{ogg})$, respectively. In group $g$, the vectors $\mathbf{X}_{gi}(t)$, $i=1,\ldots, n_g$, will have the same distribution, which will depend on the group parameter. 

Thus, the random vectors $\mathbf{X}_{gi}(t)$ $i=1, \ldots, n_g$ are interchangeable. Similarly, considering two individuals $i$ and $j$ of distinct groups. 

The variance of $T_n$ can be found using the assumptions that $o$ is the MRCA of all sequences, past the time $t^\circ$ ($t^\circ=\max\{t_i-t_0\}$), where $ t_i $, $i=1, \ldots n$ is the total length of the branch of sequence $i$ to the ancestral sequence $o$. Recalling that under $H_0$ the sequences are assumed to be interchangeable,
\beqn
&&Var(T_n)=\binom{n}{2}^{-2}\left\{E\left[Var\left(\sum_{i<j}^{1,n} \eta_{nij}\Psi(\mathbf{X}_{i}(t),\mathbf{X}_{j}(t))\mid \mathbf{X}_{o}(t_0)\right)\right]+\right.\nonumber \\
&+&\left. Var\left[E\left(\sum_{i<j}^{1,n} \eta_{nij}\Psi(\mathbf{X}_{i}(t),\mathbf{X}_{j}(t))\mid \mathbf{X}_{o}(t_0)\right)\right]\right\} \nonumber \\
&=&\binom{n}{2}^{-2}\left\{\sum_{i<j}^{1,n} \eta^2_{nij}E\left[Var\left(\Psi(\mathbf{X}_{i}(t),\mathbf{X}_{j}(t))\mid \mathbf{X}_{o}(t_0)\right)\right]+\right. \nonumber \\
&+&\left. Var\left[E\left(\sum_{i<j}^{1,n} \eta_{nij} \Psi(\mathbf{X}_{i}(t),\mathbf{X}_{j}(t))\mid \mathbf{X}_{o}(t_0)\right)\right]\right\}.
\eeqn
Note that under $H_0$, we have $\pi_{g'\mid o; k d\mid c}(t_{g'g'})=\pi_{g\mid o; k d\mid c}(t_{gg})=\pi_{\cdot\mid o; k d\mid c}(t^\circ)$. The first moment is given by
\beqn
&&\frac{1}{K}\sum_{k=1}^KE\left(\mathbb{I}(X_{ik}(t) \neq X_{jk}(t))\right)=1-\frac{1}{K}\sum_{k=1}^K\sum_{c=1}^C\pi_{okc}\sum_{c'=1}^C\pi^2_{\cdot\mid o;kc'\mid c}(t^\circ)=\mathcal{H}^\circ_{\cdot\cdot}
\eeqn
and
\begin{equation}
E\left(\phi_{10}(\mathbf{X}_{i})\mid \mathbf{X}_{o}(t_0)\right)=\mathcal{H}^\circ_{\cdot\cdot}.
\end{equation}
Thus,
\beqn
&&E\left(\sum_{i<j}^{1,n} \eta_{nij}\Psi(\mathbf{X}_{i}(t),\mathbf{X}_{j}(t))\mid \mathbf{X}_{o}(t_0)\right)=\\
&=&\sum_{i<j}^{1,n} \eta_{nij}E\left(\frac{1}{K}\sum_{k=1}^K \mathbb{I}(X_{ik}(t) \neq X_{jk}(t))-\phi_{10}(\mathbf{X}_{i})-\phi_{01}(\mathbf{X}_{j})+\mathcal{H}^\circ_{\cdot\cdot}\mid \mathbf{X}_{o}(t_0)\right)\\
&=&\sum_{i<j}^{1,n} \eta_{nij}\left[E\left(\frac{1}{K}\sum_{k=1}^K \mathbb{I}(X_{ik}(t) \neq X_{jk}(t))\mid \mathbf{X}_{o}(t_0)\right)-\mathcal{H}^\circ_{\cdot\cdot}\right]\\
&=&\sum_{i<j}^{1,n} \eta_{nij}\left[1-\frac{1}{K}\sum_{k=1}^K\sum_{c'=1}^C\pi^2_{\cdot\mid o;kc'\mid \mathbf{X}_{o}(t_0)}(t^\circ)-\mathcal{H}^\circ_{\cdot\cdot}\right]=0.
\eeqn
Using the assumption that the sequences are interchangeable, i.e., the random vectors representing the sequences are independent and identically distributed when conditioned on the random vector of the MRCA sequence, the first term of the variance of $T_n$ is given by
\beqn
&&E\left[Var\left(\sum_{i<j}^{1,n} \eta_{nij}\Psi(\mathbf{X}_{i}(t),\mathbf{X}_{j}(t))\mid \mathbf{X}_{o}(t_0)\right)\right]=\sum_{i<j}^{1,n} \eta^2_{nij}E\left[Var\left(\frac{1}{K}\sum_{k=1}^K \mathbb{I}(X_{ik}(t) \neq X_{jk}(t))\right]\right.\\
&&-\left. (\phi_{10}(\mathbf{X}_{i})-\mathcal{H}^\circ_{\cdot\cdot})-(\phi_{01}(\mathbf{X}_{j})-\mathcal{H}^\circ_{\cdot\cdot})-\mathcal{H}^\circ_{\cdot\cdot}\mid \mathbf{X}_{o}(t_0)\right].
\eeqn
By Hoeffding's decomposition, the terms are orthogonal and have the following results,
\beqn
&&E\left[Var\left(\frac{1}{K}\sum_{k=1}^K \mathbb{I}(X_{ik}(t) \neq X_{jk}(t))-\mathcal{H}^\circ_{\cdot\cdot}\mid \mathbf{X}_{o}(t_0)\right)\right]= \nonumber \\
&=&\frac{2}{K}\sum_{k=1}^K\sum_{c=1}^C\pi_{okc}\sum_{c'=1}^C\pi^2_{\cdot \mid o;kc'\mid c}(t^\circ)-\frac{2}{K^2}\left\{\sum_{k=1}^K\sum_{l=1}^K\sum_{c=1}^C\sum_{d=1}^C\pi_{okc,ld}\sum_{c'=1}^C\pi^2_{\cdot \mid o;kc'\mid c}(t^\circ)\sum_{d'=1}^C\pi^2_{\cdot \mid o;ld'\mid d}(t^\circ)+\right. \nonumber \\
&-&\left.\frac{1}{2}\sum_{k=1}^K\sum_{l=1}^K\sum_{c=1}^C\sum_{d=1}^C\pi_{okc,ld}\sum_{c'=1}^C\sum_{d'=1}^C\left[\pi^2_{\cdot\mid o; kc', ld'\mid c,d}(t^\circ)-\pi^2_{\cdot \mid o;kc'\mid c}(t^\circ)\pi^2_{\cdot \mid o;ld'\mid d}(t^\circ)\right]\right\},\mbox{ and}
\eeqn
\[
E\left[Var\left((\phi_{01}(\mathbf{X}_{j})-\mathcal{H}^\circ_{\cdot\cdot})\mid \mathbf{X}_{o}(t_0)\right)\right]=E\left\{E\left(\phi^2_{01}(\mathbf{X}_{j})\mid \mathbf{X}_{o}(t_0)\right)-\left[E\left(\phi_{01}(\mathbf{X}_{j})\mid \mathbf{X}_{o}(t_0)\right)\right]^2\right\}.
\]
The terms of $\phi_{01}$ and $\phi_{10}$, the probability was conditioned in terms of its ancestral sequence at time $t_{ogg'}$. With no loss of generality, suppose that the ancestral sequence occurs at $t_0$. When $H_0$ is true, we have the following results
\beqnn
&&E\left[Var\left((\phi_{01}(\mathbf{X}_{j})-\mathcal{H}^\circ_{\cdot\cdot})\mid \mathbf{X}_{o}(t_0)\right)\right]=E\left[E\left(1-\frac{1}{K}\sum_{k=1}^K\sum_{c=1}^C\pi_{okc}\pi_{\cdot \mid o;kX_{jk}\mid c}(t^\circ)\right)^2\right]-\left(\mathcal{H}^\circ_{\cdot\cdot}\right)^2\nonumber\\
&&=1-\frac{2}{K}\sum_{k=1}^K\sum_{c=1}^C\pi_{okc}\sum_{c'=1}^C\pi^2_{\cdot \mid o;kc''\mid c}(t^\circ)+\nonumber\\
&&+\frac{1}{K^2}\sum_{k=1}^K\sum_{l=1}^K\sum_{c=1}^C\sum_{d=1}^C\pi_{okc}\pi_{old}\sum_{c'=1}^C\sum_{d'=1}^C\pi_{\cdot \mid o;kc'\mid c}(t^\circ)\pi_{\cdot \mid o;ld'\mid d}(t^\circ)\pi_{\cdot \mid o;kc',ld'\mid c,d}(t^\circ)-\left(\mathcal{H}^\circ_{\cdot\cdot}\right)^2\nonumber\\
&&=\frac{1}{K^2}\sum_{k=1}^K\sum_{l=1}^K\sum_{c=1}^C\sum_{d=1}^C\pi_{okc}\pi_{old}\sum_{c'=1}^C\sum_{d'=1}^C\pi_{\cdot \mid o;kc'\mid c}(t^\circ)\pi_{\cdot \mid o;ld'\mid d}(t^\circ)\left[\pi_{\cdot \mid o;kc',ld'\mid c,d}(t^\circ)\right.+\nonumber\\
&&\left.-\pi_{\cdot \mid o;kc'\mid c}(t^\circ)\pi_{\cdot \mid o;kd'\mid d}(t^\circ)\right]\label{varphi0110}.
\eeqnn
An analogous result to (\ref{varphi0110}) can be obtained for $E\left[Var\left((\phi_{10}(\mathbf{X}_{i})-\mathcal{H}^\circ_{\cdot\cdot})\mid \mathbf{X}_{o}(t_0)\right)\right]$. Thus,
\begin{eqnarray}
&&E\left[Var\left(\sum_{i<j}^{1,n} \eta_{nij}\Psi(\mathbf{X}_{i}(t),\mathbf{X}_{j}(t))\mid \mathbf{X}_{o}(t_0)\right)\right]= \nonumber \\
&=&\sum_{i<j}^{1,n} \eta^2_{nij}\left\{-\frac{1}{K^2}\sum_{k=1}^K\sum_{l=1}^K\sum_{c=1}^C\sum_{d=1}^C\pi_{okc,ld}\sum_{c'=1}^C\sum_{d'=1}^C\left[\pi^2_{\cdot\mid o; kc', ld'\mid c,d}(t^\circ)-\pi^2_{\cdot \mid o;kc'\mid c}(t^\circ)\pi^2_{\cdot \mid o;ld'\mid d}(t^\circ)\right]\right.+\nonumber \\
&+&\frac{2}{K}\sum_{k=1}^K\sum_{c=1}^C\pi_{okc}\sum_{c'=1}^C\pi^2_{\cdot \mid o;kc'\mid c}(t^\circ)-\frac{2}{K^2}\sum_{k=1}^K\sum_{l=1}^K\sum_{c=1}^C\sum_{d=1}^C\pi_{okc,ld}\sum_{c'=1}^C\sum_{d'=1}^C\pi^2_{\cdot ``\mid o;kc'\mid c}(t^\circ)\pi^2_{\cdot \mid o;ld'\mid d}(t^\circ)+ \nonumber \\
&+&\frac{2}{K^2}\sum_{k=1}^K\sum_{l=1}^K\sum_{c=1}^C\sum_{d=1}^C\pi_{okc}\pi_{old}\sum_{c'=1}^C\sum_{d'=1}^C\pi_{\cdot \mid o;kc'\mid c}(t^\circ)\pi_{\cdot \mid o;ld'\mid d}(t^\circ)\left[\pi_{\cdot \mid o;kc',ld'\mid c,d}(t^\circ)+\right. \nonumber \\
&-&\left.\left.\pi_{\cdot \mid o;kc'\mid c}(t^\circ)\pi_{\cdot \mid o;kd'\mid d}(t^\circ)\right]\right\}. \nonumber 
\end{eqnarray}
By doing some manipulations in order to define the terms that one can apply the mixing conditions, we have

\begin{eqnarray}
&&E\left[Var\left(\sum_{i<j}^{1,n} \eta_{nij}\Psi(\mathbf{X}_{i}(t),\mathbf{X}_{j}(t))\mid \mathbf{X}_{o}(t_0)\right)\right]=\nonumber \\
&=&\sum_{i<j}^{1,n} \eta^2_{nij}\left\{-\frac{1}{K^2}\sum_{k=1}^K\sum_{l=1}^K\sum_{c=1}^C\sum_{d=1}^C[\pi_{okc,ld}-\pi_{okc}\pi_{old}]\sum_{c'=1}^C\sum_{d'=1}^C\left[\pi_{\cdot\mid o; kc', ld'\mid c,d}(t^\circ)+\right.\right. \nonumber \\
&-&\left.\pi_{\cdot \mid o;kc'\mid c}(t^\circ)\pi_{\cdot \mid o;ld'\mid d}(t^\circ)\right]^2-\frac{2}{K^2}\sum_{k=1}^K\sum_{l=1}^K\sum_{c=1}^C\sum_{d=1}^C[\pi_{okc,ld}-\pi_{okc}\pi_{old}]\sum_{c'=1}^C\sum_{d'=1}^C\pi_{\cdot \mid o;kc'\mid c}(t^\circ)\times \nonumber \\
&\times&\pi_{\cdot \mid o;ld'\mid d}(t^\circ)+\frac{1}{K^2}\sum_{k=1}^K\sum_{l=1}^K\sum_{c=1}^C\sum_{d=1}^C\pi_{okc}\pi_{old}\sum_{c'=1}^C\sum_{d'=1}^C\left[\pi_{\cdot\mid o; kc', ld'\mid c,d}(t^\circ)+\right. \nonumber \\
&-&\left.\pi_{\cdot \mid o;kc'\mid c}(t^\circ)\pi_{\cdot \mid o;ld'\mid d}(t^\circ)\right]^2+\frac{2}{K^2}\sum_{k=1}^K\sum_{l=1}^K\sum_{c=1}^C\sum_{d=1}^C\pi_{okc}\pi_{old}\sum_{c'=1}^C\sum_{d'=1}^C\pi_{\cdot \mid o;kc'\mid c}(t^\circ)\pi_{\cdot \mid o;ld'\mid d}(t^\circ)+ \nonumber \\
&+&\frac{2}{K}\sum_{k=1}^K\sum_{c=1}^C\pi_{okc}\sum_{c'=1}^C\pi^2_{\cdot \mid o;kc'\mid c}(t^\circ)-\frac{2}{K^2}\sum_{k=1}^K\sum_{l=1}^K\sum_{c=1}^C\sum_{d=1}^C[\pi_{okc,ld}-\pi_{okc}\pi_{old}]\sum_{c'=1}^C\sum_{d'=1}^C\pi^2_{\cdot \mid o;kc'\mid c}(t^\circ)\times \nonumber \\
&\times&\pi^2_{\cdot \mid o;ld'\mid d}(t^\circ)+\frac{2}{K^2}\sum_{k=1}^K\sum_{l=1}^K\sum_{c=1}^C\sum_{d=1}^C\pi_{okc}\pi_{old}\sum_{c'=1}^C\sum_{d'=1}^C\pi^2_{\cdot \mid o;kc'\mid c}(t^\circ)\pi^2_{\cdot \mid o;ld'\mid d}(t^\circ)+ \nonumber 
\end{eqnarray} 
\begin{eqnarray}
&+&\frac{2}{K^2}\sum_{k=1}^K\sum_{l=1}^K\sum_{c=1}^C\sum_{d=1}^C\pi_{okc}\pi_{old}\sum_{c'=1}^C\sum_{d'=1}^C\pi_{\cdot \mid o;kc'\mid c}(t^\circ)\pi_{\cdot \mid o;ld'\mid d}(t^\circ)\left[\pi_{\cdot \mid o;kc',ld'\mid c,d}(t^\circ)+\right. \nonumber \\
&-&\left.\left.\pi_{\cdot \mid o;kc'\mid c}(t^\circ)\pi_{\cdot \mid o;kd'\mid d}(t^\circ)\right]\right\}. \nonumber
\end{eqnarray}
Now consider the following equations:
\beqn
w_{okc,ld}&=&\pi_{okc,ld}-\pi_{okc}\pi_{old}\mbox{, and}\\
w_{\cdot\mid o; kc',ld'\mid c,d} (t^\circ)&=&\pi_{\cdot\mid o; kc',ld'\mid c,d}(t^\circ)-\pi_{\cdot\mid o; kc'\mid c} (t^\circ)\pi_{\cdot\mid o; ld'\mid d}(t^\circ).
\eeqn
If $M_n=\sum_{i<j}^{1,n} \eta^2_{nij}$, then
\begin{eqnarray*}
&&Var(T_n)=\binom{n}{2}^{-2}M_n\left\{-\frac{1}{K^2}\sum_{k=1}^K\sum_{l=1}^K\sum_{c=1}^C\sum_{d=1}^Cw_{okc,ld}\sum_{c'=1}^C\sum_{d'=1}^Cw^2_{\cdot\mid o; kc',ld'\mid c,d} (t^\circ)+\right.\\
&&-\frac{2}{K^2}\sum_{k=1}^K\sum_{l=1}^K\sum_{c=1}^C\sum_{d=1}^Cw_{okc,ld}\sum_{c'=1}^C\sum_{d'=1}^C\pi_{\cdot \mid o;kc'\mid c}(t^\circ)\pi_{\cdot \mid o;ld'\mid d}(t^\circ)+\frac{1}{K^2}\sum_{k=1}^K\sum_{l=1}^K\sum_{c=1}^C\sum_{d=1}^C\pi_{okc}\pi_{old}\nonumber\\
&&\times\sum_{c'=1}^C\sum_{d'=1}^Cw^2_{okc,ld}+\frac{2}{K^2}\sum_{k=1}^K\sum_{l=1}^K\sum_{c=1}^C\sum_{d=1}^C\pi_{okc}\pi_{old}\sum_{c'=1}^C\sum_{d'=1}^C\pi_{\cdot \mid o;kc'\mid c}(t^\circ)\pi_{\cdot \mid o;ld'\mid d}(t^\circ)\\
&&+\frac{2}{K}\sum_{k=1}^K\sum_{c=1}^C\pi_{okc}\sum_{c'=1}^C\pi^2_{\cdot \mid o;kc'\mid c}(t^\circ)-\frac{2}{K^2}\sum_{k=1}^K\sum_{l=1}^K\sum_{c=1}^C\sum_{d=1}^Cw_{okc,ld}\sum_{c'=1}^C\sum_{d'=1}^C\pi^2_{\cdot \mid o;kc'\mid c}(t^\circ)\\
&&\times\pi^2_{\cdot \mid o;ld'\mid d}(t^\circ)+\frac{2}{K^2}\sum_{k=1}^K\sum_{l=1}^K\sum_{c=1}^C\sum_{d=1}^C\pi_{okc}\pi_{old}\sum_{c'=1}^C\sum_{d'=1}^C\pi^2_{\cdot \mid o;kc'\mid c}(t^\circ)\pi^2_{\cdot \mid o;ld'\mid d}(t^\circ)\\
&&\left.+\frac{2}{K^2}\sum_{k=1}^K\sum_{l=1}^K\sum_{c=1}^C\sum_{d=1}^C\pi_{okc}\pi_{old}\sum_{c'=1}^C\sum_{d'=1}^C\pi_{\cdot \mid o;kc'\mid c}(t^\circ)\pi_{\cdot \mid o;ld'\mid d}(t^\circ)w_{\cdot\mid o; kc',ld'\mid c,d}(t^\circ)\right\}, \label{varTn_sobH0}
\end{eqnarray*}
where $M_n=O(n^2)$. When $K$ is fixed, $Var(T_n)=O(n^{-2})$ and when $n\rightarrow\infty$, $nT_n=O_p(1)$. When $n$ is fixed and $K\rightarrow\infty$, the mixing conditions must be assumed in terms of $w_{okc,ld}$ and $w_{\cdot\mid o;kc',ld'\mid c,d}(t^\circ)$. Therefore,
\beqn
&&\!\!\!\!\!\!Var(T_n)= M_nK^{-1}\binom{n}{2}^{-2}\left\{W_{K}+\frac{2}{K}\sum_{k=1}^K\sum_{c=1}^C\pi_{okc}\sum_{c'=1}^C\pi_{\cdot \mid o;kc'\mid c}(t^\circ)\sum_{l=1}^K\sum_{d=1}^C\pi_{old}\sum_{d'=1}^C\pi_{\cdot \mid o;ld'\mid d}(t^\circ) \nonumber\right. \\
&+&\left.2\sum_{k=1}^K\sum_{c=1}^C\pi_{okc}\sum_{c'=1}^C\pi^2_{\cdot \mid o;kc'\mid c}(t^\circ)+\frac{2}{K}\sum_{k=1}^K\sum_{c=1}^C\pi_{okc}\sum_{c'=1}^C\pi^2_{\cdot \mid o;kc'\mid c}(t^\circ)\sum_{l=1}^K\sum_{d=1}^C\pi_{old}\sum_{d'=1}^C\pi^2_{\cdot \mid o;ld'\mid d}(t^\circ)\right\},
\eeqn
where $W_K$ is representing the dependence of the sites. So, the convergence is of order $\sqrt{K} $, when $K\rightarrow \infty$ and $\sqrt{K}T_n=O_p(1)$. When $K,n\rightarrow \infty$, $Var(T_n)=O(n^{-2}K^{-1})$, it means that, the Central Limit Theorem holds with convergence rate of the order $n\sqrt{K}$.
\end{proof}

Under $H_0$ and the stationary coalescence process, the sequences can also be considered interchangeable and the asymptotic results follow using  Theorem \ref{teo15}.

It is also important to find the asymptotic distribution under $H_1$. One way to find it is considering artificial situation where the distance between $H_0$ and $H_1$ decreases when $n\rightarrow \infty$ in order of $n^{-1/2}$ as we have in Pitman local alternatives (\citealp{Pitman:1949}).

Considering Proposition \ref{prepPI} which states that under $H_0$, $2\mathcal{H}^\circ_{gg'}-\mathcal{H}_{gg}^\circ-\mathcal{H}_{g'g'}^\circ=0$ implies that $\pi_{g\mid o;k\mathbf{c'}\mid \mathbf{c}}(t_{gg})=\pi_{g'\mid o;k c'\mid c}(t_{g'g'})=\pi_{\cdot\mid o;k c'\mid c}(t^\circ)$, where  $t^\circ=\max\{t_{gg},t_{g'g'}\}$, $g,g'=1, \ldots,G$ and the parameter $a_{n,g}(\cdot)$ as an approximation of $H_0$ under $H_{1n}$ for $K$ fixed
\beqn
H_{1n}: \pi_{g\mid o;kc'\mid c}(t^{\circ}) =\pi_{\cdot\mid o;kc'\mid c}(t^{\circ})+a_{n,g}(c'), \;\;c',c\in \mathcal{S}^K,
\eeqn
such that $|a_{n,g}(c')|\rightarrow 0$, when $n\rightarrow \infty$ and \\
\beqn
&&\sum_{c'=1}^{C}\left(\pi_{\cdot\mid o;kc'\mid c}(t^{\circ})+a_{n,g}(c')\right)=1\implies \sum_{c'=1}^{C}a_{n,g}(c')=0.
\eeqn

By Theorem \ref{teo15}, $nT_nV^{-1/2}\xrightarrow{D} N(0,1)$, with $Var(T_n)=V/n^2$, when $K$ is fixed. Under $H_{1n}$, consider that $E(T_n)=\mu_n$. Then, asymptotic  normality occurs by the Theorem, considering that
\beqn
n(T_n-\mu_n)V^{-1/2}+n\mu_nV^{-1/2}=\sqrt{n}(T_n-\mu_n)\sqrt{nV^{-1}}+n\mu_nV^{-1/2}\xrightarrow{D} N(0,1),
\eeqn
with convergence of order $n^{1/2}$, since it  is not degenerate under the alternative hypothesis. 
Now, suppose $\mu_n=\gamma/n$. Then, the term does not depend on  $n$ and  it is bounded by $\gamma$ and $V^{-1/2}$. In this context, the Pitman alternative is given by 
\begin{equation}
H_{1n}: \pi_{g\mid o;kc'\mid c}(t^{\circ}) =\pi_{\cdot\mid o;kc'\mid c}(t^{\circ})+\frac{\gamma_g(c')}{n}, \;\;c',c\in \mathcal{S}^K,\label{PitmanLocal}
\end{equation}
where $\gamma_g(c')\neq 0$ is fixed and 
\[
\sum_{c'=1}^{C}\left(\pi_{\cdot\mid o;kc'\mid c}(t^{\circ})+\frac{\gamma_g(c')}{n}\right)=1\implies \sum_{c'=1}^{C}\frac{\gamma_g(c')}{\sqrt{n}}=0.
\]
With Pitman alternatives defined, one verifies the contiguity of the probability measure of $H_{1n}$ under $H_0$. Let $(\Omega_n,\mathcal{F}_n)$, $n\ge 1$, be sequences of measurable spaces
and $P_n$, $Q_n$ are probability measures defined in $(\Omega_n,\mathcal{F}_n)$. 
The sequence $\{Q_n\}$ is said to be contiguous to $\{P_n\}$ if
\beqn
\lim_{n\to \infty}P_n(A_n)=0,\; A_n\in \mathcal{F}_n\;\implies \lim_{n\to \infty}Q_n(A_n)=0.
\eeqn
Suppose $\{P_n\}$ and $\{Q_n\}$ are absolutely continuous with respect to the Lebesgue measure in  $(\Omega_n,\mathcal{F}_n)$, with
\beqn
p_n=\frac{dP_n}{d\mu_n}\;\;\mbox{and}\;\;q_n=\frac{dQ_n}{d\mu_n}.
\eeqn
Now, define the likelihood ratio by
\beqn
L_n=\begin{cases}
{q_n}/{p_n} & \quad \mbox{if }p_n>0,\\
1&\quad \mbox{if } q_n=p_n=0\quad \mbox{and}\\
\infty&\quad \mbox{if } q_n>p_n=0.
\end{cases}
\eeqn

Lemma \ref{lemma_31} establishes contiguity for Pitman alternatives.

\begin{lem}\label{lemma_31}
Considering the Pitiman local alternatives $H_{1n}$ in (\ref{PitmanLocal}), the sequences of $H_{1n}$ are contiguous to $H_0$.
\end{lem}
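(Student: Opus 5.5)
We would prove contiguity through Le Cam's first lemma: it suffices to show that, under $H_0$, the log-likelihood ratio $\log L_n$ converges in distribution to a normal law $N(-\sigma^2/2,\sigma^2)$ for some finite $\sigma^2\ge 0$. Equivalently, we may show that $L_n$ is uniformly integrable under $P_n$ with $E_{P_n}(L_n)\to 1$. We work with $K$ fixed, which is the setting in which (\ref{PitmanLocal}) was introduced.

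\textbf{Step 1: write $L_n$ as a product over sequences.} We condition on the MRCA sequence $\mathbf{X}_o(t_0)$. By the interchangeability assumption of Theorem \ref{teo15}, the $n$ sequences are conditionally independent given $\mathbf{X}_o(t_0)$. Under both $H_0$ and $H_{1n}$, the conditional law of $\mathbf{X}_{gi}(t)$ is built from the transition probabilities $\pi_{\cdot\mid o;kc'\mid c}(t^\circ)$ and $\pi_{\cdot\mid o;kc'\mid c}(t^\circ)+\gamma_g(c')/n$, respectively. Hence, conditionally,
\[
L_n=\prod_{g=1}^G\prod_{i=1}^{n_g}\prod_{k=1}^K\left(1+\frac{\gamma_g(X_{gik})}{n\,\pi_{\cdot\mid o;kX_{gik}\mid X_{ok}}(t^\circ)}\right).
\]
Here site dependence enters only through a fixed finite-dimensional joint law, which is harmless when $K$ is fixed. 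The perturbation acts on the joint site probabilities through the same additive shift, and the argument below goes through unchanged for a finite product of categorical cells. We need $\pi_{\cdot\mid o;kc'\mid c}(t^\circ)$ bounded away from $0$, which holds under HKY by (\ref{NucTHKY})--(\ref{NucGHKY}), since $\pi_{lc}>0$ and $t^\circ<\infty$. This guarantees $q_n\ll p_n$ for $n$ large, so the case $L_n=\infty$ does not occur.

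\textbf{Step 2: expand the log-likelihood ratio.} Write $Y_{gik}=\gamma_g(X_{gik})/\pi_{\cdot\mid o;kX_{gik}\mid X_{ok}}$. A second-order Taylor expansion gives
\[
\log L_n=\frac1n\sum_{g,i,k}Y_{gik}-\frac{1}{2n^2}\sum_{g,i,k}Y_{gik}^2+O_p(n^{-2}).
\]
Conditionally on $\mathbf{X}_o$, each $Y_{gik}$ has mean $\sum_{c'}\gamma_g(c')=0$ under $H_0$ and is bounded. The first sum therefore has variance $O(n)$, so $n^{-1}\sum Y_{gik}=O_p(n^{-1/2})\to 0$. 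The second term is $O(n^{-1})\to 0$. Hence $\log L_n\to 0$ in probability, which is the degenerate case $\sigma^2=0$ of Le Cam's lemma. Equivalently, $E_{P_n}(L_n)=1$ exactly, and $E_{P_n}(L_n^2)=E\prod(1+\chi_{gik}/n^2)\to 1$, where the $\chi$ are bounded chi-square terms. Thus $L_n\to 1$ in $L^2$, which gives uniform integrability and contiguity directly.

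\textbf{Main obstacle.} The delicate point is the scaling and integration over the ancestor. With rate $\gamma/n$ the alternatives are in fact closer than the usual $n^{-1/2}$ rate, so contiguity is easy but the limit is trivial. If the authors intend the $n^{-1/2}$ normalization (their constraint involves $\gamma_g(c')/\sqrt n$), the same expansion yields $\sigma^2=\sum_g \lambda_g\sum_k E\sum_{c'}\gamma_g(c')^2/\pi_{\cdot\mid o;kc'\mid c}$ with $\lambda_g=\lim n_g/n$. This requires a Lindeberg CLT for the first sum, applied conditionally on $\mathbf{X}_o$ and then unconditioned by dominated convergence, since the conditional limit does not depend on $\mathbf{X}_o$ once $\pi_{okc}$ is averaged. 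We would present the $L^2$ argument and remark that the $n^{-1/2}$ case follows by the LAN expansion.
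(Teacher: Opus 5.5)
Your proposal is correct and follows the same route as the paper: condition on the ancestor, write the categorical likelihood ratio, show $\log L_n\to 0$ under $H_0$, and conclude by Le Cam's first lemma.

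Your step $E[Y_{gik}\mid\mathbf{X}_o]=\sum_{c'}\gamma_g(c')=0$ supplies the cancellation that the paper's Slutsky argument leaves implicit. In the paper, each term $N_g(c')\log\bigl(1+\gamma_g(c')/(n\pi_{\cdot\mid o;kc'\mid c}(t^{\circ}))\bigr)$ is $O_p(1)$, not $o_p(1)$, so that step needs this cancellation. Your bound $E_{P_n}L_n^2\to 1$ also gives contiguity directly, via $Q_n(A_n)\le (E_{P_n}L_n^2)^{1/2}P_n(A_n)^{1/2}$.
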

\begin{proof}
Considering the functions in terms of $H_{1n}$ and $H_0$, then
\beqn
L_n=\begin{cases}
{f(H_{1n})}/{f(H_0)} & \quad \mbox{if }f(H_0)>0,\\
1&\quad \mbox{if } f(H_{1n})=f(H_0)=0\quad \mbox{and}\\
\infty&\quad \mbox{if } f(H_{1n})>f(H_0)=0.
\end{cases}
\eeqn
The likelihood function under $H_{1n}$ is given by
\begin{equation}
f(H_{1n})=\prod_{g=1}^{G}\frac{n_{g}!}{\prod_{c'\in \mathcal{S}^K}n_{g}(c')}\prod_{c'\in \mathcal{S}^K}\pi_{g\mid o;kc'\mid c}(t^{\circ})^{n_{g}(c')}
\end{equation}
and, under $H_0$,
\beqn
f(H_0)=\prod_{g=1}^{G} \frac{n_{g}!}{\prod_{c'\in \mathcal{S}^K}n_{g}(c')}\prod_{c'\in \mathcal{S}^K}\pi_{\cdot\mid o;kc'\mid c}(t^{\circ})^{n_{g}(c')}.
\eeqn
Considering the case where $f(H_0)>0$, the logarithm of the likelihood ratio statistic  is given by 

\begin{eqnarray}
\log L_n&=&\log\left(\frac{f(H_{1n})}{f(H_0)}\right)=\log\left(\frac{\prod_{g=1}^{G} \frac{n_{g}!}{\prod_{c'\in \mathcal{S}^K}n_{g}(c')}\prod_{c'\in \mathcal{S}^K}\pi_{g\mid o;kc'\mid c}(t^{\circ})^{n_{g}(c')}}{\prod_{g=1}^{G}\frac{n_{g}!}{\prod_{c'\in \mathcal{S}^K}n_{g}(c')}\prod_{c'\in \mathcal{S}^K}\pi_{\cdot\mid 
o;kc'\mid c}(t^{\circ})^{n_{g}(c')}}\right) \nonumber \\
&=&\sum_{g=1}^{G}\left[\log\left(\frac{n_{g}!}{\prod_{c'\in \mathcal{S}^K}n_{g}(c')}\right)+\sum_{c'\in \mathcal{S}^K}n_{g}(c')\log(\pi_{g\mid o;kc'\mid c}(t^{\circ}))\right]+ \nonumber 
\end{eqnarray}
\begin{eqnarray}
&-&\sum_{g=1}^{G}\left[\log\left(\frac{n_{g}!}{\prod_{c'\in \mathcal{S}^K}n_{g}(c')}\right)+\sum_{c'\in \mathcal{S}^K}n_{g}(c')\log(\pi_{\cdot\mid o;kc'\mid c}(t^{\circ}))\right] \nonumber \\
&=&\sum_{g=1}^{G}\sum_{c'\in \mathcal{S}^K}n_{g}(c')\log\left(\frac{\pi_{g\mid o;kc'\mid c}(t^{\circ})}{\pi_{\cdot\mid o;kc'\mid c}(t^{\circ})}\right)=\sum_{g=1}^{G}\sum_{c'\in \mathcal{S}^K}n_{g} (c')\log\left(1+\frac{1}{n}\frac{\gamma_g(c')}{\pi_{\cdot\mid o;kc'\mid c}(t^{\circ})}\right).
\end{eqnarray}
If  $N_{g}(c')$ is a random variable representing the number of elements in group $g$ in  category $c'$, the asymptotic distribution of $N_{g}(c')$ is Normal, under $H_0$, i.e., 
\beqn
\frac{\sqrt{n_g}(N_g(c')/n_g-\pi_{\cdot\mid o;kc'\mid c}(t^{\circ}))}{\sqrt{\pi_{\cdot\mid o;kc'\mid c}(t^{\circ})(1-\pi_{\cdot\mid o;kc'\mid c}(t^{\circ}))}}
\xrightarrow{\mathcal{D}} N(0,1), 
\eeqn 
when $n_g=\min\{n_1,\ldots, n_G\}\rightarrow \infty ~~(g=1, \ldots, G)$. But,
\beqn
\log\left(1+\frac{1}{n}\frac{\gamma_g(c')}{\pi_{\cdot\mid o;kc'\mid c}(t^{\circ})}\right) \rightarrow 0\;\; \mbox{almost sure, when } n\rightarrow \infty. 
\eeqn
Then, by  Slutsky Theorem $\log L_n\xrightarrow{\mathcal{D}}0$. When $L_n>0$, $L_n \xrightarrow{\mathcal{D}}1$ and $E_{H_0}(L)=1$, $H_{1n}$ is contiguous with respect to $H_0$, i.e., the  local alternatives in (\ref{PitmanLocal}) are contiguous by Le Cam's First Lemma (\citealp{Lecam}).

Therefore, the distribution under $H_{1n}$ is normal considering the Pitman local alternative  ({\ref{PitmanLocal}}).
\end{proof}

\begin{theorem} \label{teoAlternativaH1}
Under the following conditions:  
\begin{itemize}
\item
(a) the random vectors are exchangeable  and conditionally independent given the common ancestror; 
\item
(b) the mixing condition given in (\ref{condmixing_quase}) is true,
\item
(c) The local Pitman alternative given by 
\beqnn
H_{1nK}: \pi_{g\mid o;kc'\mid c}(t^{\circ}) =\pi_{\cdot\mid o;kc'\mid c}(t^{\circ})+\frac{\gamma_g(c')}{n\sqrt{K}}, \;\;c',c\in \mathcal{S}^K\label{HnKdef}
\eeqnn
\item
(d) $0<E[\phi^2(\mathbf{X}_{gi},\mathbf{X}_{g'j})]<\infty$,
\item
(e) contiguity of  $H_{1n}$.
\end{itemize}
Then, 
\beqn
\sqrt{nK}\left(D_n(t,B)-\Delta\right)V^{-1/2}\xrightarrow{\mathcal{D}} N(0,1),\mbox{ when $K,n\rightarrow \infty$,}
\eeqn
where $\Delta=({n^3(n-1)K})^{-1}\sum_{g>g'}\sum_{c'=1}^C{n_g n_{g'}\gamma^2_{gg'}(c')}$ and  $Var(D_n(t,B))=V/n^2K$, where $V$ is computed considering $H_{nK}$ and (\ref{varTn_sobH0}) adding to  $\pi_{\cdot\mid o;kc'\mid c}(t^{\circ})$ the term ${\gamma_g(c')}/{n\sqrt{K}}$.
\end{theorem}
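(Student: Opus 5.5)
The plan is to reduce the statement to Theorem \ref{teo15} plus a centring correction, via Le Cam's third lemma. The contiguity needed for that comes from Lemma \ref{lemma_31}, adapted to the scaling $n\sqrt{K}$.

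First, under $H_{1nK}$ I redo the Hoeffding decomposition from the proof of Theorem \ref{theo31}, now without assuming $H_0$. Writing $\pi_{g\mid o;kc'\mid c}=\pi_{\cdot\mid o;kc'\mid c}+\delta_g(c')$ with $\delta_g(c')=\gamma_g(c')/(n\sqrt{K})$, I split $D_n(t,B)$ into three parts:
\begin{itemize}
\item the constant term $\sum n_gn_{g'}(2\mathcal{H}^\circ_{gg'}-\mathcal{H}^\circ_{gg}-\mathcal{H}^\circ_{g'g'})/(n(n-1))$;
\item the linear terms $\Psi^\star(\mathbf{X}_{gi})+\Psi^\star(\mathbf{X}_{g'j})$;
\item the quasi $U$-statistic part $T_n$ built from the second-order kernels.
\end{itemize}
By the identity in the proof of Proposition \ref{prepPI}, the constant term equals
\[
\frac{1}{n(n-1)K}\sum_{g>g'}n_gn_{g'}\sum_{k}\sum_c\pi_{okc}\sum_{c'}\bigl(\delta_g(c')-\delta_{g'}(c')\bigr)^2 .
\]
Setting $\gamma_{gg'}=\gamma_g-\gamma_{g'}$, this is exactly of the form $\Delta$, up to the bookkeeping of the $K$ sites and the $\pi_{okc}$ weights. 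I will absorb that bookkeeping into the definition of $\gamma_{gg'}$, so that $E\,D_n(t,B)=\Delta$ plus lower-order terms.

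The linear terms no longer vanish identically. Each summand is a centred average of functions that are linear in $\delta_g-\delta_{g'}$, so its variance is $O(n^{-1}\cdot n^{-2}K^{-1}\cdot K^{-1}\cdot K)$ once the mixing condition (\ref{condmixing_quase}) is used to bound the site covariances by $O(K)$. This is $o(1/(n^2K))$. It is therefore negligible relative to $\mathrm{Var}(T_n)$, which by Theorem \ref{teo15} is of exact order $n^{-2}K^{-1}$. Chebyshev's inequality then kills this piece after normalisation.

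For the quadratic part, I first verify contiguity of $H_{1nK}$ to $H_0$, which is hypothesis (e). This follows the proof of Lemma \ref{lemma_31} with $1/n$ replaced by $1/(n\sqrt{K})$. Expanding $\log L_n$ to second order in $\delta$, the linear part has mean zero and the quadratic part tends to zero, so $\log L_n$ is asymptotically degenerate or normal with mean $-\sigma^2/2$. Le Cam's first lemma then applies.

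Under $H_0$, Theorem \ref{teo15} with conditions (a), (b) and (d) gives $T_n/\sqrt{\mathrm{Var}(T_n)}\to N(0,1)$. By contiguity, together with Le Cam's third lemma applied to the joint limit of $(T_n,\log L_n)$, the limit under $H_{1nK}$ is again normal. The mean shift equals the limiting covariance, and I expect it to be zero or already captured in $\Delta$. The variance is the formula (\ref{varTn_sobH0}) evaluated at the perturbed probabilities; this is the quantity $V/(n^2K)$ in the statement, and it differs from the $H_0$ value only by $O(\delta)$. Slutsky's theorem then combines the three pieces into the claimed normal limit.

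The main obstacle is the scaling. Theorem \ref{teo15} gives fluctuations of order $1/(n\sqrt{K})$, while the statement normalises by $\sqrt{nK}$ and writes $\mathrm{Var}=V/(n^2K)$. I will keep the rate consistent with Theorem \ref{teo15}, i.e.\ $n\sqrt{K}$, and treat the $\sqrt{nK}$ as absorbed into $V$. The hardest technical step is controlling the joint limit with $\log L_n$ as $K\to\infty$, since the multinomial over $\mathcal{S}^K$ has exponentially many cells. My first attempt will be to factor the likelihood sitewise under the mixing condition, so that $\log L_n$ becomes a sum over sites of per-site log-likelihood ratios. I would then apply Withers' CLT to that sum, exactly as in the proof of Theorem \ref{teo15}.
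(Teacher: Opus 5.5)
Your route differs from the paper's, and the point where it differs is where it has a gap: the Le Cam shift. By the third lemma, the limit of $n\sqrt{K}\,D_n(t,B)/\sqrt{V}$ under $H_{1nK}$ is $N(\tau,1)$, where $\tau$ is the limiting $H_0$-covariance of that statistic with $\log L_n$. You leave $\tau$ undetermined. Your fallback, that it is ``already captured in $\Delta$'', cannot work. On the $n\sqrt{K}$ scale you adopt (which is what the paper's proof actually uses, $\sqrt{n}\sqrt{nK}$), we have $n\sqrt{K}\,\Delta\,V^{-1/2}=O\bigl((n\sqrt{K})^{-1}\bigr)\to 0$; this is the paper's $B_{nK_2}\to 0$. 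So $\Delta$ carries no asymptotic shift, and the stated conclusion holds only if $\tau=0$. That is exactly the step the proposal does not supply.

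The missing observation is that, at the scaling $\delta_g=\gamma_g/(n\sqrt{K})$, you are in the degenerate branch of your dichotomy. With the per-site perturbation used to compute $\Delta$, the score term of $\log L_n$ has $H_0$-variance of order $nK\delta^2=O(1/n)$. This uses conditional independence of sequences given the ancestor and summable site covariances. It is the same order as the quadratic term you already note vanishes. Hence $\log L_n\to 0$ in probability, $\tau=0$, and the law under $H_{1nK}$ coincides asymptotically with the law under $H_0$. This also removes what you call the hardest step, the joint CLT for $(T_n,\log L_n)$. That matters, because the mixing condition does not let you factor the likelihood over $\mathcal{S}^K$ sitewise: it bounds covariances, it does not make the joint law a product.

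For comparison, the paper does not go through the third lemma. It computes $\Delta=E\,D_n(t,B)$ exactly under $H_{1nK}$. It then asserts that Theorem \ref{teo15} gives $n\sqrt{K}(D_n(t,B)-\Delta)V^{-1/2}\to N(0,1)$, with $V$ recomputed at the perturbed probabilities, and adds $B_{nK_2}\to 0$. Contiguity is only cited. Centring at the exact alternative mean is what lets the paper avoid any covariance computation.

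Your explicit control of the now non-vanishing linear Hoeffding terms addresses something the paper passes over, since Theorem \ref{teo15} was proved using first-order degeneracy, which fails under $H_{1nK}$. Two remarks on that part. First, once you invoke the third lemma you can apply it directly to $D_n(t,B)$, which under $H_0$ is exactly the quasi $U$-statistic of Theorem \ref{teo15}. The decomposition under the alternative is then unnecessary. Second, the factor $n^{-1}$ in your variance bound presumes independent sequences. Under exchangeability there is an extra term $\mathrm{Var}(E[\,\cdot\mid\mathbf{X}_o])$ of order $\delta^2/K=O(n^{-2}K^{-2})$ that does not decay in $n$. It is still $o(n^{-2}K^{-1})$ as $K\to\infty$, so your conclusion there survives.
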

\begin{proof}
Consider the Pitman local alternatives given by (\ref{HnKdef}) and let 
\beqn
\Delta=\frac{1}{n(n-1)}\sum_{g>g'}n_gn_{g'}\left(2\mathcal{H}^\circ_{gg'}-\mathcal{H}_{gg}^\circ-\mathcal{H}_{g'g'}^\circ\right)=E(D_n(t,B)).
\eeqn
Writing $\Delta$ as a function of the transition probabilities, 
\beqn
&&\Delta=\frac{1}{n(n-1)}\sum_{g>g'}n_gn_{g'}	\left\{2\left[1-\frac{1}{K}\sum_{k=1}^K\sum_{c=1}^C\pi_{okc}\sum_{c'=1}^C\pi_{g\mid o;kc' \mid c}(t_{gg})\pi_{g'\mid o; k c'\mid c}(t_{g'g'})\right]+\right.\\
&-&\left.\left[1-\frac{1}{K}\sum_{k=1}^K\sum_{c=1}^C\pi_{okc}\sum_{c'=1}^C\pi^2_{g\mid o;kc' \mid c}(t_{gg})\right]-\left[1-\frac{1}{K}\sum_{k=1}^K\sum_{c=1}^C\pi_{okc}\sum_{c'=1}^C\pi^2_{g'\mid o;kc' \mid c}(t_{g'g'})\right]\right\}\\
&=&\frac{1}{n(n-1)}\sum_{g>g'}n_gn_{g'}\frac{1}{K}\sum_{k=1}^K\sum_{c=1}^C\pi_{okc}\sum_{c'=1}^C(\pi_{g\mid o;kc' \mid c}(t_{gg})-\pi_{g'\mid o;kc' \mid c}(t_{g'g'}))^2,
\eeqn
Under $H_{1nK}$ given by (\ref{HnKdef}), 
\beqn
\Delta&=&\frac{1}{n(n-1)}\sum_{g>g'}n_gn_{g'}\frac{1}{K}\sum_{k=1}^K\sum_{c=1}^C\pi_{okc}\sum_{c'=1}^C\left(\pi_{\cdot\mid o;kc'\mid c}(t^{\circ})+\frac{\gamma_g(c')}{n\sqrt{K}}-\pi_{\cdot\mid o;kc'\mid c}(t^{\circ})-\frac{\gamma_{g'}(c')}{n\sqrt{K}}\right)^2\\
&=&\frac{1}{n(n-1)}\sum_{g>g'}n_gn_{g'}\sum_{c'=1}^C\frac{(\gamma_g(c')-\gamma_{g'}(c'))^2}{n^2K}=\frac{1}{n(n-1)}\sum_{g>g'}\sum_{c'=1}^C\frac{n_gn_{g'}\gamma^2_{gg'}(c')}{n^2K},
\eeqn
then the Pitman local alternative is given by $H^\star_{1nK}:(\gamma_{gg'}(c')/{n^2K})>0$ for some  $c'\in \mathcal{S}$. 
Therefore, 
\beqn
\sqrt{n}\sqrt{nK}\left(D_n(t,B)-\Delta\right)V^{-1/2}+n\sqrt{K}\frac{1}{n(n-1)}\sum_{g>g'}\sum_{c'=1}^C\frac{n_gn_{g'}\gamma^2_{gg'}(c')}{n^2K}V^{-1/2}=B_{nK_1}+B_{nK_2},
\eeqn
where $Var(D_n(t,B))=V/n^2K$ and $V$ is computed considering $H_{1nK}$ using (\ref{varTn_sobH0}). By Theorem \ref{teo15} $B_{nK_1}\xrightarrow{\mathcal{D}} N(0,1)$ when $K$ or $n$ goes to infinity,  while  $B_{nK_2}\xrightarrow{p}0$. Under contiguous local alternatives, the asymptotic distribution  of  $D_n(t,B)$ is Normal.
\end{proof}

\section{Simulation Study}\label{Simul}

In order to verify the asymptotic distribution of $D_n(t,B)$ under $H_0$, we generated DNA sequences using Monte Carlo simulations under each substitution model discussed in \citet{Tavare:2004} (\citealp{Jukes:1969}, \citealp{Kimura:1980}, \citealp{Felsenstein:1981} and \citealp{Hasegawa:1985}). A phylogenetic tree is generated following Figure \ref{figcoalseq2}(a)  using the package {\sf phyclust} from {\sf R} available on CRAN (\url{https://cran.r-project.org/web/packages/available_packages_by_name.html}). 
For instance, under the Jukes-Cantor model (JC69) it is assumed that, given a nucleotide, the substitutions occur at rate $\alpha_l$ for $l=1,\ldots,K$, for all nucleotides in $\mathcal{S}=\{A,C,G,T\}$ and  the relative frequency of nucleotides are all equal with $\pi_{lc}=1/4$ for $\forall c\in \mathcal{S}=\{A,C,G,T\}$. Then,  for $l=1,\ldots, K$, 
$\pi_{lc'\mid c }(t) = 1/4 -[exp(-4\lambda_lt)]/4$ and $\pi_{lc\mid c }(t)= 1/4 + [3exp(-4\lambda_lt)]/4$. For the HKY model, substitutions occur at rate $\alpha_l$ (for transitions), $\beta_l$ (for transversions) and $\pi_{lc}$ varies. So, for the HKY model, we have the relations given by (\ref{NucTHKY})-(\ref{NucGHKY}).
It was considered $G = 2, 3$ and  $4$ groups in the simulations.



For each substitution model (JC69, K80, F81 and HKY85), four populations were generated, 
with $N$ sequences in each population group ($N = 25000$) and each sequence with a number of fixed sites equal to $K = (100, 2000)$. To generate the populations we used the  package {\sf phyclust} from {\sf R} available on CRAN  (\citealp{Chen:2011}), which includes the open source programs {\sf ms} for generation of coalescent trees (\citealp{Hudson:2002})  and {\sf seq-gen} for generation of DNA sequences (\citealp{Rambaut:1997}). {\sf Seq-Gen} routine from 
\citet{Rambaut:1997} simulates the evolution of nucleotide sequences along a genealogy tree, using substitution models for DNA sequences. Under $H_0$, a coalescent tree with $4N$ individuals/sequences is generated. Then, there are 4 populations of size $N$ with parameters listed in Table \ref{tabModelsParameters}, which values were based on  \citet{Hasegawa:1985} study 
for human mitochondrial DNA.

\begin{table} [h!]
\begin{center}
\caption{Model parameters used in the simulation data under $H_0$.}
\label{tabModelsParameters}
  \begin{tabular}{c c c c c c c}
	\hline
  \hline
Models &transition& transversion& $\pi_A$& $\pi_C$&$\pi_G$ & $\pi_T$\\
\hline
JC69 & $10^{-09}$ & - & $0.25$&$0.25$&$0.25$&$0.25$\\
K80 & $10^{-09}$ & $20^{-09}$ & $0.25$&$0.25$&$0.25$&$0.25$\\
F81& $10^{-09}$ & -  &$0.36$&$0.43$&$0.04$&$0.17$\\
HKY85& $10^{-09}$ & $20^{-09}$ &$0.36$&$0.43$&$0.04$&$0.17$\\
\hline
\hline
  \end{tabular}
	\end{center}
\end{table}

The goal is to study the distribution of $D_n(B)$ under $H_0$: homogeneity among groups. 
For this purpose, we took samples without replacement of size $n=n_1+\ldots + n_G$, where the first $n_1$ are from group 1, the next $n_2$ from group 2 and so on according to the number of groups ($G=2, 3, 4$).
Consider $S$ the number of simulations, the algorithm is as follows: \\
For $s=1, \ldots, S$
\begin{enumerate}
\item Randomly choose a population group;
\item From this population group draw randomly $n=(n_1, \ldots, n_G)$ sequences;
\item The first $n_1$ sequences of the sample will be assigned to group 1, from $n_1$ to $n_1+n_2$ sequences to group 2 and so on.
\item
Compute $D_n(t,B)_{obs}$ (observed) from the sample $n=(n_1, \ldots, n_G)$.
\end{enumerate}
Repeat steps 1 to 4 $S$ times.

Under $H_1$, the transition and transversion parameters were different for each population (see Table \ref{tabModelsParametersH1}). 

\begin{table} [htb!]
\begin{center}
\caption{Transition and Transversion parameters for each population under $H_1$.}
\label{tabModelsParametersH1}
  \begin{tabular}{c c c c c}
  \hline
Parameters&$G_1$&$G_2$&$G_3$&$G_4$\\
\hline
transition & $1\times 10^{-08}$&$4\times 10^{-08}$&$1.6\times 10^{-07}$&$3.2\times 10^{-07}$\\
transversion & $2\times 10^{-08}$&$8\times 10^{-08}$&$3.2\times 10^{-07}$&$6.4\times 10^{-07}$\\
\hline
\hline
  \end{tabular}
	\end{center}
\end{table}

 For all simulations, unbalanced and balanced cases were considered with Q-Q plots to verify if the distribution of $D_n(t, B)$ is close to Normal.  Figures  \ref{fig:QQHKYG2}, \ref{fig:QQHKYG3} and \ref{fig:QQHKYG4} show Normal Q-Q plots for samples of $G=2, 3$  and $G=4$, respectively, with $K=2000$, assuming HKY85 model. 

We see deviations from normality, especially for unbalanced cases. 
In Figures \ref{fig:QQHKYG2}, \ref{fig:QQHKYG3} and \ref{fig:QQHKYG4} 
we can see deviations from normality for all situations and no evidence of Normality with 5\% level of significance was found. The best cases are the balanced ones with $n=100$ and $K=2000$.

\begin{figure}[htb]
\centering
\subfloat{%
\includegraphics[width=7.5cm,height=5.5cm]{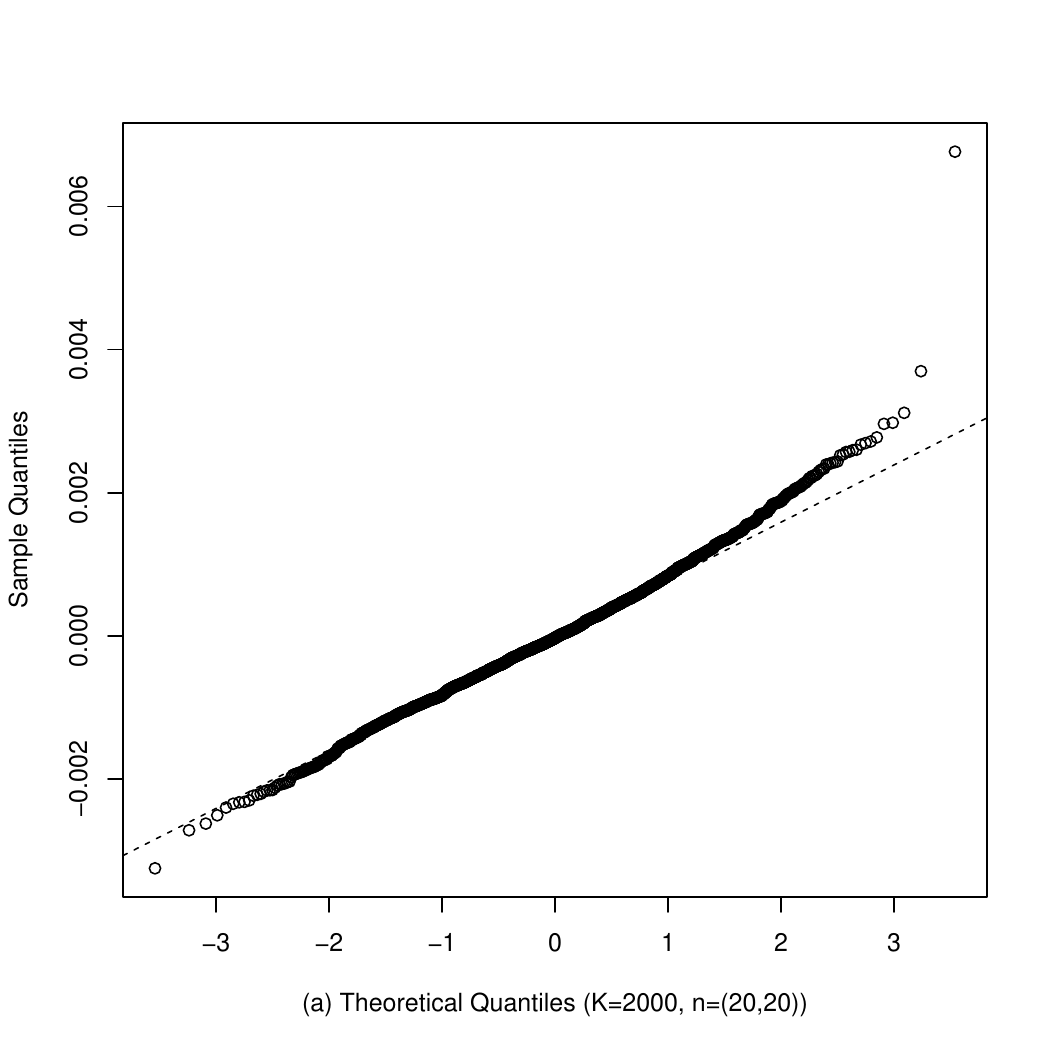}
}
\quad
\subfloat{%
\includegraphics[width=7.5cm,height=5.5cm]{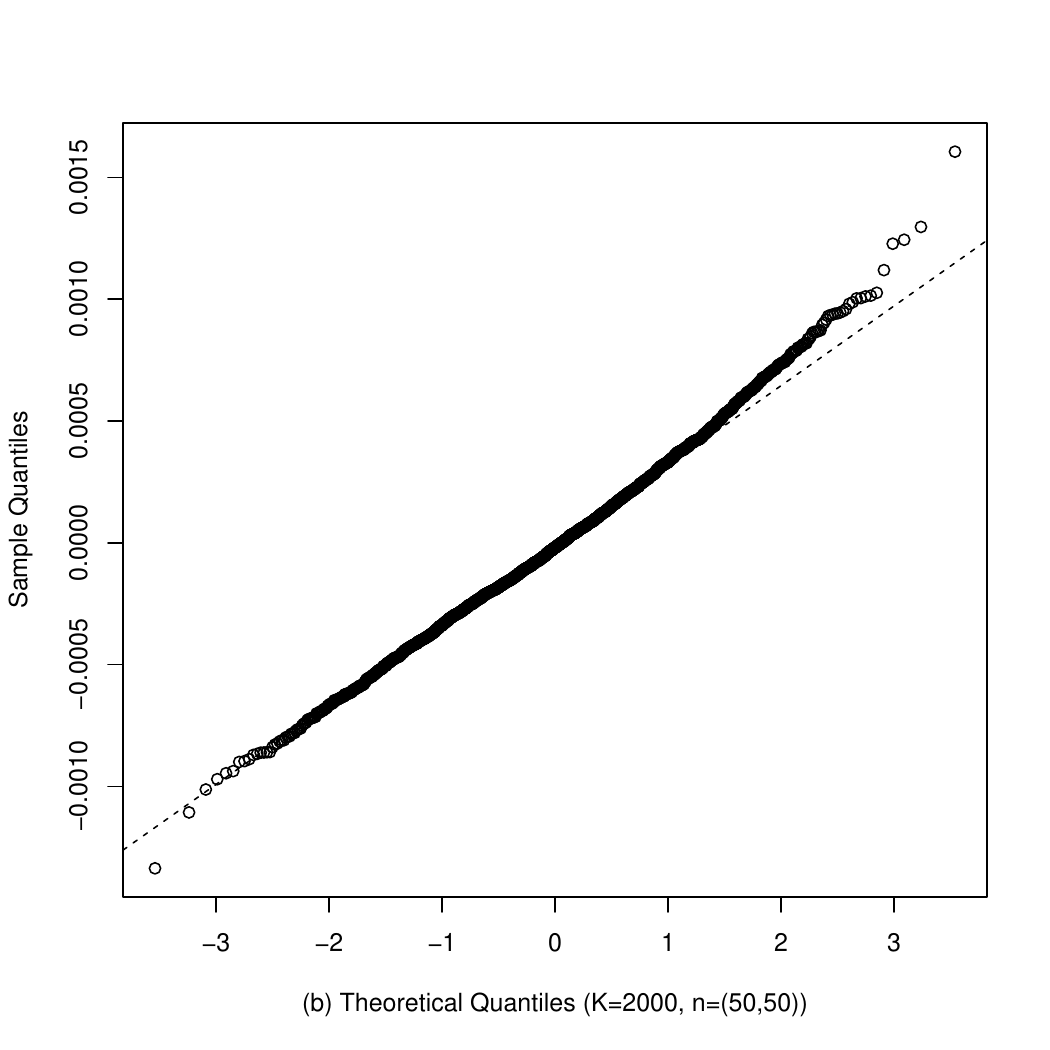}
}

\subfloat{%
\includegraphics[width=7.5cm,height=5.5cm]{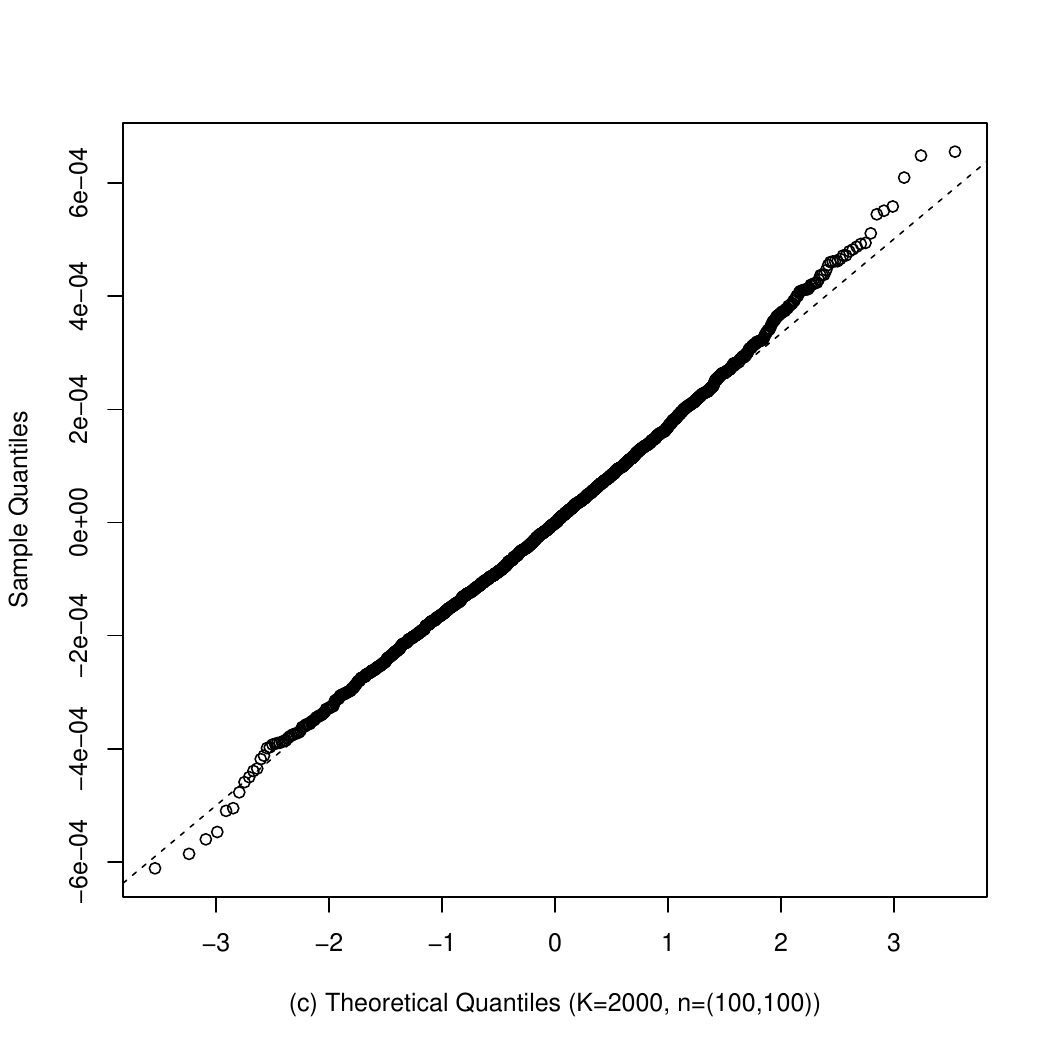}
}
\quad
\subfloat{%
\includegraphics[width=7.5cm,height=5.5cm]{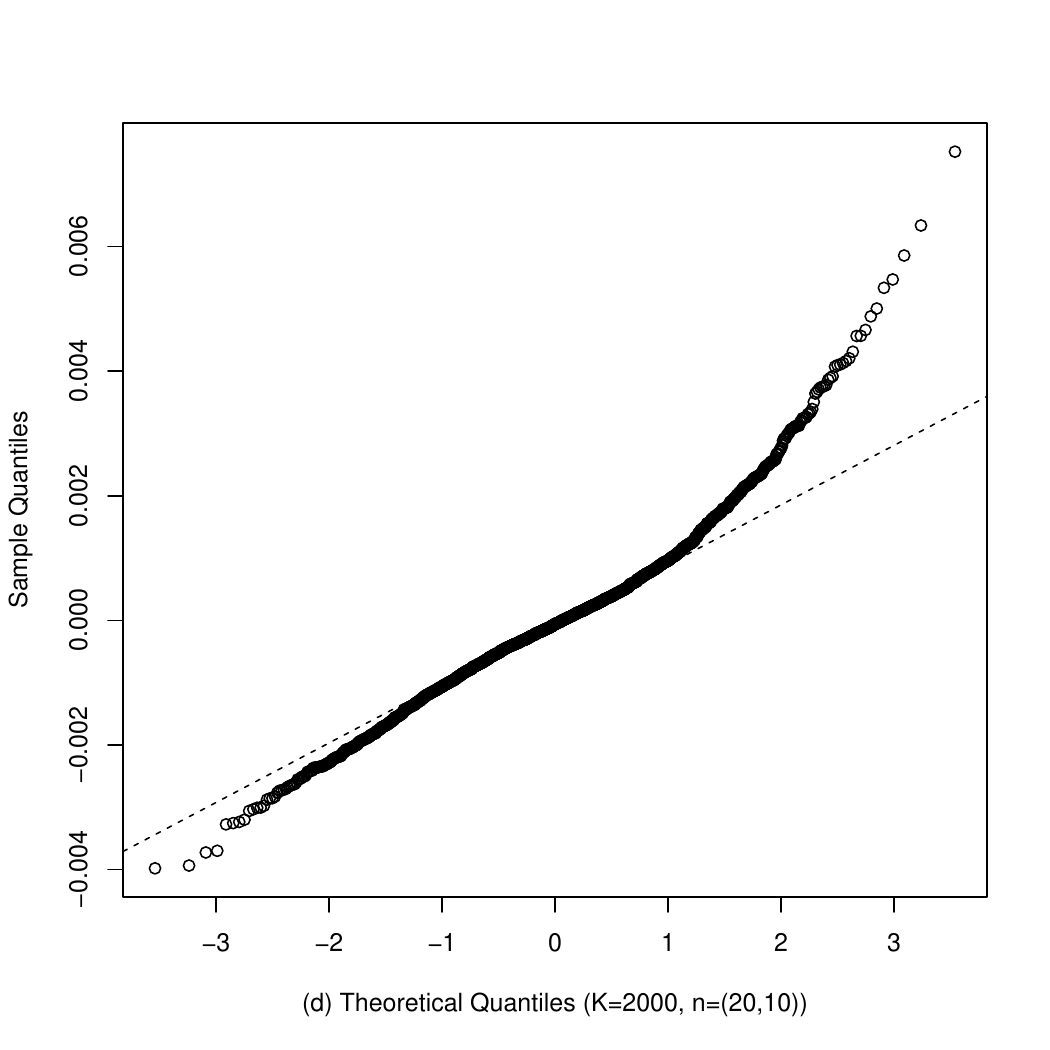}
}

\subfloat{%
\includegraphics[width=7.5cm,height=5.5cm]{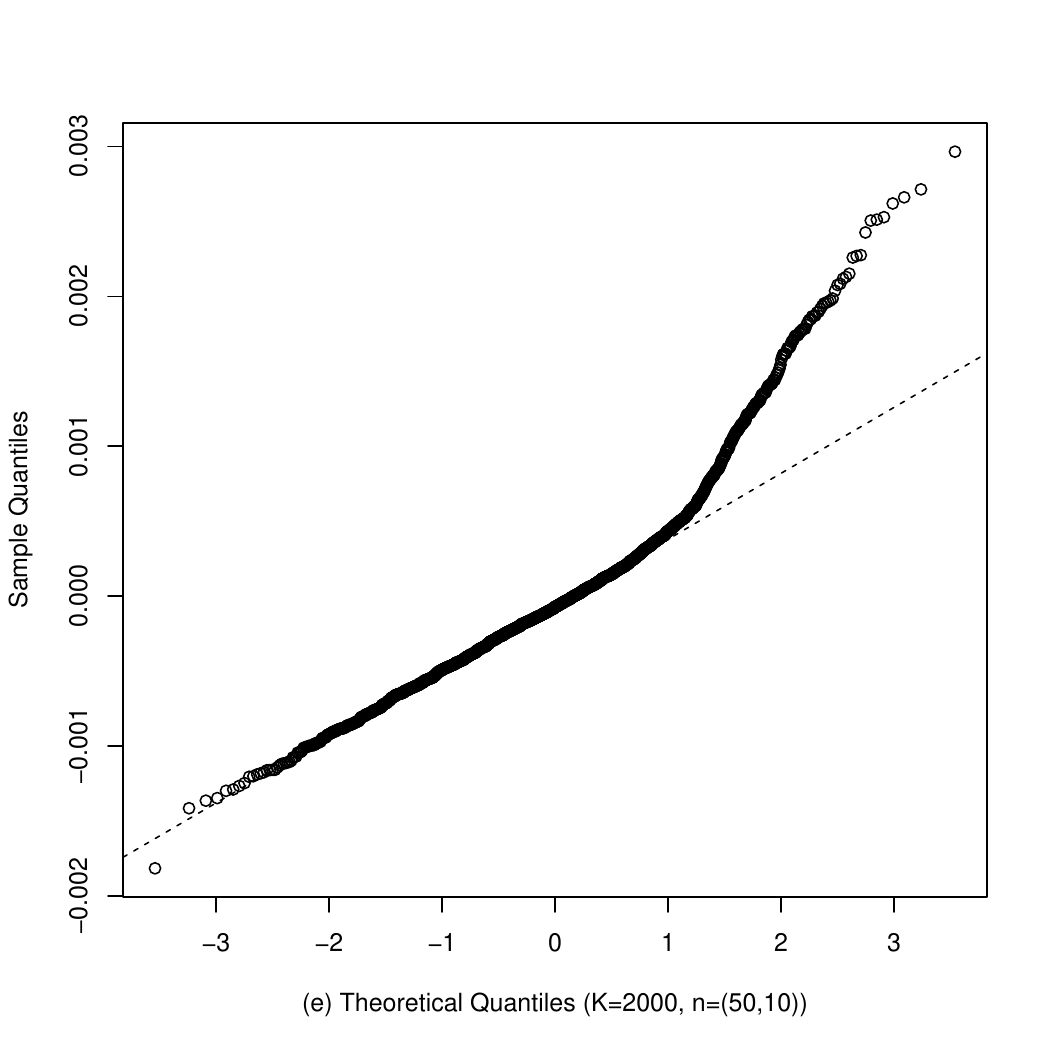}
}
\quad
\subfloat{%
\includegraphics[width=7.5cm,height=5.5cm]{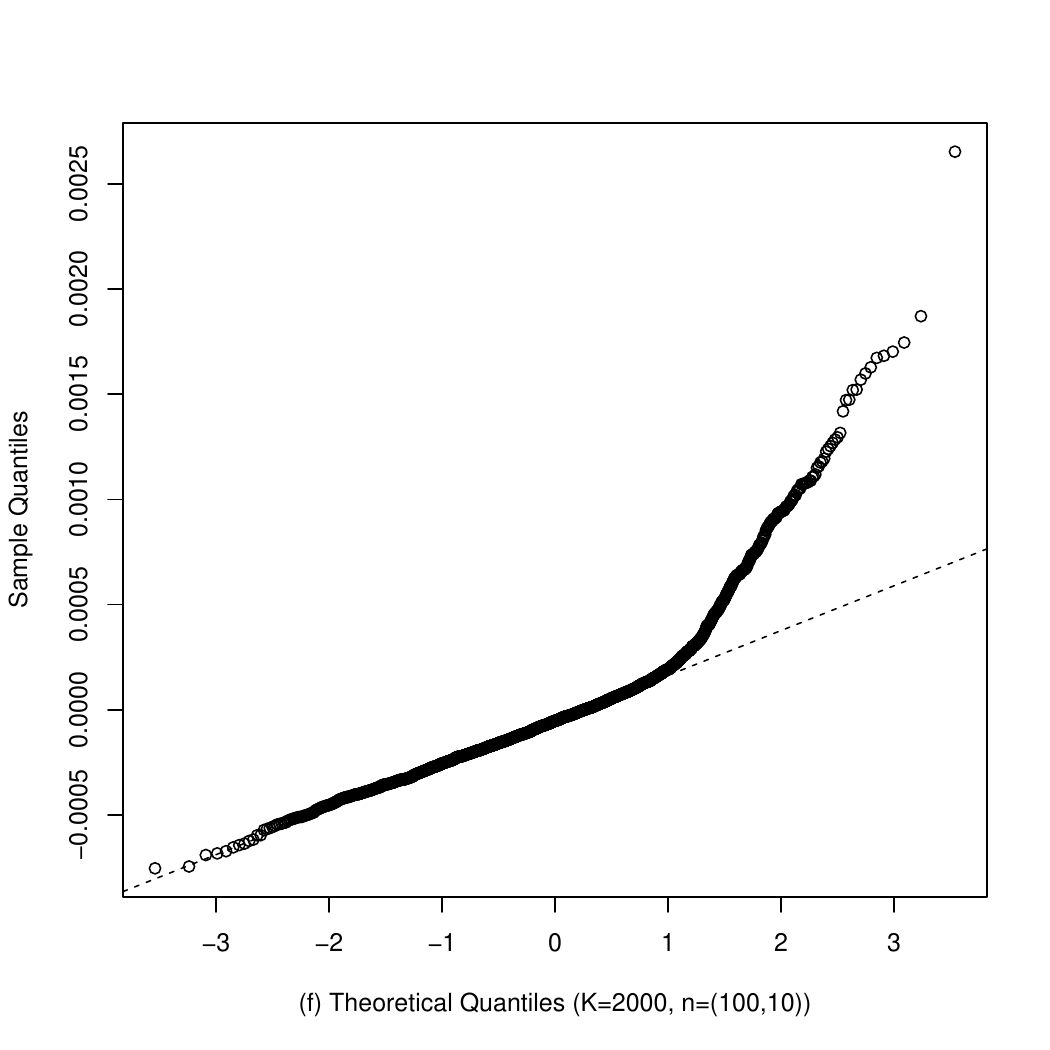}
}

\caption{Normal Q-Q plot of distribution of the test statistic under $H_0$, HKY model, $G=2$ and $K=2000$ (a) n=(20,20), (b) n=(50,50), (c) n=(100,100), (d) n=(20,10), (e) n=(50,10) and (f) n=(100,10).}
\label{fig:QQHKYG2}
\end{figure}

\FloatBarrier


\begin{figure}[htb]
\centering
\subfloat{%
\includegraphics[width=7.5cm,height=5.5cm]{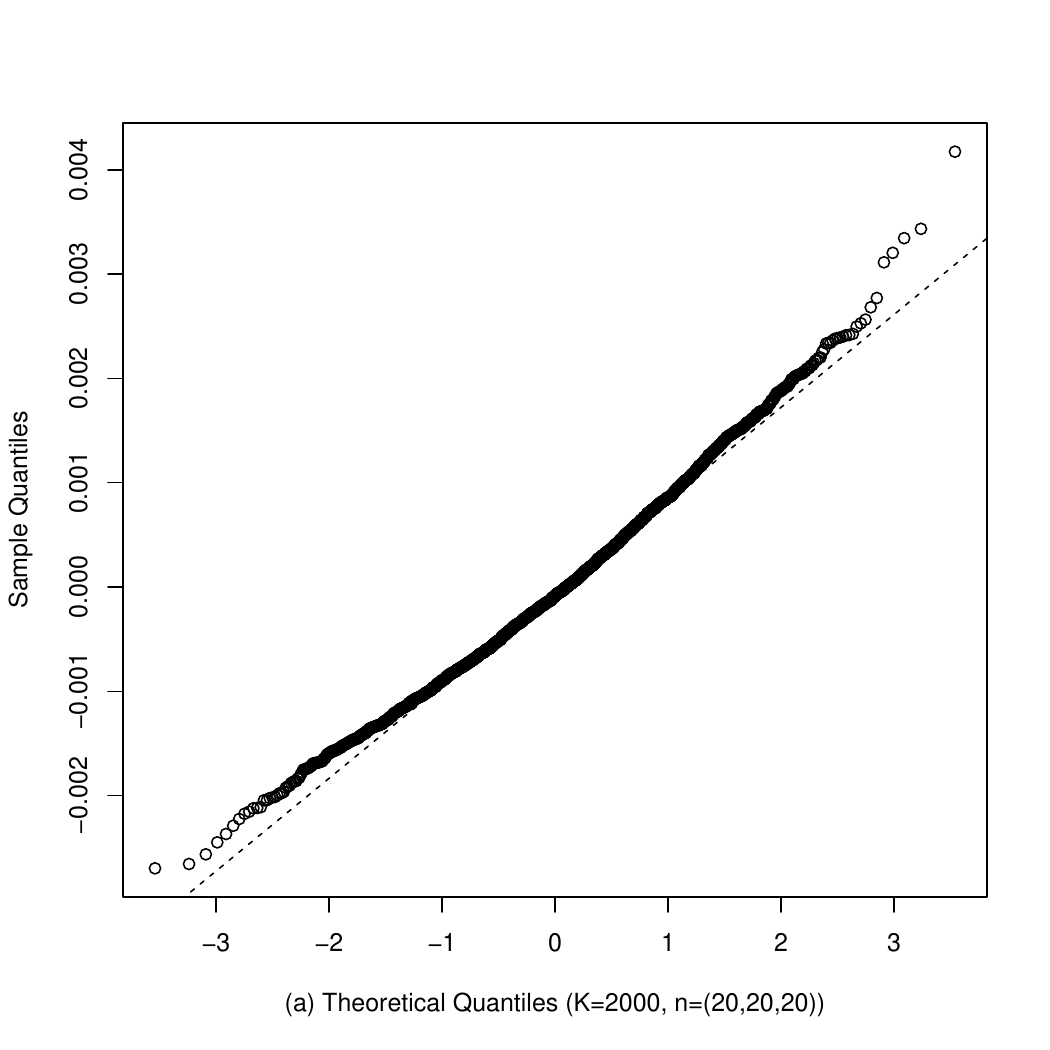}
}
\quad
\subfloat{%
\includegraphics[width=7.5cm,height=5.5cm]{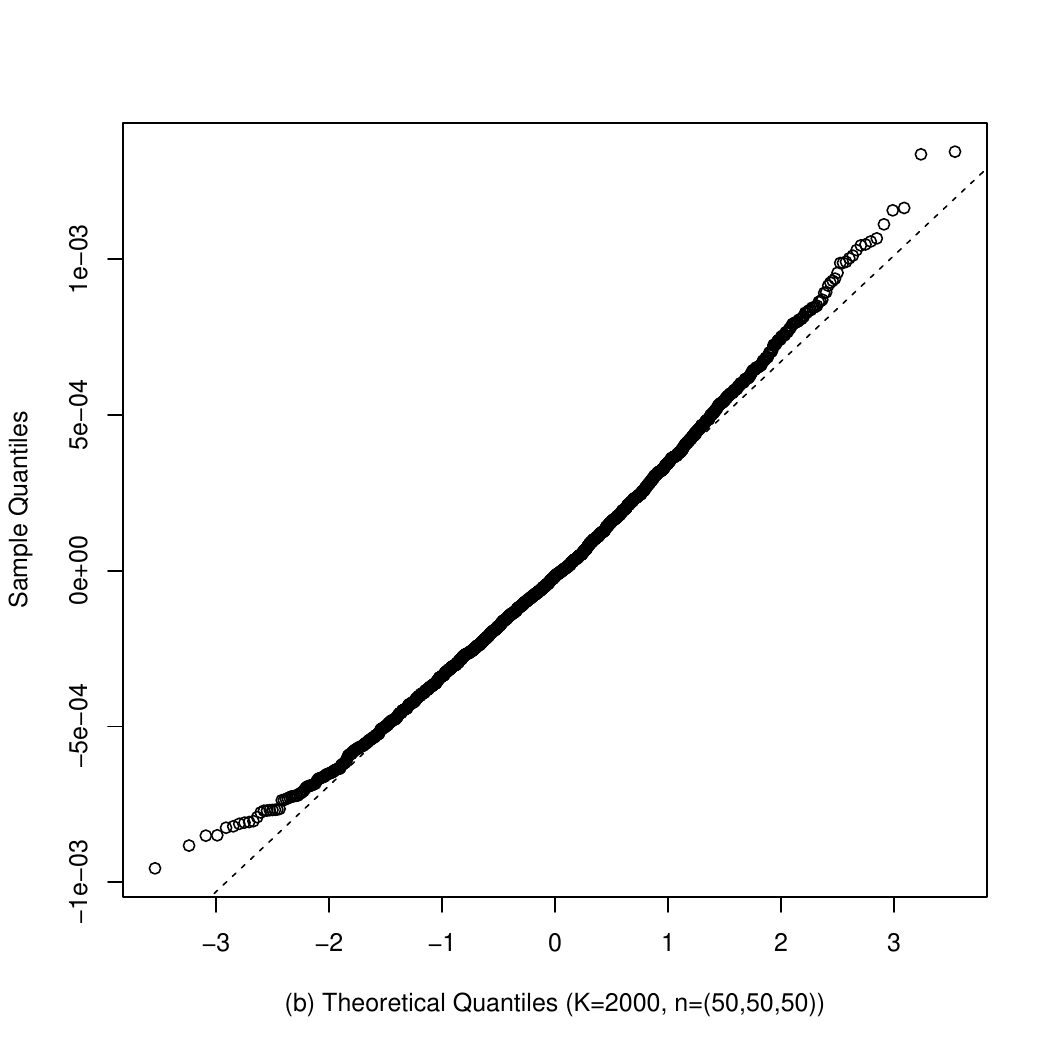}
}

\subfloat{%
\includegraphics[width=7.5cm,height=5.5cm]{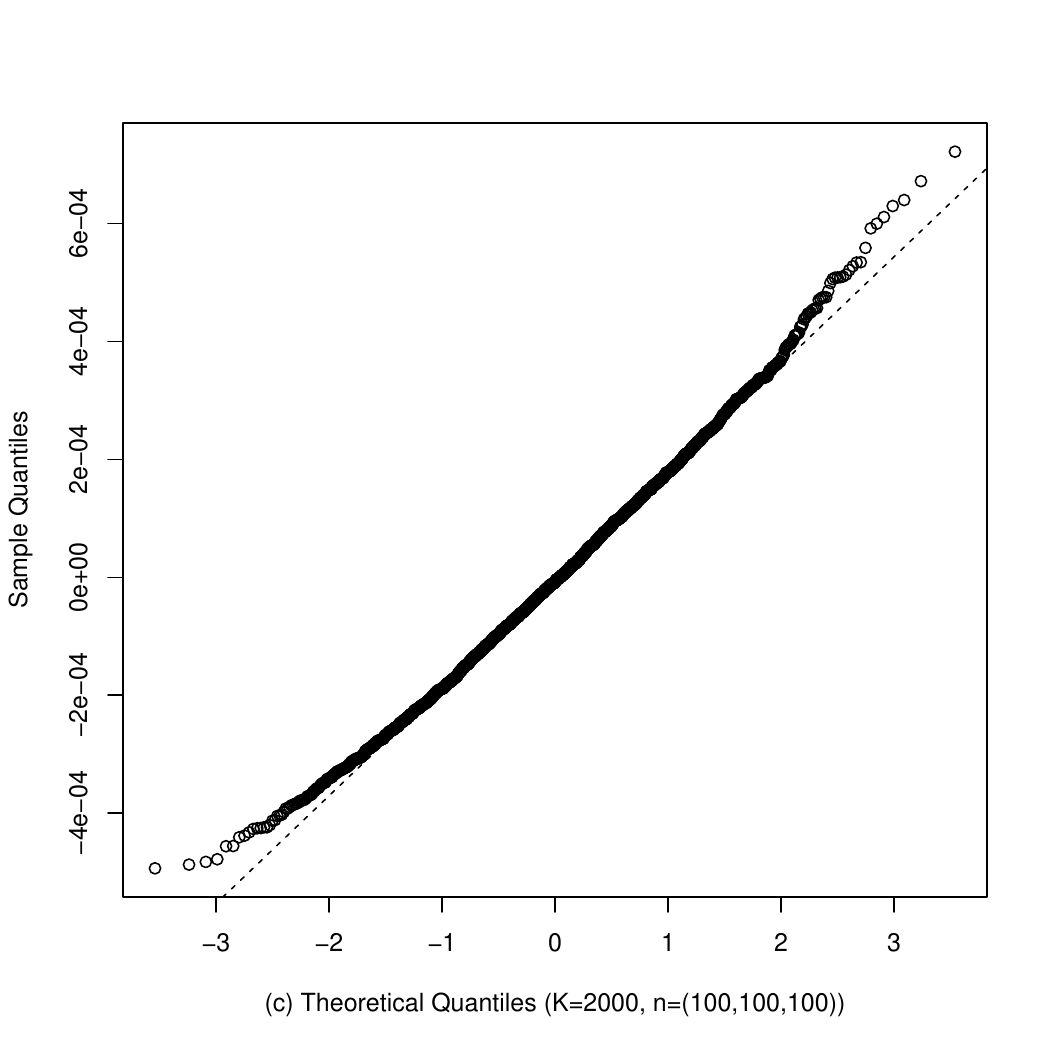}
}
\quad
\subfloat{%
\includegraphics[width=7.5cm,height=5.5cm]{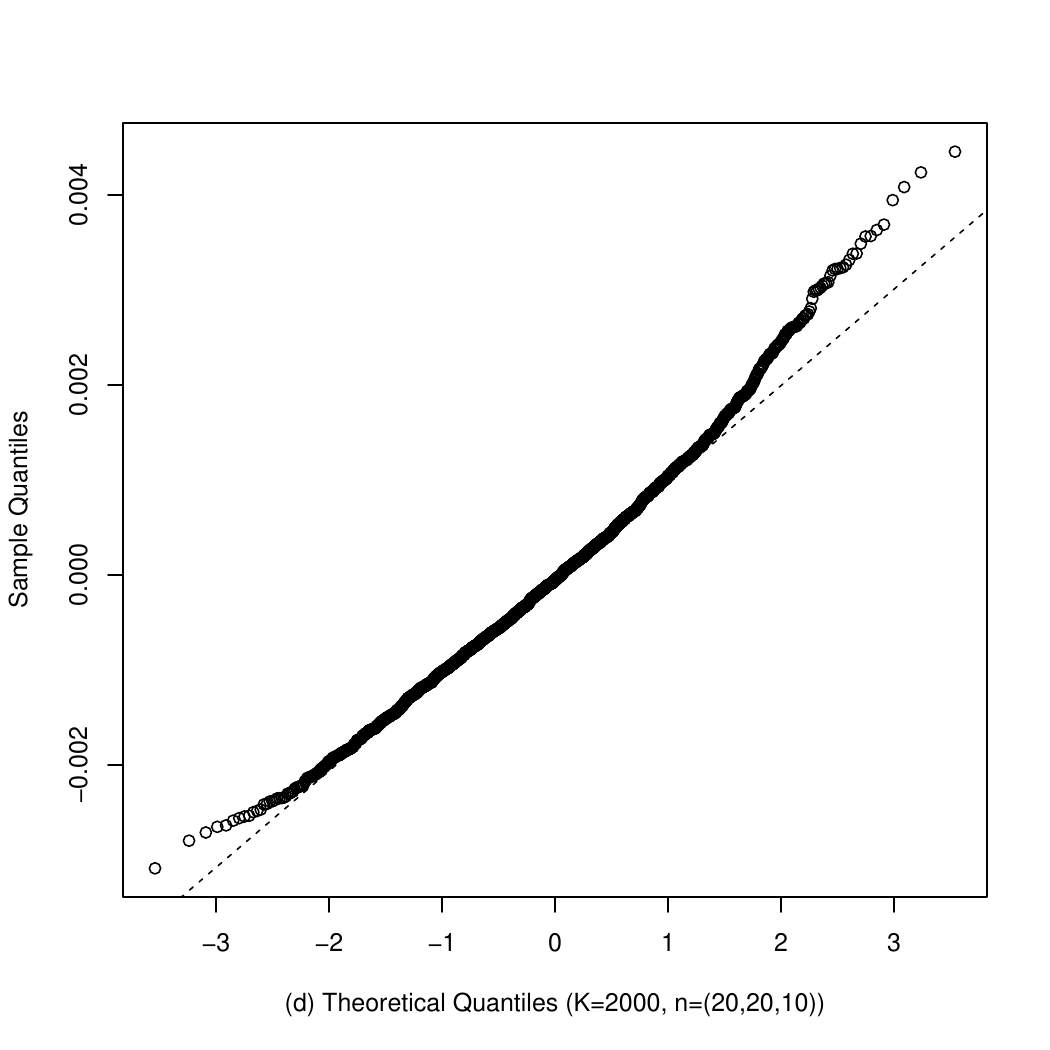}
}

\subfloat{%
\includegraphics[width=7.5cm,height=5.5cm]{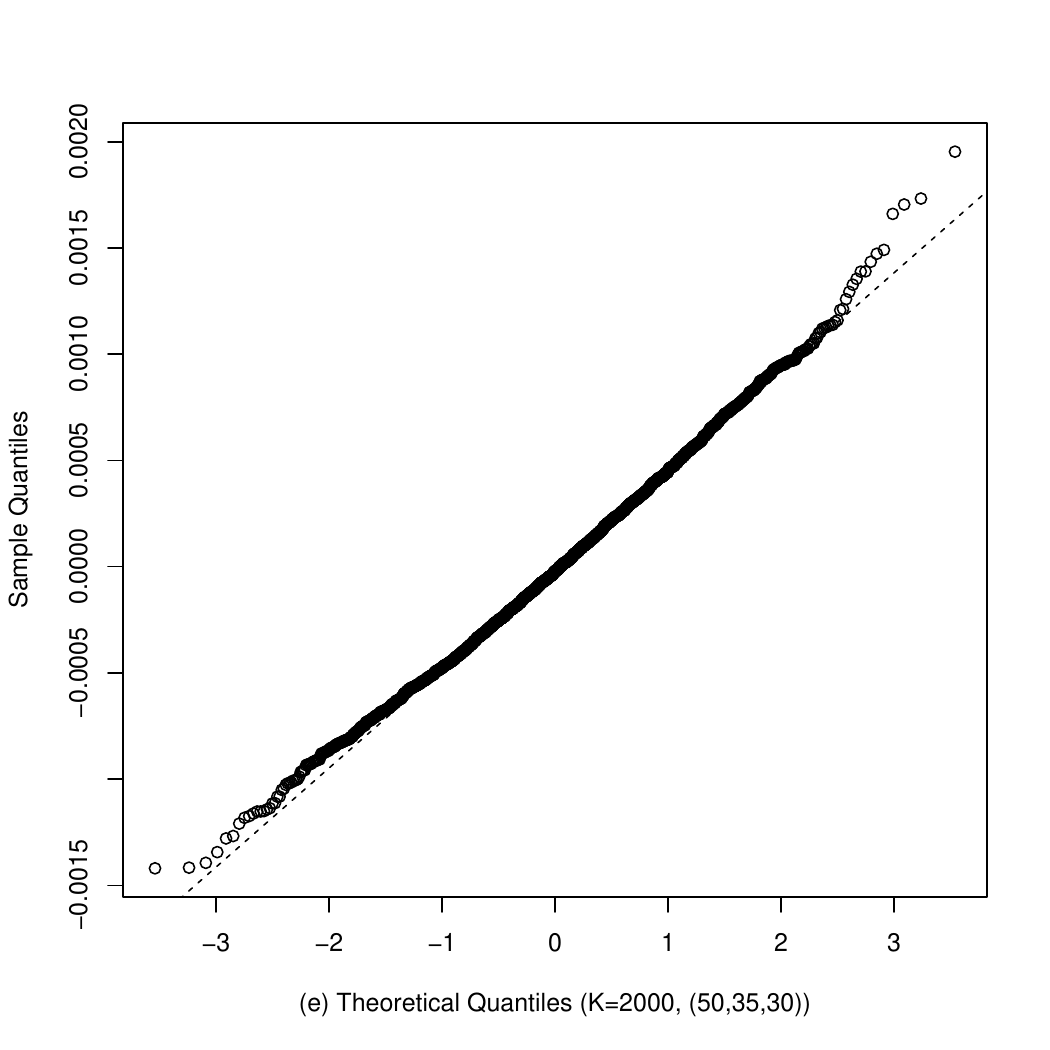}
}
\quad
\subfloat{%
\includegraphics[width=7.5cm,height=5.5cm]{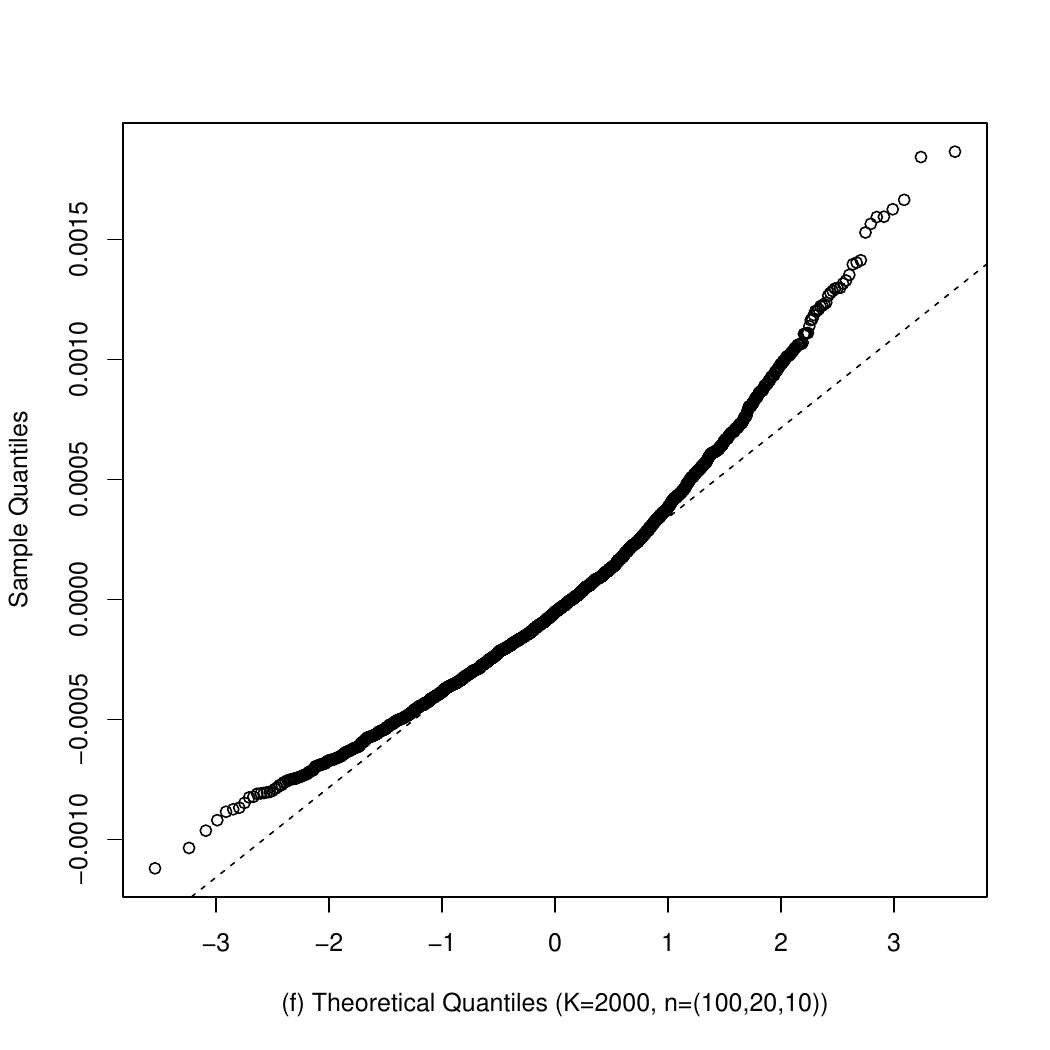}
}

\caption{Normal Q-Q plot of distribution of the test statistic under $H_0$, HKY model, $G=3$ and $K=2000$ (a) n=(20,20,20), (b) n=(50,50,50), (c) n=(100,100,100), (d) n=(20,20,10), (e) n=(50,35,30) and (f) n=(100,20,10).}
\label{fig:QQHKYG3}
\end{figure}

\FloatBarrier


\begin{figure}[htb]
\centering
\subfloat{%
\includegraphics[width=7.5cm,height=5.5cm]{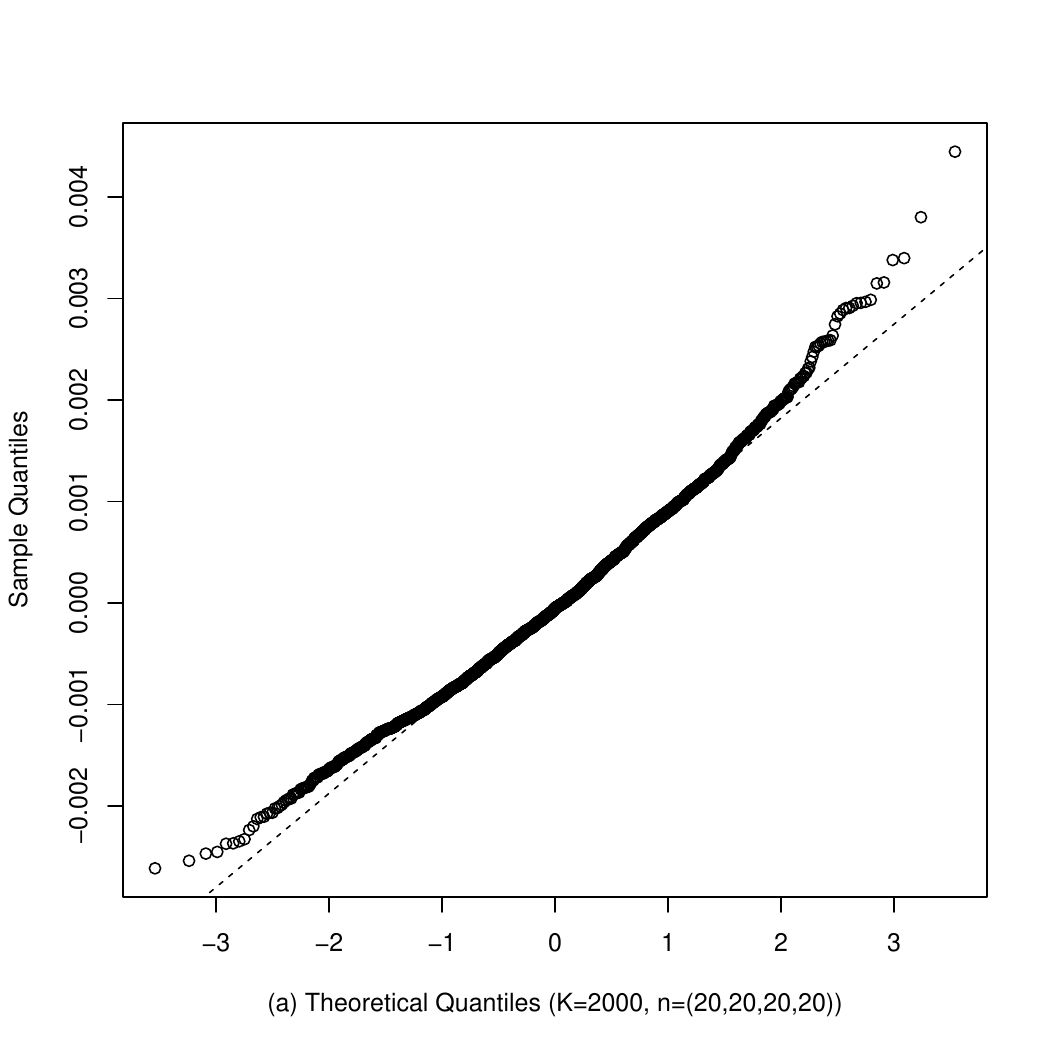}
}
\quad
\subfloat{%
\includegraphics[width=7.5cm,height=5.5cm]{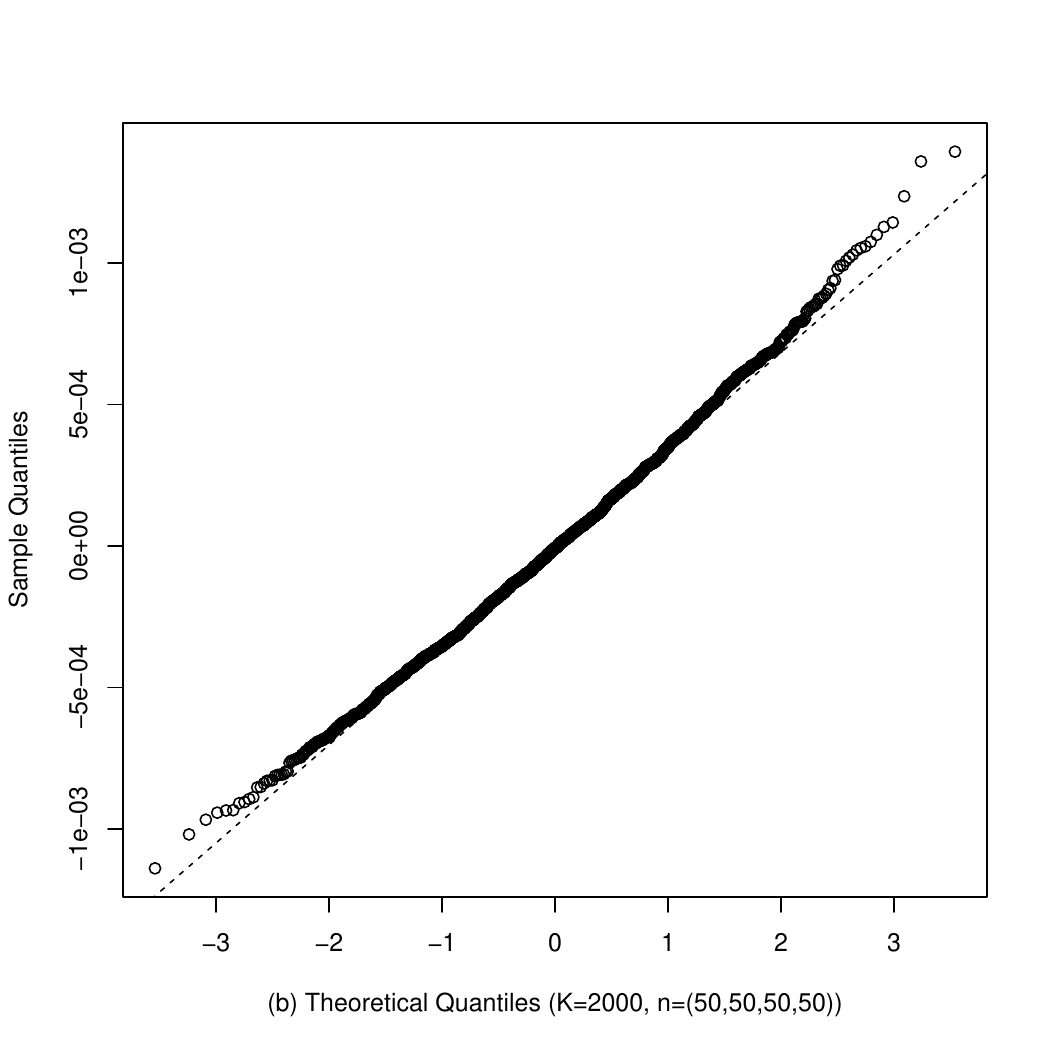}
}

\subfloat{%
\includegraphics[width=7.5cm,height=5.5cm]{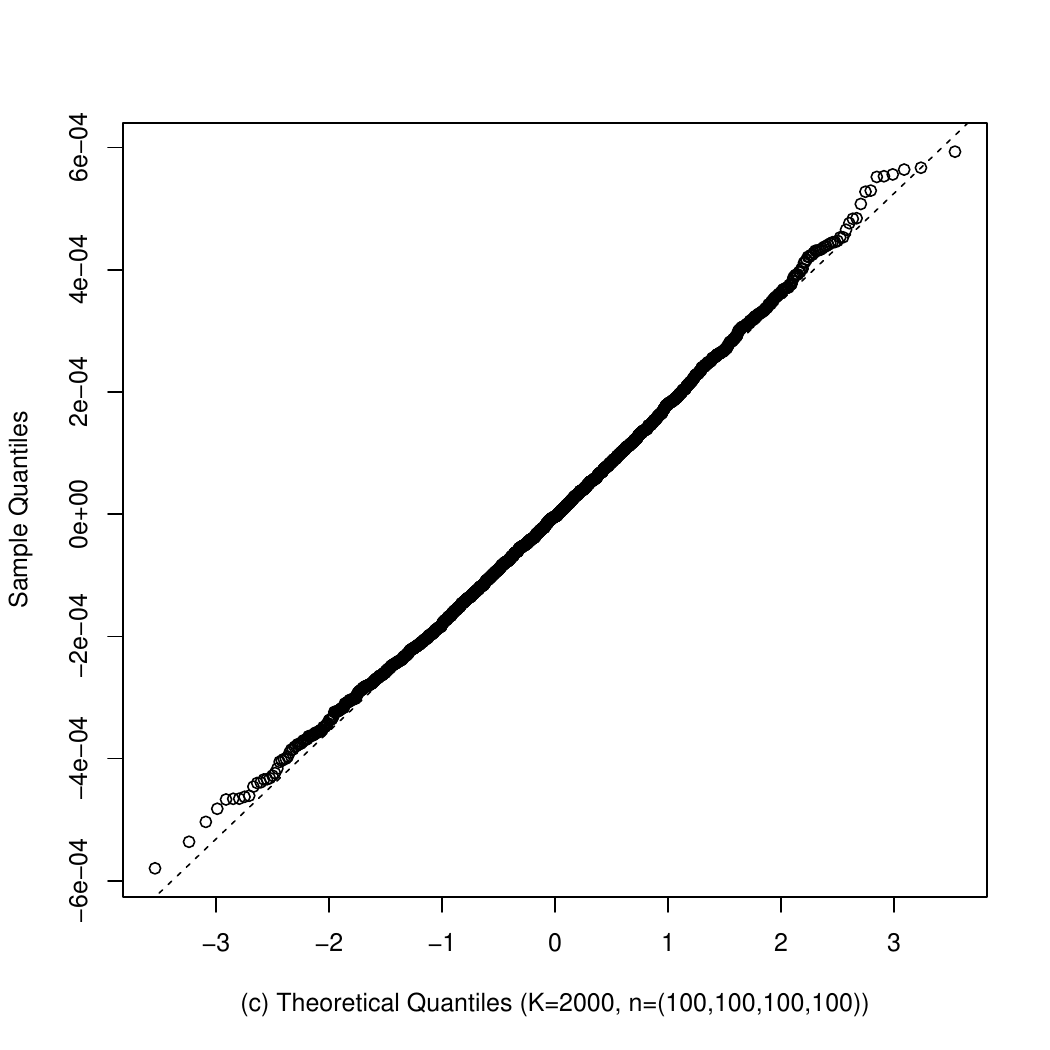}
}
\quad
\subfloat{%
\includegraphics[width=7.5cm,height=5.5cm]{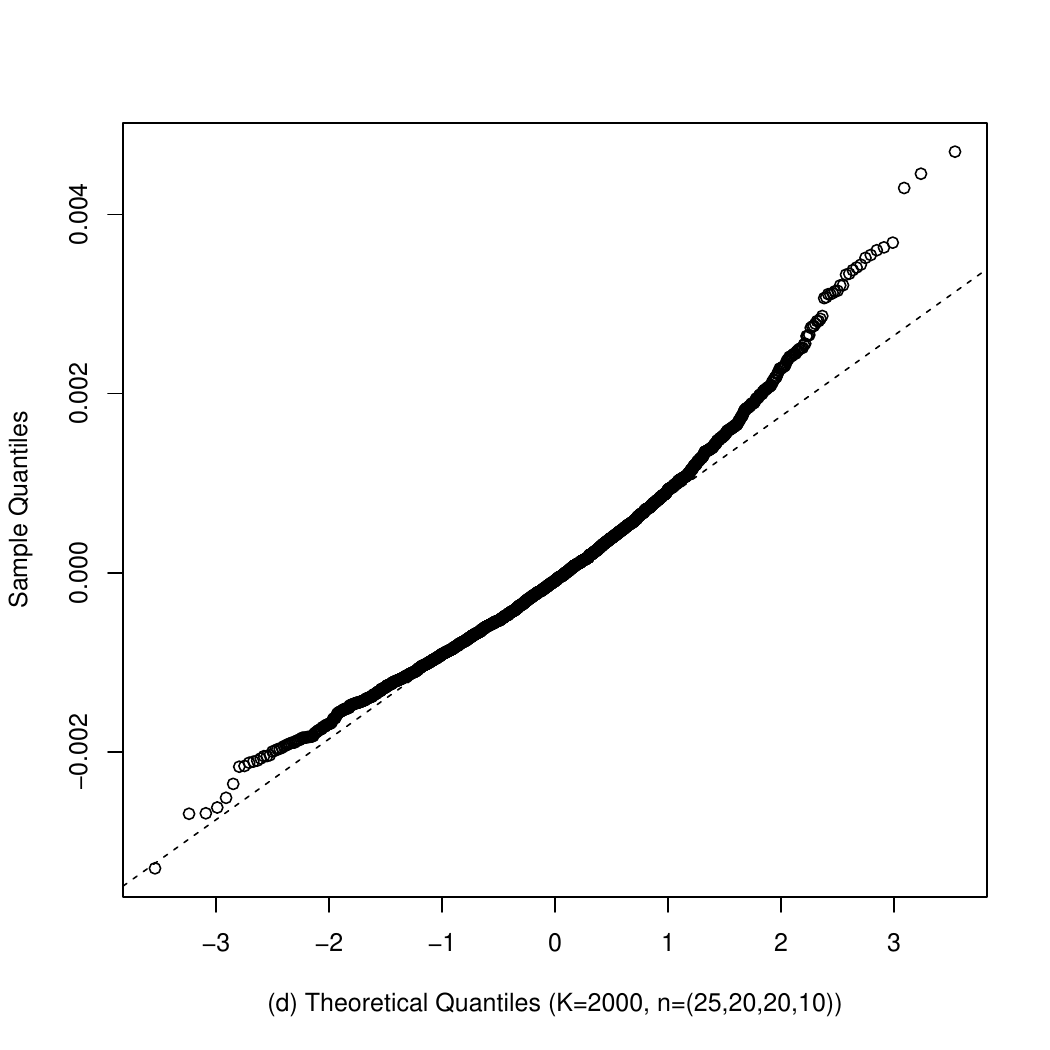}
}

\subfloat{%
\includegraphics[width=7.5cm,height=5.5cm]{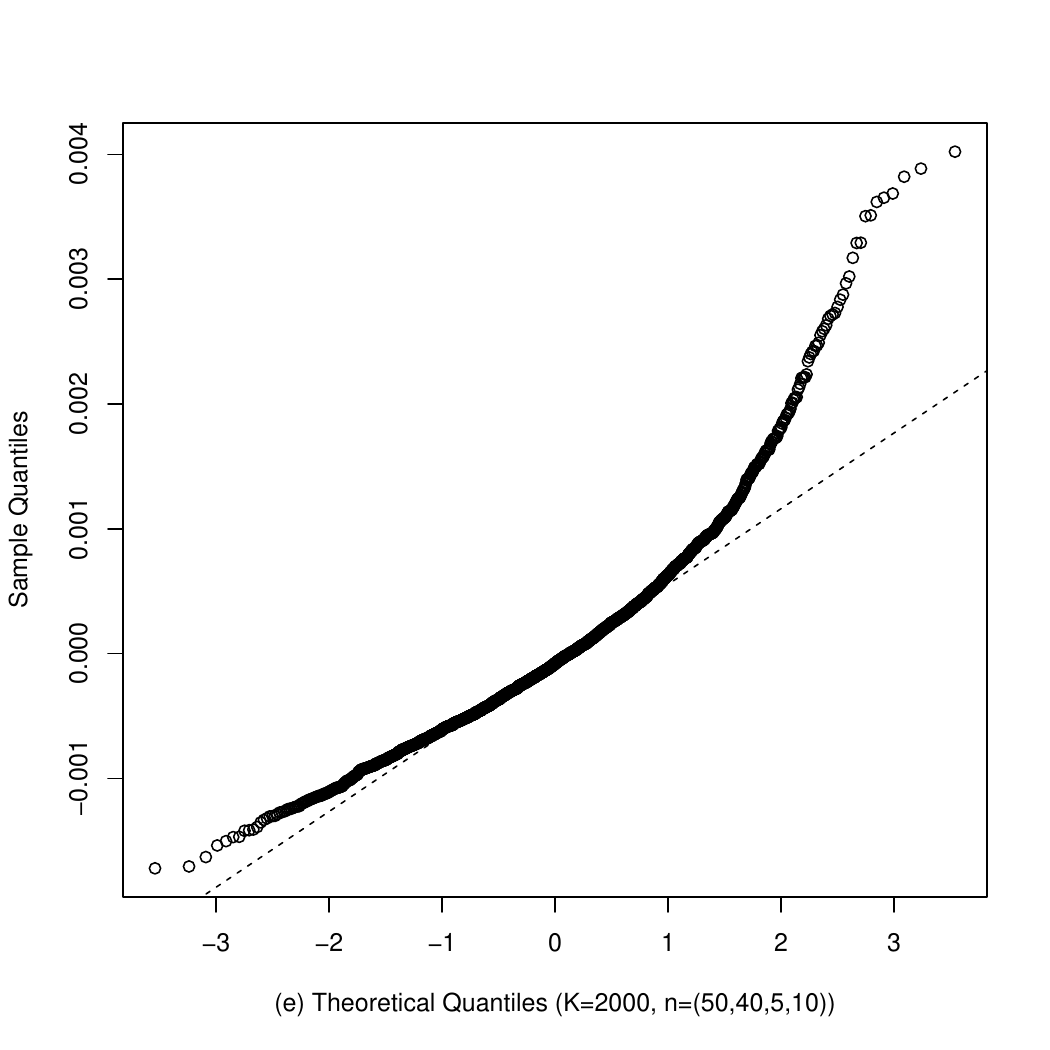}
}
\quad
\subfloat{%
\includegraphics[width=7.5cm,height=5.5cm]{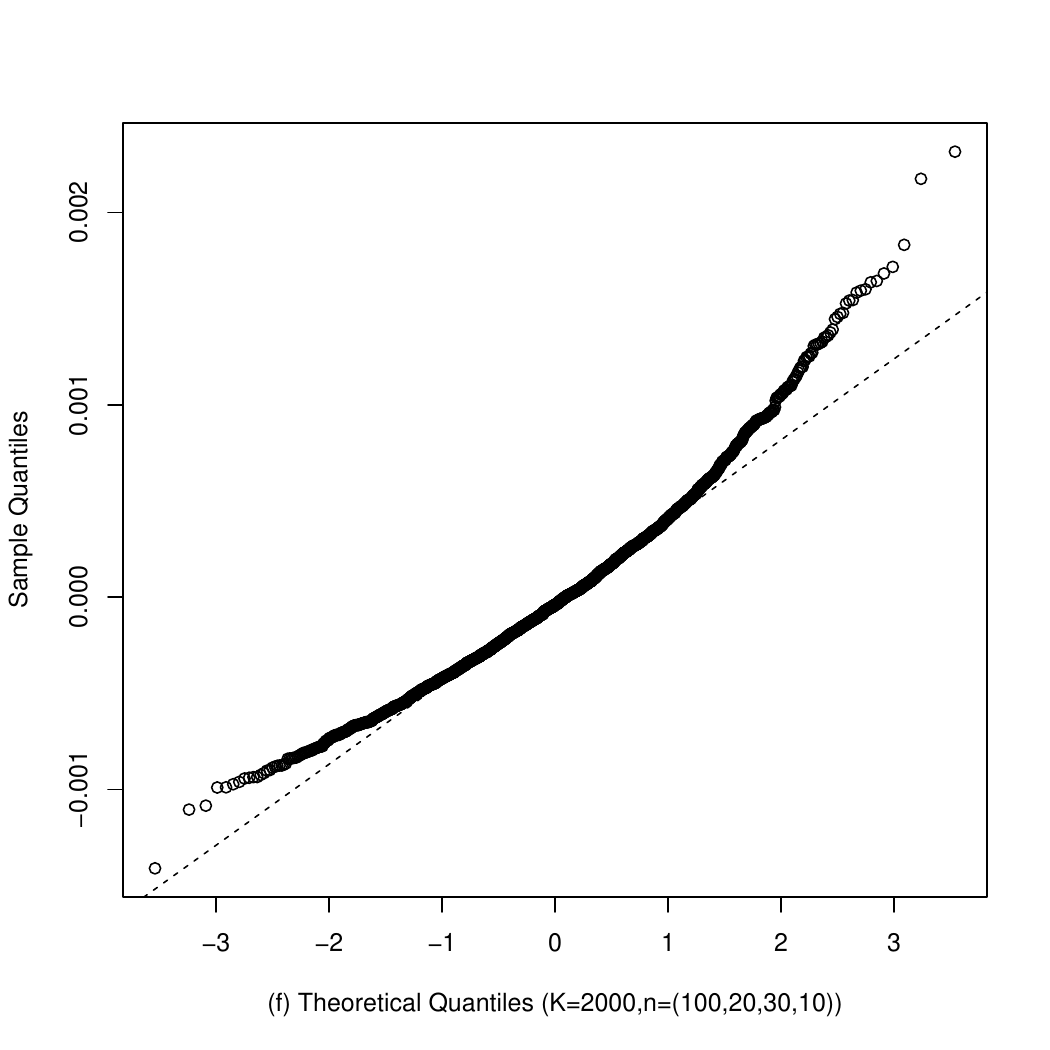}
}

\caption{Normal Q-Q plot of distribution of the test statistic under $H_0$, HKY model, $G=4$ and $K=2000$ (a) n=(20,20,20,20), (b) n=(50,50,50,50), (c) n=(100,100,100,100), (d) n=(25,20,20,10), (e) n=(50,40,5,10) and (f) n=(100,20,30,10).}
\label{fig:QQHKYG4}
\end{figure}

\FloatBarrier

Regarding the symmetry of the distribution of $D_n(t, B)$, the results improve 
as $n$ increases. 
A simulation was performed considering $K=2000$, for $G=2$, for balanced cases ($n=200, 500$ and $3000$) for each group. In Figure \ref{fig:QQ-sampleHKYG2} Normal Q-Q plots for HKY model are shown and Shapiro Wilk's (S-W) Normality tests of $D_n(t,B)$ with $p$-values are shown in Table \ref{p_valor_JCG2} for Jukes-Cantor (JC), Kimura (K), Felsenstein (F) and HKY with $K=1000$ and $2000$. 
Considering the S-W Normality test, there is no evidence against Normality of $D_n(t,B)$ under $H_0$. In Figure \ref{fig:Denisity-sampleHKYG2} it is also shown the smoothed density estimates of the test statistic for these simulations. One can see that as $n$ increases the density is more symmetric and the variance decreases. 


\begin{figure}[htb]
\centering
\subfloat{%
\includegraphics[width=7.5cm,height=5.5cm]{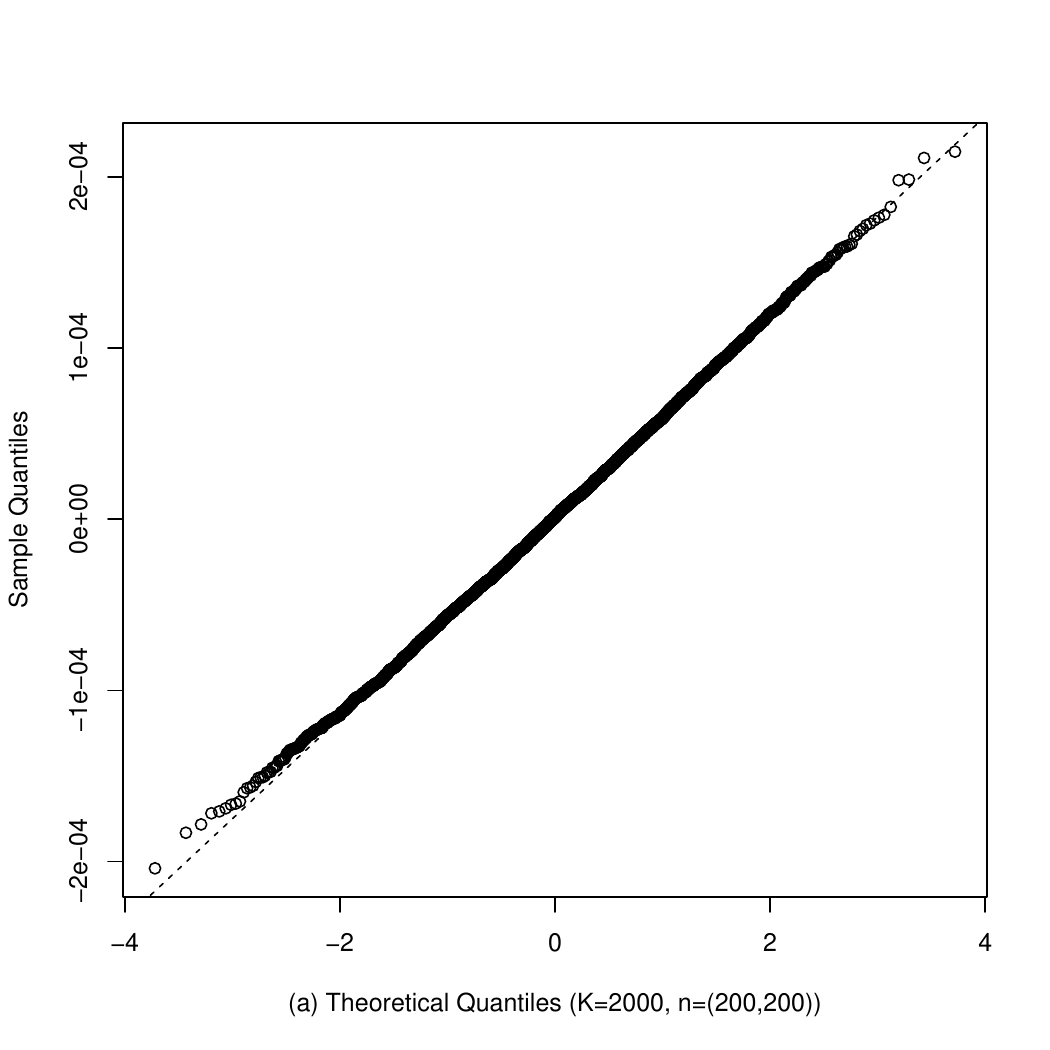}
}
\quad
\subfloat{%
\includegraphics[width=7.5cm,height=5.5cm]{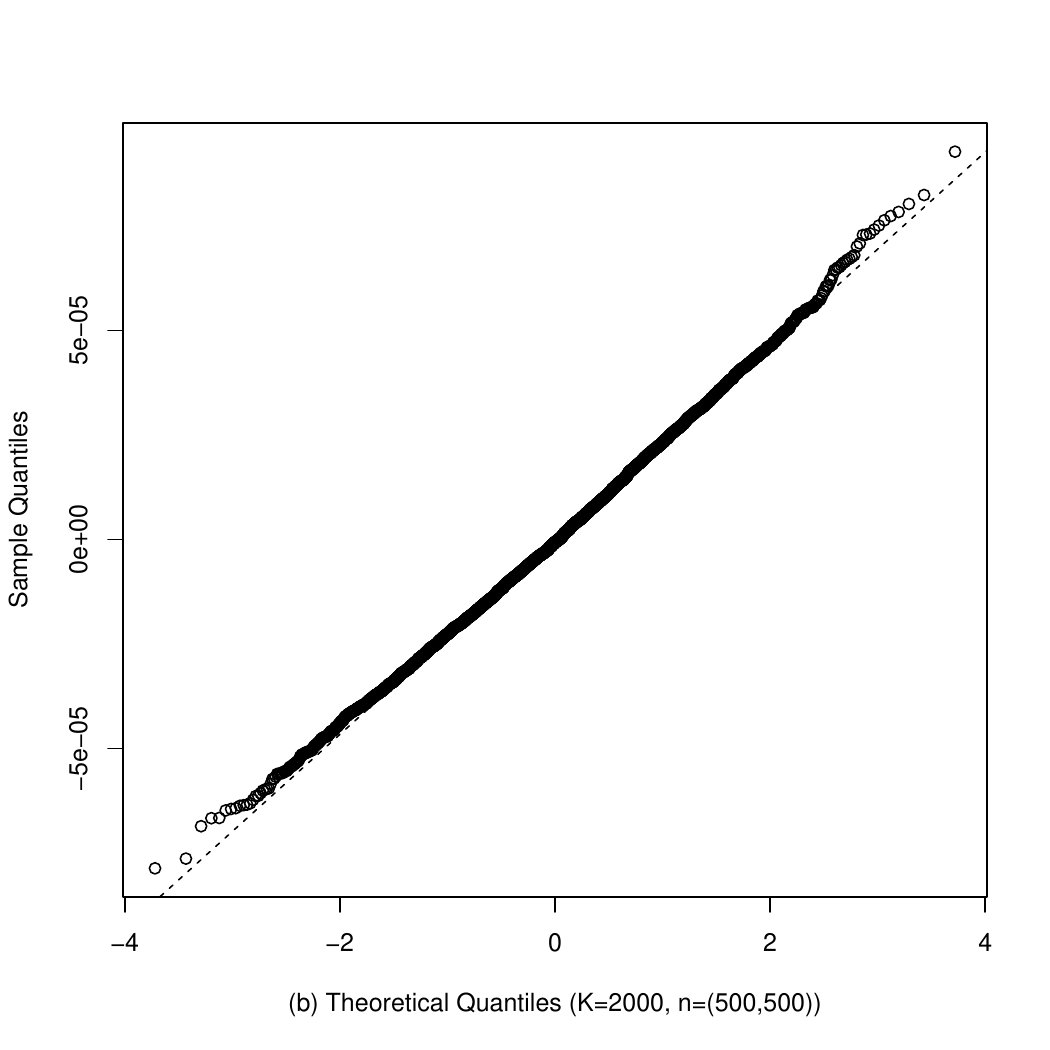}
}

\subfloat{%
\includegraphics[width=7.5cm,height=5.5cm]{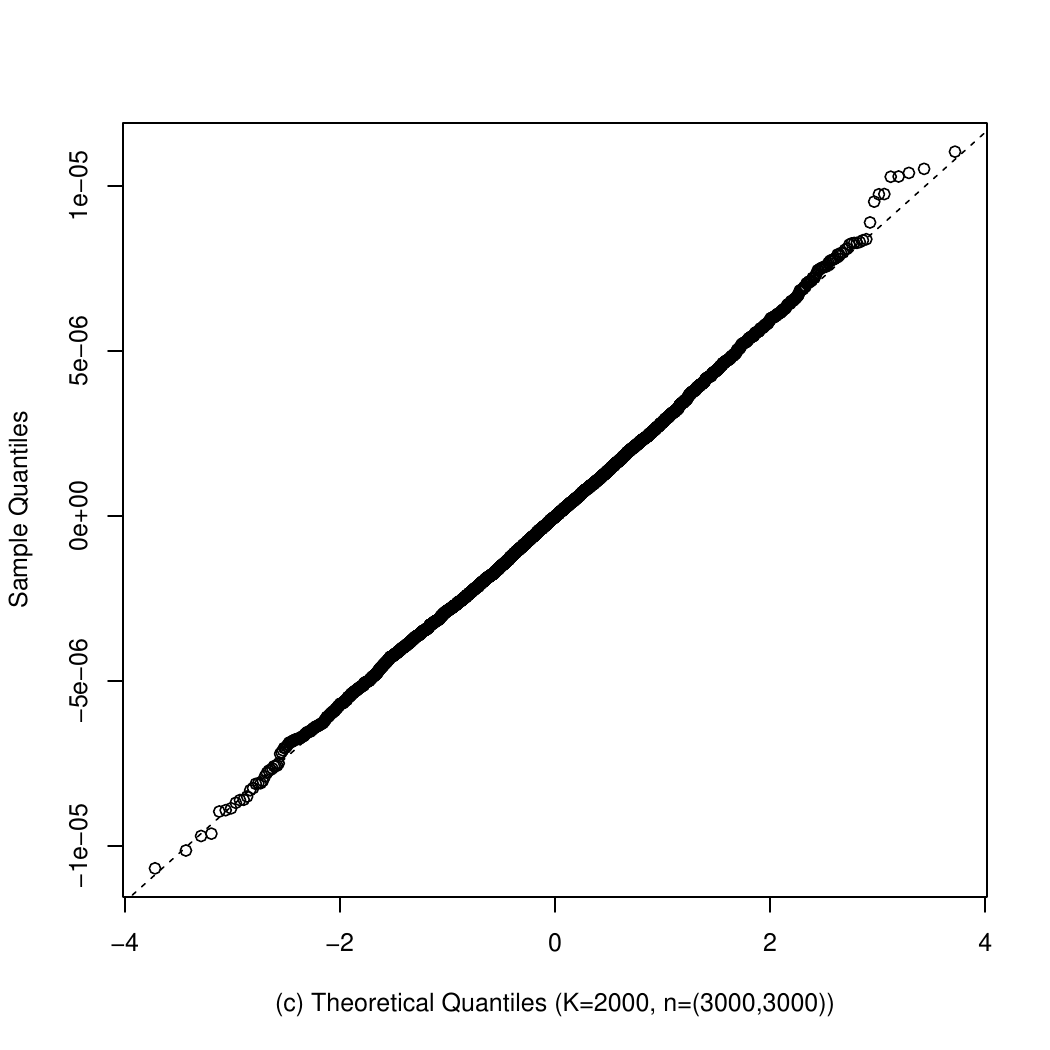}
}
\caption{Normal Q-Q plot of distribution of the test statistic under $H_0$, HKY model, $G=2$ and $K=2000$ (a) n=(200,200), (b) n=(500,500) and (c) n=(3000,3000).}
\label{fig:QQ-sampleHKYG2}
\end{figure}

\FloatBarrier

\begin{table}[htb]
\begin{center}
\caption{Shapiro-Wilk's test of normality of the test statistic under $H_0$.}
	\label{p_valor_JCG2}
  \begin{tabular}{c|c c c c c}
  \hline
	\hline
	& & \multicolumn{4}{c}{\textit{p}-value}\\
	\cline{3-6}\\
Sites          & \textbf{n} & JC & K & F & HKY\\					
	\hline
& $(200,200)$& 0.6739 & 0.6968 & 0.1642& 0.4838\\
K=1000& $(500,500)$ & 0.5463 & 0.306 & 0.0066 & 0.4793\\
                      &$(1500,1500)$&0.4207 & 0.5169 & 0.4526 &0.329\\
                      &$(3000,3000)$ &0.2064&0.1305 &  0.1001 & 0.7142 \\
                      								\hline
&$(200,200)$&0.3997 &0.3345& 0.1096 &0.7283\\
K=2000&$(500,500)$ & 0.0383 & 0.5632 & 0.0149 &0.1035\\
           &$(1500,1500)$ & 0.0405 & 0.0298 & 0.0425 &0.2197\\
                      &$(3000,3000)$&0.0252 & 0.1122 & 0.2525&0.8096\\
\hline
	\hline
  \end{tabular}
	\end{center}
\end{table}

\begin{figure}[htb]
\centering
\subfloat{%
\includegraphics[width=7.5cm,height=5.5cm]{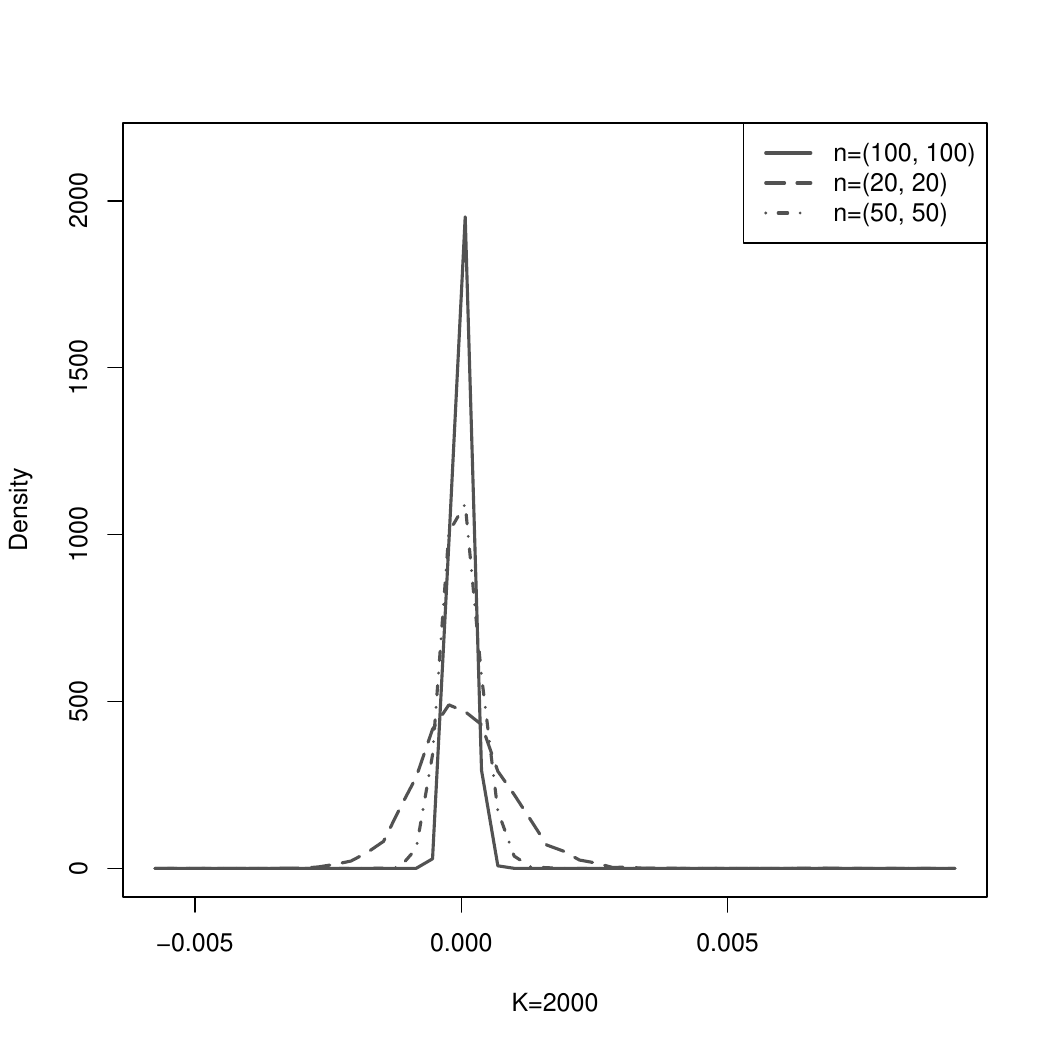}
}
\quad
\subfloat{%
\includegraphics[width=7.5cm,height=5.5cm]{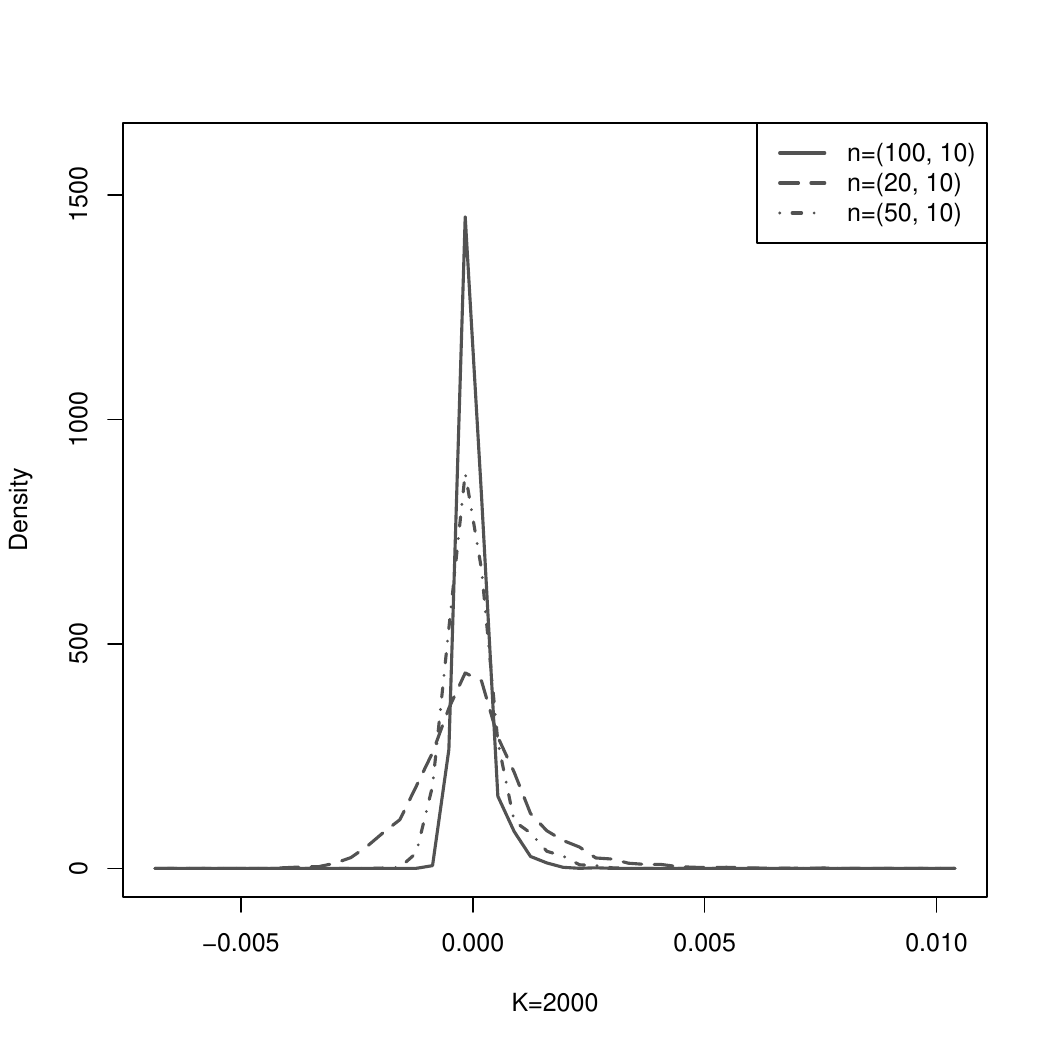}
}

\subfloat{%
\includegraphics[width=7.5cm,height=5.5cm]{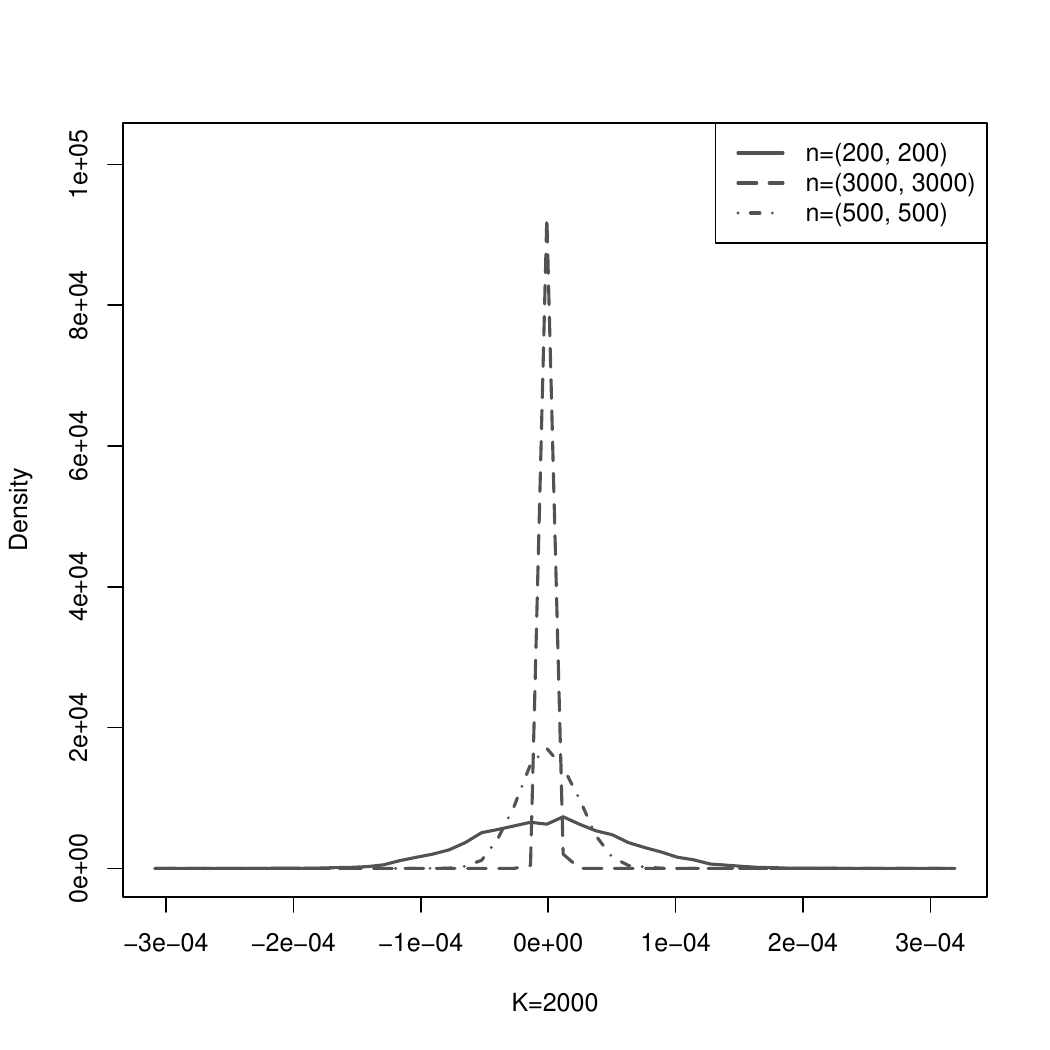}
}
\caption{Smoothed density of the test statistic under $H_0$, HKY model, $G=2$ and $K=2000$.}
\label{fig:Denisity-sampleHKYG2}
\end{figure}

\FloatBarrier

\section{Discussion}\label{sec5}

We present a test statistic for the comparison of DNA sequences under some of the most popular evolutionary processes available in the literature. Theoretical properties for the test statistic as well as its empirical performance by stochastic simulations are presented. 

The proposed test statistic is a generalized $U$-statistics built for tests of distributional homogeneity under the null hypothesis. We show that a dicothomous situation exists here. Under the null hypothesis, the $U$-statistics kernel is first-order degenerated, this test statistic falls in the quasi $U$-statistics class and follows an asymptotic normal law, albeit of higher order than the standard case. Under heterogeneity, the asymptotic normality is attained on the more usual first-order asymptotics. 

Two other issues are analyzed. First, the asymptotic normality is proven for the cases: high-dimension/large sample size,  high-dimension/small sample size, low-dimension/large sample size. The second issue regards the test power under local alternatives. The first/second order asymptotics according to the null/alternative hypotheses raises the issue of Pitman closeness and contiguity. The Pitmann class of local alternatives are discussed, and the contiguity of the test statistic for them is established.

The simulation studies help on the question of the asymptotic normality as a function of balanced/unbalanced samples, dimension and sample size. Large sample sizes (a few thousands) guarantees 
empirical asymptotic normality for any dimension. The test statistic empirical behavior may be asymmetric for smaller sample sizes if the dimension is not very high. That brings us to the cautionary measure of employing  resampling techniques to accurately calculating p-values.  

Summarizing our results, we present a test statistic which can be employed for ultra-high dimensional data such as the ones usually found in genetic studies. It is proven that homogeneity test statistics will be asymptotic normal under null/alternative hypotheses and even for local alternatives. The genetic sequences are supposed to follow the most commonly used stochastic evolutionary processes in the literature.

\end{document}